\documentclass[a4paper,UKenglish,autoref,cleverref,numberwithinsect,pdfa,thm-restate]{lipics-v2021}
\setcounter{tocdepth}{2}
\bibliographystyle{plainurl}%

\usepackage[utf8]{inputenc}
\RequirePackage[T1]{fontenc}
\usepackage{amssymb}
\usepackage{etoolbox}
\usepackage{ifmtarg}
\usepackage{tikz}
\usepackage{xspace, soul, enumerate}

\usepackage{algorithm}
\usepackage[noend]{algpseudocode}
\usepackage{mathtools}
\usepackage{hyperref}

\usepackage{comment}

\newif\ifcomments
\newif\ifchanges

\usepackage{xargs,leftidx}  %

\theoremstyle{definition}
\newtheorem{defi}[theorem]{Definition}

\newcommand{\myappendix}{appendix}

\newcommand{\mtext}[1]{\textsc{#1}}

\newcommand{\N}{\ensuremath{\mathbb{N}}}

\newcommand{\bigO}{\ensuremath{\mathcal{O}}}

\makeatletter 
\newcommand{\ut}[4]{
  \@ifmtarg{#4}{t^{#1}_{#2}(#3) }{t^{#1}_{#2}(#3; #4)}
}

\newcommand{\ite}[3]{
  \@ifmtarg{#1}{
    \mtext{ITE}
   }{
    \mtext{ITE}\text{$(#1,#2,#3)$}  
  }
}
\makeatother

\newcommand  {\myclass} [1]  {\ensuremath{\textsf{\upshape #1}}}

\newcommand{\StaClass}[1]{\myclass{#1}\xspace}

\newcommand{\DynClass}[1]{\myclass{Dyn#1}\xspace}

\newcommand  {\myproblem} [1] {\normalfont{\textsc{#1}}\xspace}

\newcommand{\CQ}[1][]{\StaClass{CQ}}
\newcommand{\UCQ}[1][]{\StaClass{UCQ}}
\newcommand{\CQneg}[1][]{\StaClass{CQ\ensuremath{^{\mneg}}}}
\newcommand{\UCQneg}[1][]{\StaClass{UCQ\ensuremath{^{\mneg}}}}

\newcommand{\mneg}{\neg} %

\newcommand{\DynFO}{\DynClass{FO}}

\newenvironment{proofsketch}{\begin{proof}[Proof sketch]}{\end{proof}}

\newenvironment{proofof}[1]{\begin{proof}[Proof (of #1).]}{\end{proof}}

\providecommand {\calA}      {{\mathcal A}\xspace}
\providecommand {\calB}      {{\mathcal B}\xspace}

\algnewcommand\algorithmiconchange{\textbf{on change}}
\algnewcommand\algorithmiconquery{\textbf{on query}}
\algnewcommand\algorithmicupdate{\textbf{update}}
\algnewcommand\algorithmicat{\textbf{at}}
\algnewcommand\algorithmicby{\textbf{by}}
\algnewcommand\algorithmicpardo{\textbf{pardo}}
\algnewcommand\algorithmicwhere{\textbf{where}}
\algnewcommand\algorithmicwith{\textbf{with}}

\algnewcommand\algorithmicunique{\textbf{unique}}
\algnewcommand\algorithmicmin{\textbf{min}}
\algnewcommand\algorithmicmax{\textbf{max}}

\algnewcommand{\False}{\textbf{false}}
\algnewcommand{\True}{\textbf{true}}
\algnewcommand{\To}{\textbf{to}}
\algnewcommand{\Select}[4]{\State \ensuremath{#1 \gets #2(#3 \mid #4)}}
\algnewcommand{\Unique}[3]{\Select{#1}{\algorithmicunique}{#2}{#3}}
\algnewcommand{\Min}[3]{\Select{#1}{\algorithmicmin}{#2}{#3}}
\algnewcommand{\Max}[3]{\Select{#1}{\algorithmicmax}{#2}{#3}}

\algblockdefx{OnChangeAtWhereBy}{EndOnChangeAtWhereBy}%
    [4]{\algorithmiconchange\ #1\ \algorithmicupdate\ #2 \algorithmicat\ #3, \algorithmicforall\ #4, \algorithmicby:}%
    {\algorithmicend\ \algorithmiconchange}
    
\algblockdefx{UpdateAtWhereBy}{EndUpdateAtWhereBy}%
    [3]{\algorithmicupdate\ #1 \algorithmicat\ #2, \algorithmicforall\ #3, \algorithmicby:}%
    {\algorithmicend\ \algorithmiconchange}    

    \algblockdefx{UpdateAtBy}{EndUpdateAtWhereBy}%
    [2]{\algorithmicupdate\ #1 \algorithmicat\ #2, \algorithmicby:}%
    {\algorithmicend\ \algorithmiconchange}    
    
\algblockdefx{OnQuery}{EndOnQuery}%
    [1]{\algorithmiconquery\ #1:}%
    {\algorithmicend\ \algorithmiconchange}    

\algblockdefx{OnChange}{EndOnChange}%
    [1]{\algorithmiconchange\ #1}%
    {\algorithmicend\ \algorithmiconchange}

    \algblockdefx{OnChangeWith}{EndOnChangeWith}%
    [2]{\algorithmiconchange\ #1 \algorithmicwith\ #2}%
    {\algorithmicend\ \algorithmiconchange}
 
\algblockdefx{With}{EndWith}%
    {\algorithmicwith}%
    {\algorithmicend\ \algorithmiconchange}    

\algblockdefx{AtWhereBy}{EndAtWhereBy}%
    [2]{\algorithmicat\ #1\ \algorithmicwhere\ #2\ \algorithmicby:}%
    {\algorithmicend\ \algorithmiconchange}

\algblockdefx{By}{EndBy}%
    [0]{\algorithmicby:}%
    {\algorithmicend\ \algorithmiconchange}

\algblockdefx{Parfor}{EndFor}%
    [1]{\algorithmicfor\ #1\ \algorithmicpardo}%
    {\algorithmicend\ \algorithmicfor}

\algblockdefx{Parforall}{EndFor}%
    [1]{\algorithmicforall\ #1\ \algorithmicpardo}%
    {\algorithmicend\ \algorithmicfor}

\makeatletter
\ifthenelse{\equal{\ALG@noend}{t}}%
    {
    		\algtext*{EndOnChangeAtWhereBy}
        \algtext*{EndOnChange}
        \algtext*{EndOnQuery}
        \algtext*{EndAtWhereBy}
        \algtext*{EndBy}
        \algtext*{EndFor}
        \algtext*{EndWith}
        \algtext*{EndOnChangeWith}
        \algtext*{EndUpdateAtWhereBy}
    }
    {}%
\makeatother

\newcommand{\neps}[1][]{\ensuremath{\bigO(n^{#1\epsilon})}\xspace}
\newcommand{\ntheta}[1][]{\ensuremath{\bigO(n^{#1\theta})}\xspace}

\newcommand{\blank}{\sqcup}

\providecommand{\nc}{\newcommand}

\newcommand{\child}{\ensuremath{\textsc{child}}\xspace}
\newcommand{\childindex}{\ensuremath{\textsc{child-index}}\xspace}
\newcommand{\nchildren}{\ensuremath{\#\textsc{children}}\xspace}
\newcommand{\nsiblings}{\ensuremath{\#\textsc{lsiblings}}\xspace}
\newcommand{\parent}{\ensuremath{\textsc{parent}}\xspace}
\newcommand{\lsibling}{\ensuremath{\textsc{left-sibling}}\xspace}
\newcommand{\rsibling}{\ensuremath{\textsc{right-sibling}}\xspace}
\newcommand{\fchild}{\ensuremath{\textsc{first-child}}\xspace}
\newcommand{\lchild}{\ensuremath{\textsc{last-child}}\xspace}
\newcommand{\anc}{\ensuremath{\textsc{anc}}\xspace}
\newcommand{\ancself}{\ensuremath{\textsc{anc-or-self}}\xspace}
\newcommand{\ancindex}{\ensuremath{\textsc{anc-index}}\xspace}
\newcommand{\ancchild}{\ensuremath{\textsc{anc-child}}\xspace}

\newcommand{\zsize}{\ensuremath{\textsc{zsize}}\xspace}
\newcommand{\ntops}{\ensuremath{\#\textsc{tops}}\xspace}
\newcommand{\nzleft}{\ensuremath{\#\textsc{zleft}}\xspace}
\newcommand{\nzdesc}{\ensuremath{\#\textsc{zdesc}}\xspace}
\newcommand{\numdesc}{\ensuremath{\#\textsc{desc}}\xspace}

\newcommand{\evaluatestate}{\ensuremath{\textsc{evaluate-state}}}
\newcommand{\evaluatesequence}{\ensuremath{\textsc{evaluate-sequence}}}
\newcommand{\evaluatepath}{\ensuremath{\textsc{evaluate-path}}}

\newcommand{\lca}{\ensuremath{\textsc{lca}}}

\newcommand{\myoperation}[1]{\normalfont{\texttt{#1}}\xspace}

\newcommand{\zleft}{\ensuremath{\textsc{left}}}
\newcommand{\zright}{\ensuremath{\textsc{right}}}
\newcommand{\zlower}{\ensuremath{\textsc{lower}}}
\newcommand{\zupper}{\ensuremath{\textsc{upper}}}

   \newcommand{\status}{\textnormal{status}}

\nc{\Init}{\myoperation{Init}}
\nc{\Query}{\myoperation{Query}}
\nc{\Relabel}{\myoperation{Relabel}}
\nc{\AddChild}{\myoperation{AddChild}}
\nc{\DeleteNode}{\myoperation{DeleteNode}}
\nc{\InsertPositionBefore}{\myoperation{InsertPositionBefore}}
\nc{\InsertPositionAfter}{\myoperation{InsertPositionAfter}}
\nc{\DeletePosition}{\myoperation{DeletePosition}}

\newcommand{\RegTree}{\myproblem{RegTree}}
\newcommand{\RegTreeM}{\myproblem{RegTree\ensuremath{^-}}}
\newcommand{\CFL}{\myproblem{CFL}}
\newcommand{\VPL}{\myproblem{VPL}}

\newcommandx{\tree}[3][1,3]{\ensuremath{\leftidx{^{#1}}{t^{#2}_{#3}}}}

   \newcommand{\run}{\textnormal{run}}

\newcommand{\dstate}{\ensuremath{\delta_{\text{state}}}\xspace}
\newcommand{\dstack}{\ensuremath{\delta_{\text{stack}}}\xspace}
\newcommand{\dempty}{\ensuremath{\delta_{\text{empty}}}\xspace}
\newcommand{\dall}{\ensuremath{\hat{\delta}}\xspace}
   \newcommand{\pushpos}{\ensuremath{\textnormal{push-pos}}\xspace}
  \newcommand{\poppos}{\ensuremath{\textnormal{pop-pos}}\xspace}
 \newcommand{\popposi}{\ensuremath{\textnormal{pop-pos}^{(1)}}\xspace}
 \newcommand{\popposii}{\ensuremath{\textnormal{pop-pos}^{(2)}}\xspace}
  \newcommand{\vpopstate}{\ensuremath{\textnormal{v-pop-state}}\xspace}
 \newcommand{\topk}[1][k]{\ensuremath{\textnormal{top}_{#1}}\xspace}

\ifcomments
\newcommand{\commentbox}[1]{\noindent\framebox{\parbox{0.98\linewidth}{#1}}}

\setlength{\marginparwidth}{2.5cm}
\setlength{\marginparsep}{3pt}

\newcommand{\acomment}[2]{\ \\ \fbox{\parbox{0.98\linewidth}{{\sc #1}: #2}}}
\newcommand{\mcomment}[2]{{\color{blue}(#1)}\footnote{#1: #2}} %
\else
\newcommand{\commentbox}[1]{}
\newcommand{\mcomment}[2]{}
\newcommand{\acomment}[2]{}
\fi

\ifchanges

\setul{}{0.2mm}
\setstcolor{red}

\else

\fi

 \newcommand{\tsm}[1]{\mcomment{TS}{#1}}

\newcommand{\longversion}[1]{}

\title{On the work of dynamic constant-time parallel algorithms for regular tree
languages and context-free languages}
\titlerunning{Dynamic  parallel algorithms for regular tree
languages and context-free languages}

\author{Jonas Schmidt}{TU Dortmund University, Germany}{jonas2.schmidt@tu-dortmund.de}{}{}
\author{Thomas Schwentick}{TU Dortmund University, Germany}{thomas.schwentick@tu-dortmund.de}{}{}
\author{Jennifer Todtenhoefer}{TU Dortmund University, Germany}{jennifer.todtenhoefer@tu-dortmund.de}{}{}
\authorrunning{J. Schmidt, T. Schwentick and J. Todtenhoefer}
\Copyright{Jonas Schmidt, Thomas Schwentick, and Jennifer Todtenhoefer}
\ccsdesc{Theory of computation~Formal languages and automata theory}
\ccsdesc{Theory of computation~Parallel algorithms}

\EventEditors{J\'{e}r\^{o}me Leroux, Sylvain Lombardy, and David Peleg}
\EventNoEds{3}
\EventLongTitle{48th International Symposium on Mathematical Foundations of Computer Science (MFCS 2023)}
\EventShortTitle{MFCS 2023}
\EventAcronym{MFCS}
\EventYear{2023}
\EventDate{August 28 to September 1, 2023}
\EventLocation{Bordeaux, France}
\EventLogo{}
\SeriesVolume{272}
\ArticleNo{83}

\relatedversion{This is the full version of a paper to be presented at MFCS 2023.}

\begin{document}
\maketitle              %
\begin{abstract}
    Previous work on Dynamic Complexity has established that there exist dynamic constant-time parallel algorithms for regular tree
    languages and context-free languages under label or symbol changes. However, these algorithms were
    not developed with the goal to minimise work (or,
    equivalently, the number of processors). In fact, their inspection
    yields the work bounds $\bigO(n^2)$ and $\bigO(n^7)$ per change operation,  respectively.

    In this paper, dynamic algorithms for regular tree languages are
    proposed that generalise the previous algorithms in that they allow
    unbounded node rank and leaf insertions, while improving
    the work bound from   $\bigO(n^2)$ to $\bigO(n^\epsilon)$, for
    arbitrary $\epsilon>0$.

    For context-free languages, algorithms with better work bounds
    (compared with  $\bigO(n^7)$) for restricted classes are
    proposed:  for every $\epsilon>0$ there are such algorithms for deterministic context-free languages with
    work bound $\bigO(n^{3+\epsilon})$ and for visibly pushdown
    languages with work bound  $\bigO(n^{2+\epsilon})$.
    \keywords{Dynamic complexity \and work \and parallel constant time.}
\end{abstract}

\section{Introduction}\label{section:introduction}
It has been known for many years that regular and context-free string languages and
regular tree languages are maintainable under symbol changes by means
of dynamic algorithms that are specified by formulas of first-order
logic, that is, in the dynamic class \DynFO
\cite{PatnaikI97,GeladeMS12}. It is also well-known that such
specifications can be turned into parallel algorithms for the CRCW
PRAM model that require only constant time \cite{Immerman12} and
polynomially many processors.

However, an ``automatic'' translation of a ``dynamic program'' of the
$\DynFO$ setting usually yields a
parallel algorithm with large work,  i.e., overall
number of operations performed by all processors.\footnote{We note that in
  the context of constant-time parallel algorithms
 work is within a constant factor of the
 number of processors.}
In the case of regular languages, the dynamic program sketched  in
\cite{PatnaikI97} has a polynomial work bound, in which the exponent
of the polynomial
depends on the number of states of a DFA for the language at hand. 
The dynamic program given in \cite{GeladeMS12} has quadratic work.

Only recently a line of research has started that tries to determine, how efficient such
constant-time dynamic
algorithms can be made with respect to their work. 
It turned out that regular languages can be maintained
with work \neps, for every $\epsilon>0$ \cite{SchmidtSTVZ21}, even
under polylogarithmic numbers of changes \cite{TschirbsVZ23}, and even with logarithmic
work for star-free languages under single changes
\cite{SchmidtSTVZ21} and  polylogarithmic
work under  polylogarithmic changes \cite{TschirbsVZ23}.

For context-free languages the situation is much less clear. The
dynamic algorithms resulting from \cite{GeladeMS12} have an
$\bigO(n^7)$ upper work bound. In  \cite{SchmidtSTVZ21} it was shown
that the Dyck-1 language, i.e., the set of well-bracketed strings with
one bracket type, can be maintained with work $\bigO((\log n)^3)$ and
that Dyck-$k$ languages can be maintained with work
$\bigO(n\log n)$. Here, the factor $n$
is due to the problem to test equality of two substrings of a string.

Most of these results also hold for the query that asks
for membership of a substring in the given language.
For  Dyck languages the upper bounds for substring queries are worse
than the bounds for membership queries: for every
$\epsilon>0$ there exist algorithms for  Dyck-1 and Dyck-$k$ languages
with work bounds \neps and $\neps[1+]$, respectively.

It was also shown in \cite{SchmidtSTVZ21} that there is some context-free language that
can be maintained in constant time with work
$\bigO(n^{\omega-1-\epsilon})$, for any $\epsilon>0$, only if the
$k$-Clique conjecture \cite{AbboudBackurs+2018} fails. Here, $\omega$
is the matrix multiplication exponent, which is known to be smaller
than $2.373$ and conjectured by some to be exactly two according to \cite{Williams2012}.\\

In this paper, we pursue two natural research directions.

\subparagraph*{Regular tree languages.} We first  extend the results on regular string languages to
  regular tree languages. On one hand, this requires to adapt techniques from
  strings to trees. On the other hand, trees offer additional
  types of change operations beyond label changes that might change the structure of the
  tree. More concretely, besides label changes we study insertions of
  new leaves and show that the favourable bounds of
  \cite{SchmidtSTVZ21} for regular string languages still hold. This
  is  the main contribution of this paper. Our algorithms rely on a hierarchical partition of the tree
of constant depth. The main technical challenge is to maintain such a
partition hierarchy under insertion\footnote{For simplicity, we only
  consider insertions of leaves, but deletions can be handled in a
  straightforward manner, as discussed in \autoref{section:preliminaries}.} of
leaves.
\subparagraph*{Subclasses of context-free languages.}  We tried to improve on the $\bigO(n^7)$ upper work bound for
  context-free languages, but did not succeed yet. The other goal of
  this paper is thus to find better bounds for important subclasses of
  the context-free languages: deterministic context-free languages and
  visibly pushdown languages.  We show that, for each $\epsilon>0$,
there are constant-time dynamic algorithms with work
$\bigO(n^{3+\epsilon})$ for deterministic context-free languages and
$\bigO(n^{2+\epsilon})$ for visibly pushdown languages. Here, the main
challenge is to carefully apply the technique from \cite{SchmidtSTVZ21} that
allows to store information for only \neps as opposed to $n$ different
values for some parameters. For more restricted change operations, the
algorithm for regular tree languages yields an  \neps work algorithm
for visibly pushdown languages.

\subparagraph*{Structure of the paper}
We explain the framework in \autoref{section:preliminaries}, and present the results on
regular tree languages and context-free string languages in
Sections~\ref{section:treelanguages} and \ref{section:contextfree},
respectively.  Almost all proofs are delegated to the \myappendix.

\subparagraph*{Related work}
In \cite{TschirbsVZ23}, parallel dynamic algorithms for regular string
languages under bulk changes were studied. It was shown that
membership in a regular language can be maintained, for every $\epsilon>0$, in constant time
with work \neps, even if a polylogarithmic number of changes can be
applied in one change operation. If the language is star-free,
polylogarithmic work suffices. The paper also shows that for regular
languages that are not star-free, polylogarithmic work does \emph{not}
suffice.

Maintaining regular languages of trees under label changes has also
been studied in the context of  enumeration algorithms (for
non-Boolean queries) \cite{AmarilliBM18}. The dynamic parallel
algorithms of  \cite{SchmidtSTVZ21}  partially rely on dynamic
sequential algorithms, especially \cite{FrandsenMiltersen+1997}.

\subparagraph*{Acknowledgements.} We are grateful to Jens Keppeler and
Christopher Spinrath for
careful proof reading.

\section{Preliminaries}\label{section:preliminaries}
\subparagraph*{Trees and regular tree languages.}
We consider ordered, unranked trees $t$, which we represent 
as tuples $(V,r,c,\text{label})$, where $V$ is a finite set of nodes,
$r\in V$ is the root, 
$c: V \times \N \to V$ is a function, such that $c(u,i)$ yields the
$i$-th child of $u$,  and $\text{label}: V \to \Sigma$ is a function that assigns a label to every node. 

We denote the set of unranked trees over an alphabet $\Sigma$ as $T(\Sigma)$. 
The terms \emph{subtree}, \emph{subforest}, \emph{sibling}, \emph{ancestor}, \emph{descendant}, \emph{depth} and $\emph{height}$ of nodes are defined as usual. A node that has no child is called a $\textit{leaf}$. A \emph{forest} is a sequence of trees.

Let $\preceq$ denote the order on siblings, i.e., $u\prec v$ denotes
that $u$ is a sibling to the left of $v$. We write $u\preceq v$ if
$u\prec v$ or $u=v$ holds.

By $\tree{v}$ we denote the subtree of $t$ induced by node $v$. For
sibling nodes $u\prec v$, we  write $\tree[u]{v}$ for the subforest of the tree $t$, induced by the sequence $u,\ldots,v$. 
If $w$ is a node in
$\tree{v}$, then $\tree{v}[w]$ denotes the subtree consisting
of $\tree{v}$ without $\tree{w}$. Analogously, for $\tree[u]{v}[w]$.

Our definition of tree automata is inspired from hedge automata in the
TaTa book \cite{Comon08}, slightly adapted for our needs.
 \begin{definition}
 	A \emph{deterministic finite (bottom-up) tree automaton (DTA)} over an alphabet $\Sigma$ is a tuple $\calB=(Q_\mathcal{B},\Sigma,Q_f,\delta,\calA)$ where $Q_\mathcal{B}$ is a finite set of states, $Q_f\subseteq Q_\mathcal{B}$ is a set of final states, $\calA=(Q_\calA,Q_\calB,\delta_\calA,s)$ is a DFA over alphabet $Q_\mathcal{B}$ (without final states) and $\delta: Q_\calA\times \Sigma \to Q_\mathcal{B}$ maps pairs $(p,\sigma)$, where $p$ is a state of $\calA$ and $\sigma\in\Sigma$, to states of $\calB$.  
      \end{definition}
      We refer to states from $Q_\mathcal{B}$ as \emph{$\calB$-states}
      and typically denote them by the letter $q$. Likewise states
      from $Q_\mathcal{A}$ are called \emph{$\calA$-states} and
      denoted by $p$. We note that we do not need a set of accepting states for $\calA$, since its final states are fed into $\delta$.  
      
      The semantics of DTAs is defined as follows. 
      
   For each tree $t\in T(\Sigma)$, there is a unique \emph{run} of $\calB$ on $t$, that is, a unary function $\rho_t$ that assigns a $\calB$-state to each node in $V$. It can be defined in a bottom-up fashion, as follows. 
      For each node $v\in V$ with label $\sigma$ and  children
      $u_1,\ldots,u_\ell$,  $\rho_t(v)$ is the $\calB$-state
      $\delta(\delta^*_\calA(s,\rho_t(u_1)\cdots
      \rho_t(u_\ell)),\sigma)$.  
      That is, the state of a node $v$ with label $\sigma$ is
      determined by $\delta(p,\sigma)$, where $p$ is the final $\calA$-state
      that $\calA$ assumes when reading the sequence of states of
      $v$'s children, starting from the initial state $s$. In
      particular, if $v$ is a leaf with label $\sigma$, its $\calB$-state is $\delta(s,\sigma)$.

 A tree $t$ is accepted by the DTA $\calB$ if $\rho_t(r)\in Q_f$ holds for the root $r$ of $t$. We denote the language of all trees accepted by $\calB$ as $L(\calB)$. We call the languages decided by DTAs \emph{regular}. 

 \subparagraph*{Strings and context-free languages.}

 Strings $w$ are finite sequences of symbols from an alphabet $\Sigma$. By $w[i]$ we denote the $i$-th symbol of $w$ and  by $w[i,j]$ we denote the substring from position $i$ to $j$.  We denote the empty string by $\lambda$, since $\epsilon$
    has a different purpose in this paper. We use standard notation
    for context-free languages and pushdown automata, to be found in
    the \myappendix.

 \subparagraph*{Dynamic algorithmic problems.}

In this paper, we view a dynamic (algorithmic) problem basically as
the interface of a data type: that is, there is a collection of
operations by which some object can be initialised, changed, and
queried. A \emph{dynamic algorithm} is then a collection of
algorithms, one for each operation. We consider two main dynamic
problems in this paper,  for regular tree languages and context-free languages.

For each regular tree language $L$, the algorithmic problem $\RegTree(L)$ maintains a labelled tree $T$ and
has the following 
operations.
\begin{itemize}
\item $\Init(T,r,\sigma)$  yields an initial labelled tree object $T$ and returns in $r$ a node id for its root, which is labelled by $\sigma$;
\item $\Relabel(T,u,\sigma)$ changes the label of node $u$ in $T$ into $\sigma$;
\item $\AddChild(T,u,v,\sigma)$ adds a new child with label $\sigma$ behind the last child of node $u$ and returns its id in $v$;
\item $\Query(T,v)$ returns true if and only if the subtree of $T$ rooted at $v$ is in $L$.
\end{itemize}
We refer to the restricted problem without the operation $\AddChild$
as \RegTreeM. For this data type, we assume that the computation starts
from an initial non-trivial
tree and that the auxiliary data for that tree is given initially.

For each context-free language $L$, the algorithmic problem $\CFL(L)$ maintains a string $w$ and
has the following 
operations.
\begin{itemize}
\item $\Init(w)$  yields an initial string object $w$ with an empty string;
\item $\Relabel(w,i,\sigma)$ changes the label at position $i$ of $w$ into $\sigma$;
\item $\InsertPositionBefore(w,i,\sigma)$ and $\InsertPositionAfter(w,i,\sigma)$ insert a new position with symbol $\sigma$ before or after the current position $i$, respectively;
\item $\Query(w,i,j)$ returns true if and only if the substring $w[i,j]$ is in $L$.
\end{itemize}

Readers may wonder, why these dynamic problems do not have operations that delete nodes of a tree or positions in a string. This is partially to keep the setting simple and partially because node labels and symbols offer easy ways to simulate deletion by extending the alphabet with a symbol $\blank$ that indicates an object that should be ignored. E.g., if $\delta_\calA(p,\blank)=p$, for every state $p$ of the horizontal DFA of a DTA, then the label $\blank$ at a node $u$ effectively deletes the whole subtree induced by $u$ for the purpose of membership in $L(\calB)$. Similarly, a CFL might have a neutral symbol or even  a pair $(_\blank,)_\blank$ of ``erasing'' brackets that make the PDA ignore the substring between  $(_\blank$ and $)_\blank$.

For $\RegTree(L)$ and $\CFL(L)$, the $\Init$
operation is possible in
constant sequential time and will not be considered in detail.

Throughout this paper, $n$ will denote an upper bound of the size of
the structure at hand (number of nodes of a tree or positions of a
string) that is linear in that size, but changes only infrequently. More precisely, the number
of nodes of a tree or the length of the string will always be between
$\frac{1}{4}n$ and $n$. Whenever the size of the structure grows
beyond $\frac{1}{2}n$, the data structure will be prepared for
structures of size up to $2n$ and, once this is done, $n$ will be
doubled. Since the size of the structure is always $\theta(n)$ all
bounds in $n$ also hold with respect to the size of the structure.

\subparagraph*{Parallel Random Access Machines (PRAMs).}
A \emph{parallel random access machine} (PRAM) consists of a number of
processors that work in parallel and use a shared
memory. The memory
is comprised of memory cells which can be accessed by a processor in
$\bigO(1)$ time.
Furthermore, we assume that simple arithmetic and bitwise
operations, including addition, can be done in $\bigO(1)$ time by a processor.
We mostly use
the Concurrent-Read Concurrent-Write model (CRCW PRAM),
i.e. processors are allowed to read and write concurrently from and to
the same memory location. More precisely, we assume the
\emph{common} PRAM model: several processors can concurrently write into the same memory location,  only if
  all of them write the same value. We also mention the
  Exclusive-Read Exclusive-Write model (EREW PRAM), where concurrent
  access is not allowed. 
The work  of a
PRAM computation is the sum of the number of all computation steps of
all processors made during the computation.
We define the space $s$ required by a PRAM computation as the maximal index of any memory cell accessed during the computation.
We refer to \cite{DBLP:books/aw/JaJa92} for more details on PRAMs and
to \cite[Section 2.2.3]{DBLP:books/el/leeuwen90/Boas90} for a
discussion of alternative space measures.

The main feature of the common CRCW model 
  relevant for our algorithms that separates it from the EREW model is that it allows to compute the minimum
  or maximum value of an array of size $n$ in constant time (with work $\bigO(n^{1+\epsilon})$) which is shown in another paper at MFCS 2023.\footnote{Jonas Schmidt, Thomas Schwentick. Dynamic constant time parallel graph algorithms with sub-linear work.}

For simplicity, we assume that even if the size bound $n$ grows,
a number in the range $[0,n]$ can still be stored in one memory
cell. This assumption is justified, since addition of larger numbers $N$
can still be done in constant time and polylogarithmic work on a CRCW
PRAM.
Additionally, we assume that the number of processors  always depends
on the current size bound $n$. Hence, the number of processors
increases with growing $n$ which allows us to use the PRAM model
with growing structures.

We describe our PRAM algorithms on an abstract level and do not
exactly specify how processors are assigned to data. Whenever an
algorithm does something in parallel for a set of objects, these
objects can be assigned to a bunch of processors with the help of some
underlying array. This is relatively straightforward for strings and
substrings and the data structures used in
\autoref{section:contextfree}. In \autoref{section:treelanguages}, it
is usually based on zone records and their underlying partition records.

\section{Maintaining regular tree languages}\label{section:treelanguages}
In this section, we present our results on maintaining regular tree languages under various change operations.
We will first consider only operations that change node labels, but do not change the shape of the given tree.  A very simple dynamic algorithm with work $\bigO(n^2)$ is presented in the \myappendix. We sketch its main idea and how it can be improved to $\neps$ work per change operation by using a \emph{partition hierarchy} in \autoref{subsection:labelchanges-efficient}. These algorithms even work on the EREW PRAM model.

Afterwards, in \autoref{subsection:structuralchanges}, we also consider an operation that can change the tree structure: adding a leaf to the tree. Here, the challenge is to maintain the hierarchical structure that we used before to achieve work $\neps$ per change operation. It turns out that maintaining this structure is possible without a significant increase of work, that is, maintaining membership under these additional operations is still possible with work $\neps$ per change operation.

\subsection{Label changes: a work-efficient dynamic program}\label{subsection:labelchanges-efficient}

In this section, we describe how membership in a regular tree language
can be maintained under label changes, in a work
efficient way.
             \begin{proposition}\label{prop-label-changes}
                    For each $\epsilon>0$ and each regular tree language $L$,    there is a parallel constant time dynamic algorithm
                for $\RegTreeM(L)$  with work $\bigO(n^\epsilon)$ on an EREW PRAM. The $\Query$ operation can actually be answered with constant work.
              \end{proposition}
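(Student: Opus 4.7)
The plan is to lift the $k$-level hierarchical block decomposition of \cite{SchmidtSTVZ21} from strings to trees. Fix a DTA $\calB = (Q_\calB, \Sigma, Q_f, \delta, \calA)$ deciding $L$ and set $k = \lceil 1/\epsilon \rceil$. Since $\RegTreeM$ forbids structural changes, the tree shape is known at $\Init$, so the hierarchy is built once and reused. I will partition the tree into a $k$-level hierarchy of \emph{zones}, where a zone is a connected region with a distinguished top node and attached holes where sub-zones are rooted. A level-$i$ zone will have $\bigO(n^{i/k})$ nodes and be the disjoint union of $\bigO(n^{1/k})$ level-$(i-1)$ sub-zones, with the level-$k$ zone being the whole tree. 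I will construct this by iterated balanced tree-separators. Additionally, at every node of large out-degree, I will install a horizontal partition hierarchy on the sequence of its children's $\calB$-states, exactly as in the string algorithm of \cite{SchmidtSTVZ21}, so that the horizontal DFA $\calA$ can be re-evaluated at that node in $\bigO(n^{1/k})$ work after a single child-state change.

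For each zone I would store its top's current $\calB$-state and a summary mapping sub-zone top states to the zone's top state; for every tree node $v$ I would additionally store $\rho_t(v)$ explicitly, so that $\Query(v)$ is a single lookup against $Q_f$, yielding the claimed constant-work query bound. On $\Relabel(u,\sigma)$ I would first recompute $\rho_t(u)$ from $\delta$ and the horizontal DFA state at $u$'s children, then walk upward through the zone hierarchy: at each of the $k$ levels, the zones meeting the ancestor path of $u$ have their summaries, as well as the stored $\rho_t$-values of the affected internal nodes, refreshed in parallel. A single zone's refresh amounts to re-running $\calA$ at the $\bigO(n^{1/k})$ affected nodes using the horizontal sub-hierarchies, in $\bigO(n^{1/k})$ work and constant depth. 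Summing across the $k$ levels gives total work $\bigO(k\cdot n^{1/k}) = \bigO(n^\epsilon)$ per relabel. The zones at a given level are disjoint and the horizontal sub-hierarchies are node-local, so no two processors access the same memory cell and the EREW model suffices.

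The main obstacle will be setting up the partition hierarchy so that two balance conditions hold simultaneously: (i) the $\bigO(n^{1/k})$ per-zone refresh budget covers all zone-internal nodes whose $\rho_t$-value can change after a single relabel, and (ii) the collection of zones per level that meet the ancestor path of the changed node is small enough that the total cross-level work stays within $\bigO(n^\epsilon)$. Property (i) is essentially a tree analog of the block-refresh argument of \cite{SchmidtSTVZ21} and rests on the horizontal sub-hierarchies correctly handling high-degree nodes; property (ii) is purely tree-structural and should follow from a careful recursive balanced-separator decomposition that ensures each level-$i$ zone is ``centred'' around a separator rather than being a plain rooted subtree, so that a single root-to-leaf path interacts with only a bounded number of zones per level. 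Once both properties are in place, the rest of the construction is a direct translation of the string machinery to trees.
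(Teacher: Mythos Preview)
Your zone hierarchy is essentially the paper's partition hierarchy, so the decomposition skeleton is right; the gap is in what you store and how you update it.

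You propose to keep $\rho_t(v)$ explicitly at every node and to refresh these values along the ancestor path of the relabelled node. On a path-shaped tree of $n$ nodes, a relabel at the leaf changes $\rho_t$ at all $n$ ancestors, so no $\bigO(n^\epsilon)$-work procedure can keep every stored $\rho_t$ value current. The same example refutes your property~(ii): at level~$1$ the path tree must be cut into $\Theta(n^{1-1/k})$ zones of size $\bigO(n^{1/k})$, and the unique root-to-leaf path passes through every one of them --- no ``centred separator'' choice can make the number of level-$1$ zones met by a root-to-leaf path bounded by a constant.

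The paper avoids both problems by storing \emph{transition functions} rather than absolute states. For each zone $S$ it records $\calB_t(q,u,v)$ --- the state reached at $v$ if the state at the descendant $u$ is stipulated to be $q$ --- and analogously $\calA_t(p,u,v)$ for sibling intervals, but only for $\bigO(n^{2\theta})$ ``special'' pairs $(u,v)$ determined by the sub-zone boundaries of $S$. The crucial point is that these transition values depend only on labels of nodes \emph{inside} $S$; hence after a relabel at $x$ only the $h$ zones that contain $x$ (exactly one per level, since the zones at each level partition the tree) need their special-pair values refreshed, giving $\bigO(h\cdot n^{2\theta})=\bigO(n^\epsilon)$ total work with $\theta=\epsilon/2$. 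For $\Query(v)$, the value $\rho_t(v)$ is not looked up but \emph{computed} in constant sequential time by composing at most $h$ stored transitions along a chain of lower-boundary nodes of the zones containing $v$ (\autoref{lemma-compute-from-stored}). That composition, not a table entry, is what delivers constant-work queries without maintaining $\rho_t$ everywhere. Your ``summary mapping sub-zone top states to the zone's top state'' is heading toward this idea, but it needs to be parameterised by the hole state (so it becomes a state-to-state function independent of anything outside the zone) and the explicit $\rho_t$ table must be dropped.
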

          
 We start by briefly sketching the $\bigO(n^2)$ work algorithm that is given in the \myappendix.
 The algorithm basically combines the dynamic programs for regular string languages and binary regular tree languages from \cite{GeladeMS12}.
 For regular string languages, the program from \cite{GeladeMS12} stores the behaviour of a DFA for the input word $w$ by maintaining information of the form "if the run of the DFA starts at position $i$ of $w$ and state $p$, then it reaches state $q$ at position $j$" for all states $p,q$ and substrings $w[i,j]$. After a label change at a position $\ell$, this information can be constructed by combining the behaviour of the DFA on the intervals $w[i,\ell-1]$ and $w[\ell+1, j]$ with the transitions induced by the new label at position $\ell$.
 
 The dynamic program for (binary) regular tree languages from \cite{GeladeMS12} follows a similar idea and stores the behaviour of a (binary) bottom-up tree automaton by maintaining information of the form "if $v$ gets assigned state $q$, then $u$ gets assigned state $p$ by the tree automaton" for all states $p,q$ and all nodes $v,u$, where $v$ is a descendant of $u$.
 
 Both programs induce algorithms with $\bigO(n^2)$ work bounds. Towards a $\bigO(n^2)$ work algorithm for  unranked tree languages, the two dynamic programs can be combined into an algorithm that mainly stores the following $\emph{automata functions}$ for  a fixed DTA $\mathcal{B}=(Q_\calB,\Sigma, Q_f, \delta,
\calA)$ for $L$, with  DFA $\calA=(Q_\calA,Q_\calB,\delta_\calA,s)$:
\begin{itemize}
	\item  The ternary function $\calB_t: Q_\calB \times V\times V\mapsto Q_\calB$ maps
	each triple $(q,u,v)$ of a state $q \in Q_\calB$ and nodes of $t$, where $u$ is a proper ancestor of
	$v$, to the state that the run of $\calB$ on $\tree{u}[v]$  takes at
	$u$, with the provision that the state at $v$ is $q$.
	\item  The ternary function $\calA_t: Q_\calA \times V\times V\mapsto Q_\calA$ maps
	each triple $(p,u,v)$ of a state $p \in Q_\calA$ and nodes of $t$, where $u\prec v$ are siblings, to
	the state that the run of $\calA$ on $u,\ldots,v$, starting from
	state $p$,  takes after $v$.
\end{itemize}

Every single function value can be updated in constant sequential time, as stated in the following lemma. This leads to a quadratic work bound since there are quadratically many tuples to be updated in parallel.

\begin{restatable}{lemma}{LemmaUpdateTuple} \label{lemma-update-tuple}
	After a $\Relabel$ operation, single values $\calA_t(p,u,x)$ 
	and $\calB_t(q,u,x)$ can be updated  by a sequential algorithm
	in constant time.
\end{restatable}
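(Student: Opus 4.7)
The plan is to exploit that a $\Relabel$ at a node $\ell$ changes only the state of $\ell$ itself (and, by propagation, the states of its ancestors), so for every tuple that is to be updated there is a single ``hot spot'' at which the update takes effect, and all surrounding pieces remain correctly described by the old, not-yet-updated tables $\calA_t$ and $\calB_t$. As a preprocessing step, compute the new state $q_\ell^{\text{new}}$ at $\ell$ in $O(1)$: if $c_1,\ldots,c_k$ are the children of $\ell$ (whose states are unchanged), then $q_\ell^{\text{new}}=\delta(\calA_t(s,c_1,c_k),\sigma')$, where $s$ is the initial state of $\calA$ and $\sigma'$ is the new label of $\ell$; if $\ell$ is a leaf, use $\delta(s,\sigma')$.

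To update $\calA_t(p,u,x)$, first test whether $\ell$ lies in the subforest $\tree[u]{x}$; if not, the old value is still correct. Otherwise there is a unique sibling $s$ with $u\preceq s\preceq x$ such that $\ell\in\tree{s}$, and its new state is $q_s^{\text{new}}=\calB_t(q_\ell^{\text{new}},s,\ell)$ read from the old table (the look-up is unaffected by the relabel because the externally supplied state at $\ell$ replaces the effect of its label), or simply $q_\ell^{\text{new}}$ if $s=\ell$. The new value of $\calA_t(p,u,x)$ is then assembled from three $O(1)$ steps: a look-up $p_1=\calA_t(p,u,s')$ for the siblings strictly left of $s$ (where $s'$ is the immediate left sibling of $s$), a DFA transition $p_2=\delta_\calA(p_1,q_s^{\text{new}})$, and a look-up $\calA_t(p_2,s'',x)$ for the siblings strictly right of $s$ (with $s''$ the immediate right sibling), with the obvious simplifications when $s=u$ or $s=x$.

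To update $\calB_t(q,u,x)$, first check that $\ell\in\tree{u}[x]$; otherwise the old value remains correct. Then locate the deepest node $w$ on the path from $u$ to $x$ that is an ancestor of $\ell$ (this is the lowest common ancestor of $\ell$ and $x$, obtainable in $O(1)$ via an LCA primitive) together with its child $v^+$ on the path to $x$. The state at $v^+$ assuming state $q$ at $x$ is $\calB_t(q,v^+,x)$ from the old table (unaffected by the relabel), or simply $q$ if $v^+=x$. The new state $q_w^{\text{new}}$ at $w$ is then computed in $O(1)$ by running $\calA$ across the children of $w$ using two or three look-ups in $\calA_t$: the unchanged states of $w$'s remaining children are served by old look-ups, the state of $v^+$ is the propagated value just computed, and if $\ell\neq w$ there is one further special child $s\neq v^+$ (the unique child of $w$ whose subtree contains $\ell$) whose new state is obtained analogously to the $\calA_t$ case above; $\delta$ is finally applied with the new label if $\ell=w$ and with the unchanged label of $w$ otherwise. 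The new value of $\calB_t(q,u,x)$ is then $\calB_t(q_w^{\text{new}},u,w)$ from the old table if $w\neq u$, or $q_w^{\text{new}}$ if $w=u$.

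Every branch uses only $O(1)$ table look-ups, DFA transitions, and tree-structural queries, hence constant sequential time. The main obstacle will be careful case analysis: correctly identifying where $\ell$ lies relative to the tuple (inside or outside $\tree[u]{x}$ and $\tree{u}[x]$, on the ancestor path or off to the side, whether $v^+$ is to the left or right of $s$ among the children of $w$) and handling the boundary cases ($s=u$, $s=x$, $w=u$, $v^+=x$, $\ell=w$, $\ell$ a leaf) uniformly. This also presumes that the underlying tree data structure supports constant-time ancestor tests, LCA queries, ``child of $w$ on the path to $\ell$'' queries, and sibling navigation.
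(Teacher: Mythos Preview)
Your proposal is correct and follows essentially the same approach as the paper: the paper presents the update as three explicit procedures (\evaluatestate, \evaluatesequence, \evaluatepath) that do exactly what you describe---locate where the changed node $\ell$ sits relative to the tuple, split the relevant sibling sequence or ancestor path at the $\lca$ with $\ell$, and reassemble from $O(1)$ look-ups in the old tables together with the basic tree functions $\anc$, $\lca$, $\ancchild$, $\lsibling$, $\rsibling$. Your prose account matches their pseudocode step for step, including the case distinctions $\ell=w$ versus $\ell\neq w$ and the boundary handling.
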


Some information about the shape of the tree is required, which we refer to as \emph{basic tree functions}. For more details we refer to the \myappendix. However, as label changes cannot change the shape of the tree, this information does not need to be updated und can be assumed as precomputed.

To lower the work bound the basic idea now is to store the automata functions not for \emph{all} possible arguments, but for a small subset of \emph{special} arguments that allow the computation of function values for \emph{arbitrary} arguments in constant time with constant work.

In \cite{SchmidtSTVZ21}, this idea was applied to the $\bigO(n^2)$ work program for regular string languages. A constant-depth hierarchy of intervals was defined by repeatedly partitioning intervals into $\ntheta$ subintervals, for some $\theta>0$. This hierarchy allowed to define \emph{special} intervals such that any update only affects $\neps$ intervals and function values of arbitrary intervals can be computed in constant time with constant work.

We transfer this idea to the case of unranked tree languages by partitioning the tree into \ntheta \emph{zones}, each of which is partitioned into further \ntheta zones and so on until, after a constant number of refinements, we arrive at zones of size \ntheta. Here, $\theta>0$ is a constant that will be  chosen later. It will always be chosen such that $h=\frac{1}{\theta}$ is an integer.

Before we define this partition hierarchy more precisely, we first define zones and show that they can always be partitioned in a way that guarantees certain number and size constraints.

\begin{defi}
	A \emph{zone} is a set $S$ of nodes with the following properties:
	\begin{itemize}
		\item $S$ is a proper subforest of $t$,
		\item for every $v\in S$ it holds that either no or all children are in S, and
		\item there exists at most one node $v_S$ in $S$, whose children are not in $S$. The node $v_S$ is called the \emph{vertical connection node} of $S$.
	\end{itemize}

      \end{defi}
      We call a zone a \emph{tree zone} if it consists of only one sub-tree of $t$ and a \emph{non-tree zone} otherwise. We call a zone \emph{incomplete} if it has a vertical connection node and \emph{complete}, otherwise. There are thus four different types of zones which can be written, with the notation introduced in \autoref{section:preliminaries}, as follows: complete tree zones $\tree{v}$, complete non-tree zones $\tree[u]{v}$, incomplete tree zones $\tree{v}[w]$, and incomplete non-tree zones $\tree[u]{v}[w]$. Depending on the type, zones can therefore be represented by one to three ``important nodes''. The overall tree can be seen as the zone $\tree{r}$, where $r$ is its root.

From now on, we always assume that $n$ is as in \autoref{section:preliminaries}, some $\theta>0$ is fixed, and that $h=\frac{1}{\theta}$ is an integer. 
      
We call a zone of $t$ with at most $n^{\theta\ell}$ nodes an \emph{$\ell$-zone}. The tree $t$ itself constitutes a $h$-zone, to which we will refer to as the \emph{overall zone}. 

We next define \emph{partition hierarchies} formally. More precisely, for every $\ell\ge 2$, we define partition hierarchies of height $\ell$ for $\ell$-zones as follows. 
If $S$ is a 2-zone and $S_1,\ldots,S_k$ are 1-zones that constitute a partition of $S$, then $(S,\{S_1,\ldots,S_k\})$ is a partition hierarchy of height 2 for $S$.
If $S$ is an $(\ell+1)$-zone, $\{S_1,\ldots,S_k\}$ is a partition of $S$ into $\ell$-zones, and for each $j$, $H_j$ is a partition hierarchy of height $i$ for $S_j$, then $(S,\{H_1,\ldots,H_k\})$ is a partition hierarchy of height $\ell+1$ for $S$.
A partition hierarchy of height $h$ of the zone consisting of $t$ is called a partition hierarchy of $t$.

An example of a $(1,\frac{1}{3})$-bounded partition hierarchy is given in \autoref{fig:partition-hierarchy}.

\begin{figure}[t]
    \begin{center}
        \includegraphics[scale=0.155]{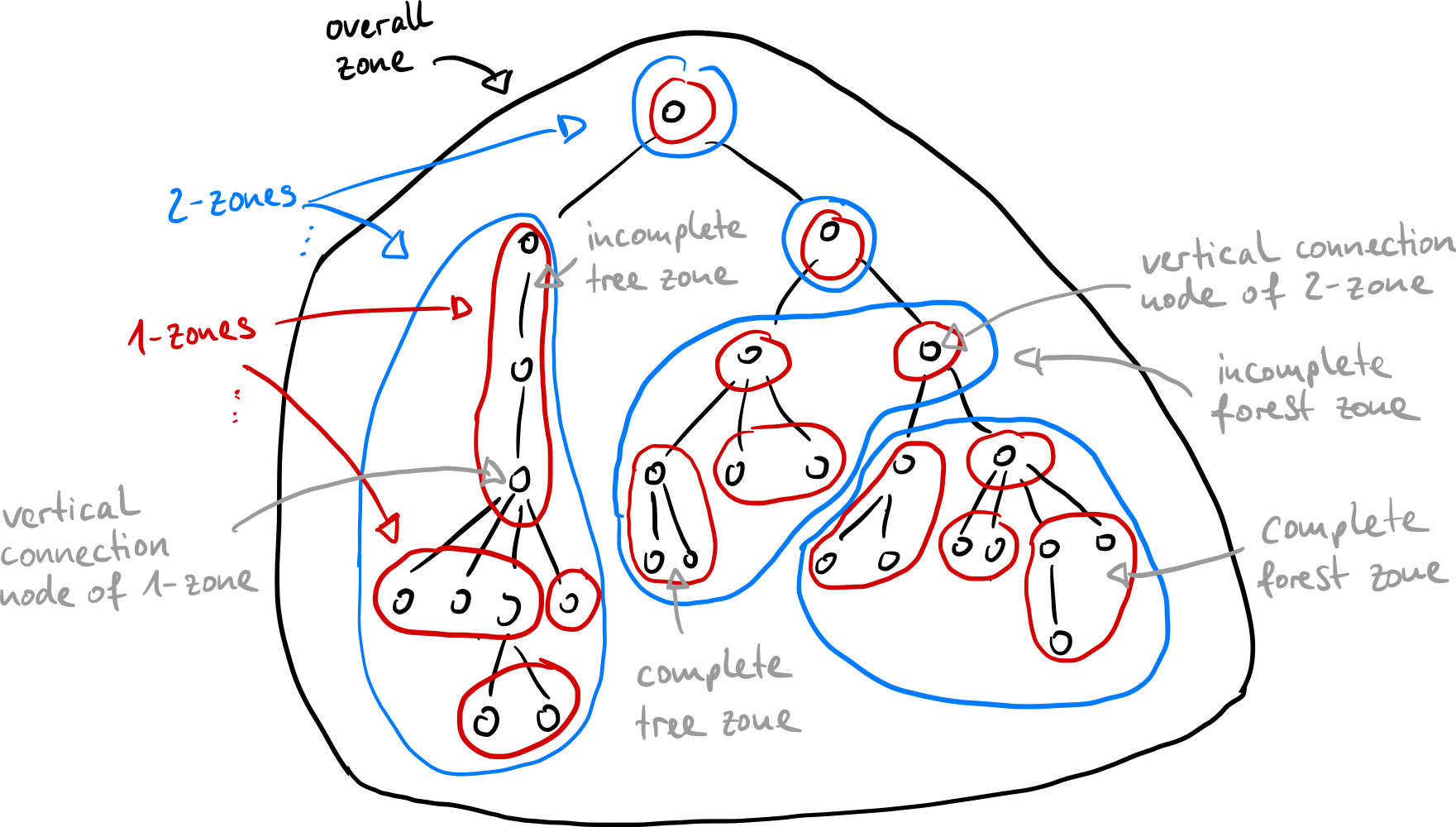}
    \end{center}
    \caption{Example of a $(1,\frac{1}{3})$-bounded partition hierarchy.}
    \label{fig:partition-hierarchy}
\end{figure}

We often call a zone $S'$ that occurs at some level $i<\ell$ within the partition hierarchy of a zone $S$ of some level $\ell$ a \emph{component zone}. If $S'$ has level $\ell-1$ we also call it a \emph{sub-zone} of $S$.  

We call a partition hierarchy $H$ \emph{$(c,\theta)$-bounded},  constants $c$ and $\theta>0$, if each partition of a zone consists of at most $cn^\theta$ nodes. 

Our next aim is to prove that $(10,\theta)$-bounded partition hierarchies actually exist. To this end, we prove the following lemma. It is similar to \cite[Lemma 3]{Bojanczyk12}, but adapted to our context, which requires a hierarchy of constant depth and a certain homogeneity regarding children of vertical connection nodes. 

\begin{restatable}{lemma}{LemmaPartitionStep}
	\label{lemma:pre-partition-step}
Let $m\ge 2$ be a number and $S$  a zone with more than $m$ nodes. Then $S$ can be partitioned into at most five zones, one of which has at least $\frac{1}{2}m$ and at most $m$ nodes.
\end{restatable}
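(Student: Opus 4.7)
My plan is to construct the partition explicitly, by combining a controlled downward walk in $S$ with a greedy argument on sibling sizes. I will start at the top-level subtrees $\tree{y_1},\ldots,\tree{y_\ell}$ of $S$ and descend into some $\tree{y_i}$ whenever $|\tree{y_i}|_S>m$; recursively, inside the chosen subtree I again descend to any child whose size in $S$ still exceeds $m$. The walk terminates at a node $v$ (or at the top level, if no descent is required) whose relevant children in $S$ are each of size at most $m$. A crucial refinement in the incomplete case: when $S$ has vertical connection node $w$, I prefer, whenever feasible, to descend toward $w$, so that $w$ typically lands inside the chunk or in a clean side zone.

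At the stopping point, let $x_1,\ldots,x_k$ be the relevant siblings (children of $v$, or top-level siblings of $S$ if no descent occurred). Their total size in $S$ is at least $m$ while each is at most $m$, so the standard prefix-sum greedy works: pick the smallest $b$ with $\sum_{j\le b}|x_j|_S\ge m/2$, and take $F=\tree[x_1]{x_b}$ if this prefix sum does not exceed $m$, otherwise $F=\tree{x_b}$. In the second sub-case the overshoot argument forces $|x_b|_S\in(m/2,m]$, so in both cases $|F|\in[m/2,m]$, yielding the desired chunk of size between $\tfrac{1}{2}m$ and $m$.

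The decomposition of $S\setminus F$ into further zones hinges on a merging trick: the portion of $S$ strictly above $v$ can be amalgamated with the top-level subtrees of $S$ that we did not descend into, yielding a single incomplete zone whose unique vertical connection node is $v$. Combined with a possible left sibling forest $\tree[x_1]{x_{b-1}}$ and a possible right sibling forest $\tree[x_{b+1}]{x_k}$ of the chunk, this contributes at most three further zones in the standard case, for four zones in total.

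The main obstacle, and the reason the bound is five rather than four, is the case in which $S$ is incomplete, the path-preferred descent is blocked (the child of $v$ on the path to $w$ has size at most $m$ while some non-path sibling forces the descent), and consequently the merged top would contain both $v$ and $w$ as candidate VC nodes, violating the at-most-one-VC condition. The remedy will be to peel off the subtree containing $w$, namely $\tree{y_{i_w}}[w]$ where $y_{i_w}$ is the top-level ancestor of $w$ in $S$, as a separate fifth zone; the merged top then keeps $v$ as its sole VC. A short case analysis, distinguishing the position of $i_w$ relative to the descended-into subtree and whether the greedy overshoots, will confirm that the zone count stays within five throughout.
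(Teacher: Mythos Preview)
Your high-level strategy---descend until each relevant sibling is small, carve out a chunk by a prefix-sum argument, then cut the remainder into an ``upper'' piece plus one or two sibling pieces---is exactly the paper's plan. The gap is your descent threshold. Because you descend only when some child exceeds $m$ (rather than $\tfrac12 m$), your greedy can overshoot and force $F=\tree{x_b}$, leaving \emph{three} pieces at the sibling level (left of $x_b$, $F$, right of $x_b$). If additionally $S$ is incomplete and the preferred descent has branched away from $w$, the upper remainder carries two vertical connection nodes ($v$ and $w$) and cannot be cut into fewer than three zones. That yields six, not five. Concretely, take $m=100$ and let $S$ be an incomplete tree zone with root $r$ and vertical connection node $w$; let $r$ have children $c_w,c'$ with $|\tree{c_w}\cap S|=20$ and $w\in\tree{c_w}$, and let $c'$ have children $x_1,x_2,x_3$ of sizes $49,100,49$. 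Your walk must go to $v=c'$ (the $w$-child already has size $\le m$); the prefix sums $49,149$ overshoot, so $F=\tree{x_2}$ with left $\tree{x_1}$ and right $\tree{x_3}$; and the upper piece $\{r,c'\}\cup(\tree{c_w}\cap S)$ has the two VC nodes $c'$ and $w$. Any zone in a partition of this upper piece that contains $r$ must contain both or neither of $c_w,c'$; in the first case it acquires two VC nodes, in the second it is $\{r\}$ and the rest still has two VC nodes---so three zones are unavoidable. Your peeling remedy does not apply here either: the unique top-level node is $y_1=r$, hence $y_{i_w}=r$ and $\tree{y_{i_w}}[w]=S$ itself, which cannot be peeled off as a piece disjoint from $F$.

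The paper sidesteps the overshoot entirely by using the threshold $\tfrac12 m$: it descends as long as some child has more than $\tfrac12 m$ descendants in $S$. At the stopping level every sibling then has size at most $\tfrac12 m$, so some \emph{prefix} $D_j$ always lands in $[\tfrac12 m,m]$ and only two sibling-level pieces arise ($D_j$ and its right complement $D_k\setminus D_j$). The upper remainder $S'=S\setminus D_k$ may still have two VC nodes; the paper splits it into at most three zones via the lowest common ancestor of those two nodes, giving $2+3=5$. Replacing your threshold $m$ by $\tfrac12 m$---which also makes the path preference and the peeling trick unnecessary---makes your argument go through.
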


      This lemma immediately yields the existence of $(10,\theta)$-bounded partition hierarchies.
      \begin{restatable}{proposition}{PropPartitionHierarchy}
     \label{proposition-partition-hierarchy}
        For each $\theta>0$, each tree $t$ has some $(10,\theta)$-bounded partition hierarchy.
      \end{restatable}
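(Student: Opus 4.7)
My plan is to build the hierarchy top-down starting from the overall zone $t$, which is itself an $h$-zone since $n^{\theta h} = n$. For each $\ell \ge 2$, I first describe how to partition an $\ell$-zone into $(\ell-1)$-zones with at most $10 n^\theta$ parts; iterating this construction through levels $h, h-1, \dots, 2$ then yields a partition hierarchy of height $h$.

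For the single-level step applied to an $\ell$-zone $S$, I set $m \df n^{\theta(\ell - 1)}$ and maintain a pool of active zones, initialised to $\{S\}$. While some active zone $A$ satisfies $|A| > m$, I apply \autoref{lemma:pre-partition-step} to split it into at most five zones, one of which, call it $Z$, satisfies $\tfrac{1}{2} m \le |Z| \le m$. I remove $A$ from the pool, place $Z$ in the output, and dispatch each remaining piece either to the output (if it already has at most $m$ nodes) or back into the active pool. When the pool becomes empty, every output zone has at most $m = n^{\theta(\ell - 1)}$ nodes and is therefore a valid $(\ell - 1)$-zone, and the collection of output zones partitions $S$.

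The key step is to bound the resulting number of output zones by $10 n^\theta$. Let $a$ be the total number of loop iterations. Each iteration extracts a distinguished zone $Z_i$ with $|Z_i| \ge \tfrac{1}{2} m$, and since the $Z_i$ are pairwise disjoint subsets of $S$, the bound $a \cdot \tfrac{1}{2} m \le |S| \le n^{\theta \ell}$ yields $a \le 2 n^{\theta}$. Now let $c$ denote the total number of remainder pieces fed back into the active pool and $b$ the number sent directly to the output. Every iteration after the first consumes a previously enqueued active zone, so $a \le 1 + c$; and every iteration produces at most four remainder pieces, so $b + c \le 4a$. Hence the total output size is $a + b \le a + (4a - c) \le 5a - (a-1) = 4a + 1 \le 8 n^\theta + 1 \le 10 n^\theta$.

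The main obstacle is this counting argument: the size lower bound $\tfrac{1}{2} m$ immediately caps the number of ``big'' pieces $Z_i$, but one must additionally show that the small remainder pieces do not blow up the count. The crucial observation is that every iteration after the first is paid for by a remainder piece put into the pool earlier, which ties the number of small pieces linearly to the number of iterations. Once this per-level bound is in hand, composing the single-level construction across all $h = 1/\theta$ levels immediately yields a $(10, \theta)$-bounded partition hierarchy of $t$.
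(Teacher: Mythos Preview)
Your proof is correct and follows essentially the same strategy as the paper: repeatedly apply \autoref{lemma:pre-partition-step} to peel off a zone of size in $[\tfrac{1}{2}m,m]$, bound the number of iterations by $2n^\theta$ via disjointness of the distinguished zones, and conclude that the total number of parts is at most $4a+1\le 10n^\theta$. Your counting is spelled out more carefully than the paper's (which simply asserts the bound after observing $a\le 2s^\theta$), and your choice $m=n^{\theta(\ell-1)}$ is tied directly to the definition of an $(\ell-1)$-zone rather than the paper's $m=s^{1-\theta}$, but these are cosmetic differences, not a different route.
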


              We now explain in more detail, which information about the behaviour of $\calA$ and $\calB$ is stored by the work-efficient algorithm.

Function values for the ternary functions are stored only for so-called special pairs of nodes, which we define next. Special pairs of nodes are always defined in the context of  some zone $S$ of a partition hierarchy. In the following, we denote, for a zone $S$ of a level $\ell\ge 2$ its set of sub-zones of level $\ell-1$ by $T$.

\begin{itemize}
\item Any  pair of siblings $u\prec v$ in a zone $S$ of level 1 is a \emph{special horizontal pair}. A pair of siblings $u\prec v$ in a complete zone $S$ of level $\ell\ge 2$ is a \emph{special horizontal pair}, if  $u$ is a left boundary of some zone in $T$ and $v$ is a right boundary  of some zone in $T$. However, if $S$ is incomplete and  there is an ancestor $w'$ of the lower boundary $w$ with $u\preceq w' \preceq v$, then, instead of $(u,v)$,   there are two special pairs: $(u,\lsibling(w'))$ and $(\rsibling(w'),v)$.

\item  Any  pair of nodes $u,v$ in some zone $S$ of level 1 is a \emph{special vertical pair}, if  $v$ is an ancestor of $u$. A pair of nodes $u,v$ in some zone $S$ of level $\ell\ge 2$ is a \emph{special vertical pair}, if  $v$ is an ancestor of $u$, $v$ is an upper or lower boundary of some zone in $T$ and $u$ is a lower boundary of some zone in $T$. However, if $S$ is incomplete with lower boundary $w$ and  $w':=\lca(w,u)$ is strictly above $u$ and below or equal to $v$, then, instead of $(u,v)$, there are two special pairs: $(u,\ancchild(w',u))$ and $(w',v)$. Here \lca\ determines the least common ancestor and \ancchild\ the child of $w'$ that is an ancestor of $u$.
\end{itemize}

The algorithm stores $\calA_t(p,u,v)$ for each state $p$ of $\calA$ and each special horizontal pair $u,v$.  Furthermore, it stores $\calB_t(q,u,v)$, for each state $q$ of $\calB$ and each special vertical pair $u,v$.

We note, that in all cases  $\calA_t(p,u,v)$ and $\calB_t(q,u,v)$ only depend on the labels of the nodes in the zone, for which $(u,v)$ is special. 

\begin{restatable}{lemma}{LemmaComputeFromStored}
              \label{lemma-compute-from-stored}
                From the stored values for functions $\calA_t$ and $\calB_t$ for special pairs, it is possible to compute $\rho_t(v)$, for arbitrary nodes $v$, $\calA_t(p,u,u')$ for arbitrary pairs $u\prec u'$ of siblings of $t$ and $\calB_t(q,u,u')$ for arbitrary pairs $u,u'$ of nodes, where $u'$ is an ancestor of $u$, sequentially in constant time.
              \end{restatable}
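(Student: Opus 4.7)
\begin{proofsketch}
The plan is to proceed by induction on the level $\ell$ of the smallest zone of the partition hierarchy that contains the queried pair of nodes. Since the hierarchy has constant depth $h=1/\theta$, a constant number of recursion steps suffices to reach the stored special pairs, so the overall sequential running time is constant.

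For a horizontal query $\calA_t(p,u,u')$ with siblings $u\prec u'$, I would first locate, via the basic tree functions, the smallest zone $S$ of some level $\ell$ that contains both $u$ and $u'$ as siblings. When $\ell=1$ the pair is itself a special horizontal pair and the value is stored. For $\ell\geq 2$, let $S_u$ and $S_{u'}$ be the sub-zones of level $\ell-1$ in $S$ that contain $u$ and $u'$, respectively. If $S_u=S_{u'}$, recurse within that sub-zone. Otherwise the sibling interval $[u,u']$ decomposes into: a prefix from $u$ to the rightmost sibling of $u$ lying in $S_u$; a middle stretch of complete sub-zones, each of which corresponds to a boundary-to-boundary special horizontal pair of $S$ (split, if necessary, at the vertical connection node of $S$ exactly as the definition of special horizontal pair prescribes); and a suffix from the leftmost sibling in $S_{u'}$ up to $u'$. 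The prefix and suffix are handled recursively at level $\ell-1$, the middle pieces are read directly from storage, and all fragments are chained together through $\delta_\calA^*$, with the result fed into the outgoing recursive call.

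For a vertical query $\calB_t(q,u,u')$ with $u'$ a proper ancestor of $u$ I would proceed analogously. After locating the smallest enclosing zone $S$ of some level $\ell$, the base case $\ell=1$ is again direct. Otherwise, with $S_u,S_{u'}$ the sub-zones containing $u$ and $u'$, the path from $u$ up to $u'$ splits into (a)~a segment from $u$ up to the topmost ancestor of $u$ lying in $S_u$, (b)~a sequence of complete sub-zones each traversed from its lower to its upper boundary, each such traversal being a stored special vertical pair of $S$, and (c)~a segment from the bottom boundary of $S_{u'}$ up to $u'$. The incompleteness of $S$, when relevant, is absorbed by splitting at $\lca(w,u)$ exactly as in the definition of special vertical pair. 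Segments (a) and (c) are obtained recursively at level $\ell-1$, (b) from storage, and composing the corresponding $\calB_t$-values yields $\calB_t(q,u,u')$. Finally, to compute $\rho_t(v)$, if $v$ is a leaf I read off $\delta(s,\text{label}(v))$ directly; otherwise I pick any leaf descendant $w$ of $v$ via the basic tree functions, compute $\rho_t(w)$ in the same way, and output $\calB_t(\rho_t(w),v,w)$ using the vertical-query procedure.

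The hard part will be making the case analysis in the recursive decomposition match, piece by piece, the cases in the definition of special pair, in particular in incomplete zones where the vertical connection node forces a split of the interval or the path. I also expect to need to verify carefully that the sub-zones into which the recursion descends really are of level $\ell-1$ so that the induction bottoms out after exactly $h$ steps, and that navigation through the hierarchy (locating enclosing zones, zone boundaries, the topmost ancestor of a node inside a given sub-zone, a leaf descendant) can indeed be performed in constant sequential time from the basic tree functions precomputed with the partition hierarchy.
\end{proofsketch}
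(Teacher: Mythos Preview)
Your hierarchical decomposition for $\calA_t$ and $\calB_t$ is essentially the paper's; the paper builds explicit chains of boundary nodes outward from each endpoint through levels $1,2,\ldots$ until both sides meet in a common zone, whereas you recurse top-down into sub-zones, but these are dual formulations of the same idea.

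The genuine gap is your computation of $\rho_t$. Whenever a special pair is split at the vertical connection node (say the horizontal pair is split around the ancestor $w'$ of the lower boundary), chaining the two stored halves requires knowing $\rho_t(w')$, since the $\calA$-run has to consume the $\calB$-state of the full subtree at $w'$, which extends outside the zone. The same happens when gluing vertical segments across sub-zone boundaries. So your general $\calA_t$ and $\calB_t$ procedures implicitly depend on $\rho_t$; but your $\rho_t(v)$ is itself defined by invoking the general $\calB_t$, creating a dependency cycle whose depth is not bounded by~$h$ in any obvious way. The paper breaks this by computing $\rho_t$ \emph{first}, directly from stored special pairs: it follows the chain $v=v_1,v_2,\ldots,v_j$ where each $v_{i+1}$ is the lower boundary of the level-$i$ zone containing $v_i$, notes that each pair $(v_{i+1},v_i)$ is itself a stored special vertical pair, and reads $\rho_t(v_j)$ off a stored horizontal value on $v_j$'s children. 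This yields the stratification $\rho_t\Rightarrow\calA_t\Rightarrow\calB_t$ with no back-edges, which is precisely what your sketch is missing. (A minor related point: locating an arbitrary leaf descendant of $v$ in constant time is not among the listed basic tree functions, and the paper's construction never needs one.)
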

          
This enables us to show the \neps\ work bound for label changes.

 \begin{proof}[Proof of \autoref{prop-label-changes}]
	To achieve the stated bound, we use the above algorithm with work parameter $\theta=\frac{\epsilon}{2}$. The algorithm uses a $(\theta,10)$-bounded partition hierarchy, which exists thanks to \autoref{proposition-partition-hierarchy}.
	
	As indicated before, the algorithm stores $\calA_t(\cdot,u,v)$ and $\calB_t(\cdot,u,v)$, for all special pairs $(u,v)$. As already observed before, these values only depend on the labels of the nodes of the zone relative to which $(u,v)$ is special. Therefore, if a node label is changed for some node $x$, values   $\calA_t(\cdot,u,v)$ and $\calB_t(\cdot,u,v)$ need only be updated for special pairs of zones in which $x$ occurs. Since each node occurs in exactly $h$ zones and each zone has $\bigO(n^{2\theta})=\neps$ special pairs, $h\cdot\neps$ processors can be used, where every processor updates a single value in constant time and work, as is possible thanks to \autoref{lemma-update-tuple} and \autoref{lemma-compute-from-stored}. Since the shape of the tree does not change we can assume a mapping from the updated node and the processor number to the special tuple that the respective processor recomputes.
\end{proof}

\subsection{Structural Changes}\label{subsection:structuralchanges}

In \autoref{prop-label-changes} only label changes were allowed, so the structure of the underlying tree did not change. In particular, there was no need to update any of the basic tree functions.

In this subsection, we consider structural changes of the tree. We show that the work bounds of \autoref{prop-label-changes} can still be met for the full data types $\RegTree(L)$.

 \begin{theorem}\label{theorem-struct-changes}
	For each regular tree language $L$ and each $\epsilon>0$, there is a dynamic constant time parallel algorithm for $\RegTree(L)$ that handles change operations with work $\bigO(n^\epsilon)$  and answers query operations with constant work.
      \end{theorem}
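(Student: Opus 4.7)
The plan is to extend the algorithm of \autoref{prop-label-changes} so that the underlying $(c,\theta)$-bounded partition hierarchy is also maintained under $\AddChild$ operations. I would choose $\theta=\epsilon/2$ and $h=\lceil 1/\theta\rceil$ as in \autoref{subsection:labelchanges-efficient}, and allow a small amount of \emph{slack}: every zone at level $\ell$ is permitted to carry up to $cn^{\theta\ell}$ nodes for some suitable constant $c>10$. The query operation is unchanged and still runs with constant work because \autoref{lemma-compute-from-stored} computes $\rho_t(v)$ from $\bigO(1)$ stored special-pair values.

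In the foreground, an $\AddChild(T,u,v,\sigma)$ operation inserts the new leaf $v$ into the unique chain $S_1\subsetneq S_2\subsetneq\cdots\subsetneq S_h$ of zones containing $u$. Only a controlled number of new special pairs arises in each $S_\ell$: $v$ creates at most $\bigO(n^\theta)$ new horizontal pairs at the innermost level, and at higher levels the only affected special pairs are those whose endpoints lie on the path from $v$ to the boundary of the corresponding sub-zone containing $u$. In total the set of special pairs whose $\calA_t$- or $\calB_t$-values must be (re)computed has size $\bigO(h\cdot n^{2\theta})=\bigO(n^\epsilon)$, and each individual value can be recomputed in constant time by a single processor via \autoref{lemma-update-tuple} and \autoref{lemma-compute-from-stored}.

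The genuinely new difficulty is to keep the partition hierarchy within its size bounds while spending only $\bigO(n^\epsilon)$ work per operation. After sufficiently many insertions a zone $S_\ell$ must be re-partitioned using \autoref{lemma:pre-partition-step}, but one such re-partition at level $\ell$ costs work $\Omega(n^{\theta\ell})$, which exceeds the per-operation budget as soon as $\ell\ge 2$. I would handle this by a \emph{background rebuilding} scheme: as soon as a zone $S_\ell$ reaches size $\tfrac{c}{2}n^{\theta\ell}$, a background task is spawned which assembles a fresh $(10,\theta)$-bounded partition hierarchy of $S_\ell$ together with all $\calA_t$- and $\calB_t$-values for its new special pairs; the total work of this task, which is $\bigO(n^{(\ell+1)\theta})$, is evenly distributed over the $\Theta(n^{\theta\ell})$ operations that must elapse before $S_\ell$ can conceivably reach size $cn^{\theta\ell}$. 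This contributes $\bigO(n^\theta)$ background work per operation and per level, summing to $\bigO(h\cdot n^\theta)=\bigO(n^\epsilon)$ overall. The doubling of the global size bound $n$ once the tree exceeds $\tfrac{1}{2}n$ nodes is treated analogously by a single global background rebuild distributed over the $\Theta(n)$ operations that must occur before the structure could possibly reach $n$ nodes.

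The main obstacle, on which most of the technical effort will be spent, is the careful coordination of these concurrent background rebuilds at all $h$ levels with the foreground updates. Specifically, during a rebuild of a zone $S_\ell$ one must arrange that (i) the \emph{old} hierarchy continues to correctly answer queries and to accommodate further $\AddChild$ and $\Relabel$ operations; (ii) each leaf inserted during the rebuild is also registered in the partially constructed new hierarchy, so that its $\calA_t$- and $\calB_t$-values remain consistent with the current tree at the moment of switchover; and (iii) the final switch from the old to the new hierarchy for $S_\ell$ is executed atomically in constant time. Once this bookkeeping is set up so that the slack and size invariants on zones and the bound on the number of special pairs per zone are preserved at every point in time, both $\AddChild$ and $\Relabel$ operations can be implemented with work $\bigO(n^\epsilon)$ and the query with constant work, yielding the theorem.
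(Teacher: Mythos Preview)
Your overall strategy---keep a slack in the zone size bounds and run a background rebuild that is amortised over future change operations---is exactly the approach the paper takes. But two concrete choices in your proposal do not work and are not merely bookkeeping issues.

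\textbf{The trigger is at the wrong level.} You start the rebuild of $S_\ell$ when $S_\ell$ itself reaches half its size bound, and budget $\Theta(n^{\theta\ell})$ operations for the rebuild. But the constraint that can be violated is not the size of $S_\ell$; it is the size of a \emph{sub-zone} $S_{\ell-1}$ of $S_\ell$ (that is what makes the partition of $S_\ell$ invalid). If all insertions hit a single sub-zone, that sub-zone overflows after $\Theta(n^{\theta(\ell-1)})$ operations, long before $S_\ell$ is anywhere near half full, so no rebuild of $S_\ell$ has even been started. Rebuilding $S_{\ell-1}$ does not help: that only refreshes its \emph{interior} hierarchy, it does not make $S_{\ell-1}$ smaller as a sub-zone of $S_\ell$. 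The paper therefore triggers the rebuild of $S_\ell$ when some sub-zone of $S_\ell$ reaches its warning limit, which leaves only $\Theta(n^{\theta(\ell-1)})$ operations to finish---a much tighter budget than you assumed.

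\textbf{The resulting work parameters are off.} With the correct trigger, the number of constant-time rounds available to rebuild $S_\ell$ is $\Theta(n^{\theta(\ell-1)})$, while the number of zones to be created in the new hierarchy is $\Theta(n^{\theta(\ell-1)})$ as well if you go all the way down to level $1$; together with the cost of a single partitioning step (\autoref{lemma-partition-efficient} needs $\bigO(n^{6\theta})$ work, not $\bigO(1)$) and the computation of function values (\(\bigO(n^{7\theta})\) per sub-zone), the per-operation work comes out as $\bigO(n^{7\theta})$, not $\bigO(n^{\theta})$. This forces $\theta=\epsilon/7$ rather than $\epsilon/2$, and it also forces the paper's trick of \emph{$3$-pruning} the hierarchy (stopping at level-$3$ zones instead of level-$1$) so that the number of rounds $\bigO(n^{(\ell-3)\theta})$ is strictly below the budget $\tfrac{1}{2}n^{(\ell-1)\theta}$. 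Finally, the paper has to spawn a \emph{separate} rebuild thread for every newly created zone, precisely to handle the scenario you flag in (ii): a freshly built sub-zone can itself be flooded with insertions and reach its warning limit before the enclosing rebuild finishes. Your proposal mentions this as an obstacle but does not indicate a mechanism; simply ``registering'' new leaves in the partially built hierarchy is not enough, because the new sub-zones can overflow.
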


      In the next subsection, we describe the general strategy of the algorithm, define some notions that will be used and present its proof.
      Then, in a second subsection, we give some more detailed information about the data that is stored and how it can be maintained.

\subsubsection{High-level description of the dynamic algorithm}\label{subsubsection:structuralchanges-high-level}

Our approach generalises the algorithm of  \autoref{subsection:labelchanges-efficient}.  It makes sure that, at any point in time, there is a valid partition hierarchy together with corresponding tree and automata functions.
 The general strategy of the dynamic algorithm is to add new leaves  to their  nearest zone. In principle, this is not hard to handle --- unless it leads to a violation of a size constraint of some zone.  As soon as zones exceed a certain size bound the affected parts of the hierarchy will thus be  recomputed to ensure the size constraints.

For reasons that will become clearer below, we need to slightly modify the definition of partition hierarchies, basically by omitting the lowest two levels. To this end, we define $3$-pruned partition hierarchies just like we defined partition hierarchies, but the lowest level is at height $3$. More precisely, \emph{a $3$-pruned partition hierarchy of height $3$} is just a $3$-zone, and  \emph{$3$-pruned partition hierarchies of height $\ell>3$} are inductively defined just like partition hierarchies of height $\ell$.
It is clear that a $3$-pruned partition hierarchy exists for each tree by ommiting the two lowest levels in the partition hierarchy computed in \autoref{proposition-partition-hierarchy}. Moreover, using a $3$-pruned partition hierarchy as basis for our efficient label change approach still ensures the sequential constant time computation of arbitrary automaton function values from the stored values for special pairs. However, zones on the lowest level have size $\bigO(n^{3\theta})$ leading to a work bound of $\bigO(n^{6\theta})$ per change operation.

To ensure that at each point in time, a usable partition hierarchy is available, the general strategy is as follows: the algorithm starts from a \emph{strong partition hierarchy} in which zones at level $\ell$ have size at most $\frac{1}{4}n^{\ell\theta}$, well below the maximum allowed size of such a zone of $n^{\ell\theta}$. As soon as the size of a zone $S$ at level $\ell$ reaches its \emph{warning limit} $\frac{1}{2}n^{\ell\theta}$, the algorithm starts to compute a new partition hierarchy for the parent zone $S'$ of $S$ at level $\ell+1$. This computation is orchestrated in a way that makes sure that the new partition hierarchy for $S'$ is ready (together with all required function values) before $S$ reaches its size limit $n^{\ell\theta}$, at which point the old partition hierarchy for $S'$ becomes useless.

Since a partition hierarchy of the whole tree together with the required function values  has size  $\Omega(n)$, its computation inherently requires that amount of work and it can probably not be done in constant time.  Furthermore, since we aim at work \neps per  operation, the algorithm cannot afford to do the re-computation ``as fast as possible'' but rather needs to stretch  over at least $n^{1-\epsilon}$ steps. However, the fact that the tree can change during a re-computation poses a challenge: if many change operations happen with respect to  a particular zone in a low level of the new partition hierarchy, this new zone might reach its warning limit and then its hard limit, before the overall re-computation of the hierarchy has finished. This challenge can be met by a careful orchestration of the re-computation.

We will next describe the data structure that the dynamic algorithm uses to orchestrate re-computations of partition hierarchies. As mentioned before, there will always be a valid partition hierarchy. However, for some zones, re-computations might be underway. The algorithm will always manage to complete the re-computation of a partition hierarchy for a zone of level $\ell$, before any of the subzones of level $(\ell-1)$ of the new partition reaches its warning limit. Therefore, for each zone within the data structure, there is always at most one partition hierarchy under construction, and therefore each zone has at any time at most two partition records.
If a zone actually has two partition records, one of them contains a usable partition hierarchy. We formalise usability of a partition hierarchy by the term \emph{operable} and tie the whole data structure together  through the following notion of zone records. It is defined in an inductive fashion, together with the concept of partition records.

\begin{defi}
	A \emph{zone record} of level $3$ is a 3-zone.
	A \emph{zone record} of level $\ell>3$ consists of an $\ell$-zone $S$ and up to two partition records $P_1,P_2$ of level $\ell$ for $S$. If it has two partition records then $P_1$ is complete and $P_2$ is incomplete.
	
	A \emph{partition record} $(Z,M)$ of level $\ell>3$ for an $\ell$-zone $S$ consists of a set $Z$ of zone records of level $\ell-1$ and a set $M$ of zones, such that the zones from $Z$ and the zones from $M$ together constitute a partition of $S$.
	 A partition record $Z$  of level $\ell$ is \emph{valid}, if all zones of its zone records are actual $(\ell-1)$-zones.
	
	A zone record of level $3$ is  \emph{operable}.
	
	A partition record at level $\ell>3$ is \emph{operable}, if it is valid and all its zone records are operable.
    A zone record of level $\ell>3$ is \emph{operable}, if its first partition record is operable. 
  \end{defi}
  We refer to the hierarchical structure constituting the overall zone record as the \emph{extended partition hierarchy}.
Within the extended partition hierarchy, we are particularly interested in ``operable substructures''.   
To this end, we associate  with an operable zone record, the \emph{primary partition hierarchy} that results from recursively picking the operable partition record from each zone record. 

Altogether, the algorithm maintains an extended partition hierarchy for $t$.

Before we describe how the algorithm stores the extended partition hierarchy, we need two more concepts.
For each zone record $R$ of a level $\ell$ there is a sequence $R_h,\ldots,R_\ell=R$ of zone records such that, for each $i\ge \ell$, $R_i$ is a zone record that occurs in a partition record of $R_{i+1}$. This sequence can be viewed as the \emph{address} of $R$ in the extended partition hierarchy. Furthermore, this address induces a \emph{finger print} for $R$: the sequence $\status(R_h),\ldots,\status(R_\ell)$, where $\status(R_i)$ is either \emph{operable} or \emph{in progress}. It is a simple but useful observation that if a tree node $v$ occurs in two zones with zone records $R\not=R'$ within the extended partition hierarchy, then the finger prints of $R$ and $R'$ are different. Consequently a tree node occurs in at most $2^h$ and thus, a constant number of zones in the extended partition hierarchy.

Now we can describe, how the algorithm stores $t$ and the extended partition hierarchy. 
\begin{itemize}
	\item A zone record of level $3$ is represented as an array of \ntheta nodes.
	\item A zone record of a level $\ell>3$ consists of up to four boundary nodes and up to two pointers to partition records. The operable partition record is flagged.
        \item Each  zone record of level $\ell\geq 3$ with finger print $pa$, also stores a pointer to its zone on level $\ell+1$ with finger print $p$, and three pointers to the zone records of its parent, first child and right sibling zones. 
	\item A partition record $(Z,M)$ is represented as an array of zone records (some of which may be zones of $M$). The zones records from $Z$ are flagged.
        \item The nodes of $t$ are stored in an array (in no particular order) together with pointers for the functions $\parent$, $\lsibling$, $\rsibling$, $\fchild$, and $\lchild$.
	\item For each node $v$, and each possible finger print $p$,   a pointer $Z^p(v)$ to its zone with finger print $p$.
\end{itemize}

Now we are prepared to outline the proof of \autoref{theorem-struct-changes}.

\begin{proofof}{\autoref{theorem-struct-changes}}
  Let $\calB$ be a DTA for the regular tree language $L$ and let
  $\theta=\frac{\epsilon}{7}$.
  The dynamic algorithm stores $t$ and an extended partition hierarchy as described above. It also stores some additional function values, including values for the automata functions, that will be specified in  \autoref{subsubsec:treefunctions}.

  Some functions are independent from zones and are stored for all nodes. Some other functions are  independent from zones but are only stored for particular node tuples  that are induced from zones (like it was already the case for the automata functions in \autoref{subsection:labelchanges-efficient}) and some functions are actually defined for (tuples of) zones. 
  
  After each change operation, the algorithm updates function values, pursues re-computations of hierarchies and computes function values that are needed for newly established zones. It starts a re-computation for a zone $S$, whenever one of its subzones reaches its warning limit. It starts a re-computation of the overall zone, whenever the number of nodes of $t$ reaches $\frac{1}{2}n$.

  The algorithm has one thread for each zone with an ongoing re-computation, that is, for each zone whose zone record is not yet operable.

  A re-computation for a zone at level $\ell$ requires the computation of $\ntheta$ zones of level $\ell-1$, each of which yields re-computations of \ntheta zones of   level $\ell-2$ and so forth, down to level 3. It is easy to see that the overall number of zones that needs to be computed during a re-computation of a zone at level $\ell$ is bounded by $\ntheta[(\ell-3)]$. The re-computation of the overall zone requires the computation of at most $\ntheta[1-3]$ zones. We show in \autoref{lemma-partition-efficient} that, in the presence of a primary partition hierarchy for the overall zone, the computation of a new zone is possible in constant time with work $\ntheta[6]$.

  The thread for the re-computation of a zone at level $\ell$ thus (first) consists of $\ntheta[(\ell-3)]$ computations of  component zones, each of which is carried out in constant time with work  $\ntheta[6]$. We refer to such a re-computation as a round. A thread thus consists of $\ntheta[(\ell-3)]$ rounds of zone computations. The thread follows a breadth-first strategy, that is, it first computes all zones of level $\ell-1$ then the sub-zones of those zones at level $\ell-2$ and so forth. Once the zone record of a zone $S$ is operable, the thread computes in its second phase  all function values associated to $S$. 
  This can be done in constant time with work $\ntheta[7]$ per sub-zone of $S$, as is shown in the \myappendix. That is, it requires at most $\ntheta[(\ell-3)]$ additional rounds. 

  We note that it does not matter if the primary partition hierarchy $H$ required for \autoref{lemma-partition-efficient} changes during the computation of a thread, since $H$ is only used to make the identification of a new zone more efficient.
   
  To address the above mentioned challenge, the algorithm starts a separate thread for each zone that is newly created during this process. That is, for each zone at level $\ell-1$, an additional re-computation thread is started, as soon as the zone is created.

   Now we can state the orchestration strategy for re-computations. This strategy is actually very simple:
   \begin{quotation}
   \textbf{Re-computation strategy:} 
After each change operation affecting some node $v$, the algorithm performs one computation round, for all threads of zones $S$, at any level,  with $v\in S$.
\end{quotation}
That is, thanks to the above observation, after a change operation, there  are at most $2^h$ threads for which one computation round is performed. Since $2^h$ is a constant, these computations together require work at most $\ntheta[7]$.

On the other hand, the whole re-computation for a zone $S$ at level $\ell$, including the computation of the relevant function values, is finished after at most $\ntheta[(\ell-3)]$ change operations that affect $S$. Since $\frac{1}{2}n^{(\ell-1)\theta}$ leaf additions are needed to let a sub-zone $S'$ grow from the warning limit  $\frac{1}{2}n^{(\ell-1)\theta}$ to the hard limit $n^{(\ell-1)\theta}$, it is guaranteed that the re-computation thread for $S$ is completed, before $S'$ grows too large. In fact, this is exactly, why  partition hierarchies are 3-pruned. When a re-computation of the overall zone was triggered by the size of $t$, $n$ is doubled as soon as this re-computation is completed.

Thanks to \autoref{lemma-compute-tree-functions} the overall work to update the stored function values for all affected zones (in constant time) after a change operation is $\ntheta[3]$.

Altogether, the statement of the theorem follows by choosing $\theta=\frac{\epsilon}{7}$. 
      \end{proofof}

      We state the lemma about the computation of new zones next. The partition hierarchy is used as a means to assign evenly distributed nodes to processors and to do parallel search for nodes with a particular property  regarding the number of their descendants.

\begin{restatable}{lemma}{LemmaPartitionEfficient}
\label{lemma-partition-efficient}
 Given a tree $t$, a $3$-pruned partition hierarchy $H$ of $t$, and a zone $S$ with at least $m$ nodes, $S$ can be partitioned into at most five zones, one of which has at least $\frac{1}{2}m$ and at most $m$ nodes, in constant time with work $\bigO(n^{6\theta})$.
\end{restatable}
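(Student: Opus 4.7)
The plan is to emulate the sequential argument of \autoref{lemma:pre-partition-step}---which walks down a heavy path in $S$ and either returns a subtree of suitable size, or a contiguous sibling range at a node whose child-subtrees are each too small---and to replace its sequential descent by a constant-depth parallel search guided by $H$. Since in either outcome the resulting sub-zone is described by $\bigO(1)$ boundary nodes, once these boundary nodes have been located the up-to-five pieces of the partition of $S$ are assembled in $\bigO(1)$ work.

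As a preparation, I would compute, for every zone $Z$ of $H$ that meets $S$, the quantity $|Z \cap S|$. Because $H$ has constant depth $h$ and every zone of $H$ either lies completely inside $S$ or contains one of the $\bigO(1)$ boundary nodes of $S$, only $\bigO(h) = \bigO(1)$ ``boundary zones'' per level of $H$ need to be handled specially---by inspecting their level-$3$ constituents at $\bigO(n^{3\theta})$ work each---while for all interior zones the aggregate size $|Z|$ stored in the zone record of $H$ can be reused directly. Aggregating bottom-up through the constantly many levels then yields all required counts in $\bigO(n^{3\theta})$ total work.

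Next I would perform a constant-depth top-down descent through $H$ to locate a level-$3$ zone $Z^\star$ that contains the sought split point. At each of the $h$ levels the descent examines the $\bigO(n^\theta)$ child zones of the current level-$i$ zone and uses a common-CRCW prefix sum on the counts $|Z \cap S|$ to pinpoint the unique child zone in which the cumulative count first crosses $\frac{1}{2}m$; this costs $\bigO(n^\theta)$ work per level and hence $\bigO(n^\theta)$ overall.

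Finally, inside $Z^\star$---which contains only $\bigO(n^{3\theta})$ nodes---I would do an exhaustive parallel search: for every pair of nodes $(u,v)$ in $Z^\star$ the value $|\tree{v} \cap S|$ is computed in $\bigO(1)$ by combining ancestor/descendant tests inside $Z^\star$ with the previously aggregated counts for the $H$-zones that hang off $v$ outside $Z^\star$. Since $|Z^\star|^2 = \bigO(n^{6\theta})$ this fits the claimed work bound, and the split node together with the contiguous sibling range (when needed) is extracted in $\bigO(1)$ time by the common-CRCW constant-time minimum/maximum on an array of size $\bigO(n^{6\theta})$ already invoked elsewhere in the paper. The main obstacle I expect is the case analysis: $S$ and $Z^\star$ can each independently be a tree or non-tree zone and complete or incomplete, so both the aggregation of $|Z \cap S|$ and the local search in $Z^\star$ must be adapted to each combination; this is the parallel counterpart of the case distinction already carried out in the proof of \autoref{lemma:pre-partition-step}.
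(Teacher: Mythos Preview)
Your high-level plan---descend through $H$ to a level-$3$ zone, then brute-force inside it---matches the paper, and the $n^{6\theta}$ bottleneck from examining all pairs of nodes in a level-$3$ zone is the same. But two steps do not work as you describe them.

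First, the descent cannot be done by a ``prefix sum on the counts $|Z\cap S|$'' over the sub-zones. The sub-zones of a level-$i$ zone are not linearly ordered in any way that would make a cumulative count correspond to the heavy-path descent of \autoref{lemma:pre-partition-step}, and constant-time prefix sums over $n^\theta$ values are not available anyway. What is actually needed at each level is to pick a sub-zone $Z'$ with $|\hat{Z'}\cap S|>\tfrac12 m$---note the paper works with the \emph{completion} $\hat Z$, since that is what counts $S$-descendants in the sense of \autoref{lemma:pre-partition-step}---and in fact one that additionally has no heavy \emph{child zone at the same level}. This is a parallel search over $\bigO(n^{2\theta})$ candidates, not a prefix sum.

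Second, the paper does not precompute $|Z\cap S|$ by bottom-up aggregation. It relies on the maintained function $\numdesc(u,v)=|\tree[u]{v}|$ (established in \autoref{subsubsec:treefunctions}) and computes each needed $|\hat Z\cap S|$ on demand in $\bigO(1)$ work from $\numdesc$ evaluated at the boundary nodes of $Z$ and $S$, via a three-case analysis ($Z\subseteq\hat S$, or $S\subseteq\hat Z$, or $Z$ and $S$ share an interval of top nodes). Your aggregation sketch might be salvageable, but the claim that only $\bigO(1)$ boundary zones per level need special treatment and that each is handled ``by inspecting their level-$3$ constituents at $\bigO(n^{3\theta})$ work'' is unjustified: a boundary zone at level $\ell$ contains up to $n^{(\ell-3)\theta}$ level-$3$ zones, not $\bigO(1)$.
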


\subsubsection{Maintaining functions}\label{subsubsec:treefunctions}

In \autoref{subsection:labelchanges-efficient},  the tree functions were static and given by the initialisation. Only the automata functions needed to be updated. However, if leaf insertions are allowed, the tree functions can change. To keep the algorithm efficient, the  special pairs need to be adapted to the evolution of the partition hierarchy, and tree functions can no longer be stored for all possible arguments. Furthermore, additional tree functions and functions defined for zones will be used.

The stored information suffices to compute all required functions in constant time, and almost all of them with constant work.

\begin{restatable}{lemma}{LemmaTreeFunctions}
\label{lemma-compute-from-stored-tree-functions}
Given a tree $t$, a $3$-pruned partition hierarchy $H$ of $t$, and the stored information as described above, for each $\theta>0$, the $\child$ function can be evaluated in $\bigO(1)$ time with work $\ntheta$. All other functions  can be evaluated for all tuples with constant work.
\end{restatable}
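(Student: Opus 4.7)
The plan is to partition the functions into three classes: (i) the five pointer-stored tree functions $\parent$, $\lsibling$, $\rsibling$, $\fchild$, $\lchild$ together with the per-node finger-print pointers $Z^p(v)$, (ii) the $\child$ function, and (iii) all remaining derived functions such as $\anc$, $\lca$, $\ancchild$, $\ancindex$, $\cindex$, $\nchildren$, $\nsiblings$, $\numdesc$, and $\maxanc$. Class (i) is by assumption stored as explicit array entries per node, so a single memory access answers each query in constant time and constant work; the $Z^p(v)$ pointers are likewise direct lookups indexed by a constant-size finger print.

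For class (ii), evaluating $\child(u,i)$ amounts to locating the $i$-th node in the horizontal child sequence of $u$. The key observation is that in the $3$-pruned partition hierarchy this sequence is split, at every level, into at most $\bigO(n^\theta)$ consecutive segments, each contained in a single component zone of that level, because being a child of $u$ is a horizontal property and each zone of level $\ell$ contains at most $\bigO(n^\theta)$ sub-zones of level $\ell-1$. I would assign one processor per such sub-zone and, using the per-zone aggregate storing how many children of $u$ that sub-zone contains, compute prefix sums and then invoke the common-CRCW minimum in $\bigO(1)$ time to identify the unique sub-zone that holds the $i$-th child. Recursing through the constant-depth hierarchy until the level-$3$ zone containing the target is reached, where a final array access returns the node, gives total time $\bigO(1)$ and total work $\bigO(n^\theta)$, as claimed.

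For class (iii), I would argue that each function depends only on a constant amount of cached information. Since a node $v$ lies in at most $2^h = \bigO(1)$ zone records of the extended partition hierarchy, and each zone record stores aggregated values of exactly the kinds listed in \autoref{subsubsec:treefunctions} (descendant counts, vertical and horizontal boundary information, extremal ancestor/sibling markers, etc.), any query on a single node or pair of nodes can be resolved by combining $\bigO(1)$ cached values from these $\bigO(1)$ zone records and by walking the constant-length chain of parent-zone pointers (this last is what makes $\lca$ work: climb to the common enclosing zone and read off the stored LCA for the two boundaries). Each such combination is a constant-size computation, so constant work in constant time suffices.

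The main obstacle will be case (iii), specifically verifying that the aggregates chosen in \autoref{subsubsec:treefunctions} are in fact rich enough to realise every listed function and that the combination rules are uniform across the four zone types (complete/incomplete, tree/non-tree). I expect this to be a routine but lengthy case analysis, delegated function by function to the \myappendix, and to rest ultimately on the observation that each function is determined by a constant amount of local structural data at its arguments, all of which is already cached.
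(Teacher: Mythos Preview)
Your high-level plan is sound, but there is a concrete gap in your treatment of $\child$. You assume a ``per-zone aggregate storing how many children of $u$ that sub-zone contains'' and then recurse top-down through the hierarchy. No such per-node aggregate is stored: the only relevant counts are $\ntops(Z)$ and $\nzleft(Z)$, which are \emph{per zone}, not per (node, zone) pair. They tell you how many top nodes a zone has and how many of their siblings lie to its left --- information that is useful for locating the $k$-th child of $u$ only when $u$ happens to be the parent of those top nodes, i.e., when $u$ is the lower boundary of the enclosing zone. The paper's proof therefore hinges on a case distinction you omit: either $v$ is an inner node of $Z^3(v)$, in which case \emph{all} of its children lie in $Z^3(v)$ and a direct scan of that level-3 zone suffices; or $v=\zlower(Z^\ell(v))$ for some maximal $\ell$, in which case the children of $v$ are exactly the top nodes of the child zones of $Z^\ell(v)$, and now $\nzleft$ and $\ntops$ locate the right sub-zone at level $\ell$, after which one descends level by level. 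Your top-down recursion from level $h$ does not have the aggregates to make the first few steps.

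For class (iii) your outline is in the right spirit but stops short of the actual argument. Two points deserve attention. First, for $\lca(u,v)$ you do not ``read off the stored LCA for the two boundaries''; rather, you find the lowest level $j$ with $Z^j(u)=Z^j(v)$, look up the \emph{zone}-level value $\lca(Z^{j-1}(u),Z^{j-1}(v))$, and return its lower boundary node. The reductions for $\anc$, $\prec$, and $\ancchild$ follow the same pattern and additionally need the two auxiliary constant-work tests the paper isolates: whether $u$ is an ancestor of $\zlower(Z^\ell(u))$, and whether $u$ is a top node of $Z^\ell(u)$. Second, $\numdesc(u,v)$ is not a single cached lookup but a telescoping sum across all levels, assembled from $\numdesc_3$, $\zsize$, and $\nzdesc$; this is the most intricate case and your proposal does not touch it. None of this is deep, but it is exactly the ``routine but lengthy case analysis'' you defer, and the lemma is not proved without it.
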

      
Furthermore,  all stored information can be efficiently updated, with the help of and in accordance with the current primary partition hierarchy.

\begin{restatable}{lemma}{LemmaUpdateTreeFunctions}
\label{lemma-compute-tree-functions}
	Let $\theta>0$ and $H$ be a $3$-pruned partition hierarchy of $t$ with automata and tree functions. The stored information described above can be maintained after each $\Relabel$ and $\AddChild$ operation in constant time with $\ntheta[6]$ work per operation.
\end{restatable}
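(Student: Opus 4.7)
The plan is to treat the $\Relabel$ and $\AddChild$ operations within a common template: using the stored pointers $Z^p(v)$, first locate in constant work the at most $2^h$ zones of the extended partition hierarchy that contain the affected node; then, in parallel over all these zones, refresh every piece of stored data attached to them. Because the partition hierarchy is $3$-pruned, the dominant cost sits at the lowest level, where a level-$3$ zone contains up to $\ntheta[3]$ nodes and automata function values are stored for \emph{every} relevant pair of nodes of the zone, giving $\ntheta[6]$ stored values to refresh. All higher-level zones contribute only $\ntheta[2]$ special pairs each (their special pairs are determined locally from the boundary nodes of their sub-zones, which the leaf addition does not displace), so they are comfortably absorbed by the same bound.

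First I would handle $\Relabel(t,v,\sigma)$. Because the tree shape is unchanged, the basic tree pointers, the zone records and partition records, the $Z^p$ pointers, and the set of special pairs of each zone all remain identical. For each of the $\le 2^h$ zones $S$ containing $v$, I would allocate one processor per stored automata value in $S$ and recompute it via \autoref{lemma-update-tuple}, whose primitive calls to $\calA_t$ and $\calB_t$ inside $S$ run in constant time by \autoref{lemma-compute-from-stored}. Aggregated over the bottleneck (the level-$3$ zone containing $v$), this yields $\ntheta[6]$ work and constant parallel time.

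Next I would handle $\AddChild(t,u,v,\sigma)$. The algorithm first updates the local basic pointers in constant work: $\parent(v){=}u$, the new node $v$ is spliced in as the right sibling of the previous last child of $u$, $\fchild(v)$ and $\lchild(v)$ are left undefined, and $\lchild(u)$ (and, if $u$ had no children, $\fchild(u)$) is set to $v$. The node $v$ is then inserted into the level-$3$ zone hosting $u$ by writing it into a free slot of that zone's array; the $2^h$ pointers $Z^p(v)$ are set by copying the corresponding entries from $u$; and the recorded sizes of all zones containing $v$ are incremented. Whenever such a sub-zone just reaches its warning limit, a new re-computation thread is spawned as an $O(1)$ bookkeeping step (its subsequent execution is accounted for in the proof of \autoref{theorem-struct-changes}, not here). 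Finally, the automata values of all affected zones are refreshed exactly as in the $\Relabel$ case, again with $\ntheta[6]$ work in constant parallel time.

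The main obstacle I expect is the bookkeeping of \emph{which} special pairs change when the new leaf is appended. Since $v$ becomes the new rightmost child of $u$, the previous rightmost child is no longer a right boundary of the horizontal sub-forest at $u$ inside each affected zone, so a fresh batch of horizontal special pairs with right endpoint $v$ must be introduced; at the vertical level, $v$ may become a lower boundary of some incomplete zone along the path from $u$ towards the root inside its host zone, slightly reshaping the special vertical pairs via $\lca$ and $\ancchild$. All these combinatorial adjustments stay confined to the at most $2^h$ zones containing $v$, and involve only pairs that already contribute to the $\ntheta[6]$ bound at level $3$ or to the $\ntheta[2]$ bound at higher levels. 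Enumerating them in parallel is straightforward once one processor is allocated per candidate pair of indices in the corresponding zone array, so the overall maintenance remains within $O(1)$ time and $\ntheta[6]$ work.
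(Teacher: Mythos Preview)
Your proposal handles the automata functions correctly and lands on the right $\ntheta[6]$ bottleneck for the level-$3$ zone, but it largely skips the part the paper's proof actually spends its effort on: the tree-structure functions stored for \emph{pairs of nodes} in the same level-$3$ zone, namely $\anc$, $\ancchild$, $\lca$, $\prec$, and especially $\numdesc_3$. After $\AddChild$, $\numdesc_3(u',v')$ must be incremented for every pair $(u',v')$ in $Z^3(v)$ with $u'\preceq \ancchild(\parent(u'),v)\preceq v'$; there are $\ntheta[6]$ such pairs, and this is in fact where the paper locates the $\ntheta[6]$ bound. You also do not touch the per-zone counters $\ntops$, $\nzleft$, $\nzdesc$ or the zone-pair functions $\zsize$, $\lca$, $\ancchild$, $\anc$, $\prec$, each of which the paper checks separately (and each of which fits comfortably below the bound). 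Your phrase ``the recorded sizes of all zones containing $v$ are incremented'' gestures at this but does not cover, e.g., $\nzdesc(Z)$ for ancestor zones $Z$ of $Z^\ell(v)$ inside $Z^{\ell+1}(v)$, or the $\ntheta[2]$ many $\zsize$ tuples per level that need incrementing.

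Your discussion of special pairs after $\AddChild$ is also off. At levels $\ell>3$, special pairs are determined by the boundary nodes of the \emph{sub-zones} in the partition, not by rightmost children; since inserting a leaf does not alter the partition structure, those boundary nodes and hence the set of special pairs at higher levels are unchanged. Moreover, a fresh leaf $v$ cannot become a lower boundary (it has no children), so no vertical special pairs are reshaped via $\lca$ or $\ancchild$ in the way you describe. The only place new pairs appear is at level $3$, where every pair involving $v$ becomes special by definition; enumerating those is $\ntheta[3]$ work, well inside the budget. Conversely, for $\Relabel$ the paper simply observes that the tree functions are untouched; your treatment of the automata side there is fine and complements the paper's account.
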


\section{Maintaining context-free languages}\label{section:contextfree}
As mentioned in the introduction, an analysis  of the dynamic program that was used in \cite{GeladeMS12}
to show that context-free languages can be maintained in \DynFO
yields the following result.
\begin{theorem}[{\cite[Proposition 5.3]{GeladeMS12}}]\label{theo:cfl}
   For each   context-free language $L$, there is a dynamic constant-time parallel
   algorithm on a
    CRCW PRAM for $\CFL(L)$ with $\bigO(n^7)$ work.
  \end{theorem}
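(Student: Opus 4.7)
The plan is to verify that the $\DynFO$ construction for context-free languages due to Gelade, Marquardt and Schwentick can be implemented on a common CRCW PRAM in constant parallel time with work $\bigO(n^7)$. Fix a grammar $G=(N,\Sigma,P,S)$ in Chomsky normal form for $L$. The dynamic algorithm maintains a CYK-style table $D[i,j,A]$ of size $\bigO(n^2)$ with $D[i,j,A]=1$ iff $A\Rightarrow^*_G w[i,j]$, and a context table $E[i_1,i_2,i_3,i_4,A,B]$ of size $\bigO(n^4)$ with $E=1$ iff $A\Rightarrow^*_G w[i_1,i_2]\cdot B\cdot w[i_3,i_4]$ as a sentential form, treating empty sub-ranges as $\lambda$. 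The query $\Query(w,i,j)$ reads the single bit $D[i,j,S]$ in constant work.

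The algorithmic core is to recompute $D$ and $E$ after a single-symbol change at position $\ell$ in constant parallel time. I would first update $D$ via
\[
D_{\text{new}}[i,j,A]=\bigvee_{B\in N}\; E[i,\ell-1,\ell+1,j,A,B]\wedge\bigl[B\Rightarrow^*_G w_{\text{new}}[\ell]\bigr],
\]
where the right-hand side consults only $E$-entries whose two substring slots both avoid $\ell$ and are therefore unchanged by the update; this contributes $\bigO(n^2)$ work. For $E$, entries whose four substring slots all avoid $\ell$ are copied unchanged; the remaining entries are recomputed by a first-order formula that decomposes the new derivation at the deepest nonterminal ancestor of $\ell$ inside the affected substring. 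In Chomsky normal form this ancestor expands in a single step into the terminal $w_{\text{new}}[\ell]$, so the update formula is an existential over at most three position variables (to locate the surrounding split points and the position of the hole) together with one intermediate nonterminal. This gives $\bigO(n^3)$ work per $E$-entry and $\bigO(n^4\cdot n^3)=\bigO(n^7)$ work overall. On a common CRCW PRAM every disjunction over $\bigO(n^3)$ alternatives is realised in constant time by concurrent writes. Insertions are handled by index shifts and by initialising the new row and column of the tables through the same formulas; the reallocation incurred when $n$ doubles is absorbed in the regime of \autoref{section:preliminaries}.

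The main obstacle is to keep the amount of auxiliary information finite: a naive implementation of the $E$-update seems to require a two-hole context relation, whose update in turn would require a three-hole relation, and so on. I would break this apparent tower by exploiting the Chomsky-normal-form property that the immediate parent of any terminal leaf in a derivation tree is a nonterminal that expands in one step to exactly that one terminal; this lets the decomposition at the deepest ancestor of $\ell$ be expressed using only $D$ and $E$ themselves, combined with a one-symbol derivability check and a bounded existential quantification over the parent nonterminal. Checking that the resulting formulas are correct and can be evaluated within the claimed parallel work is the technical heart of the argument.
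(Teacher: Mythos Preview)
Your plan is the grammar-based formulation of the construction from \cite{GeladeMS12} that the paper merely cites; the paper itself does not reprove the theorem and only offers a PDA-flavoured glimpse in the appendix (a $4$-ary relation on pairs of substrings, $\bigO(n^3)$ work per tuple). Your relations $D$ and $E$ are the natural grammar counterparts, and the arithmetic $\bigO(n^4)\cdot\bigO(n^3)=\bigO(n^7)$ is the intended one.

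The gap is in your $E$-update. Decomposing at the \emph{immediate} parent $C$ of the terminal leaf at position $\ell$ --- your ``deepest nonterminal ancestor of $\ell$ inside the affected substring'' --- leaves the residual constraint $A\Rightarrow^* w[i_1,\ell{-}1]\,C\,w[\ell{+}1,i_2]\,B\,w[i_3,i_4]$, a genuine two-hole context that is neither a $D$- nor an $E$-entry; the CNF fact that $C$ rewrites in one step to a terminal does not by itself dissolve this. What actually breaks the tower is splitting at the \emph{lowest node $X$ on the $A$--$B$ spine} that lies above $\ell$. Writing $X\to YZ$ with $Y$ off-spine of yield $w[a,b]\ni w[\ell]$ and $Z$ on-spine, one gets (for $\ell\in[i_1,i_2]$; the case $\ell\in[i_3,i_4]$ is symmetric)
\[
E_{\text{new}}[i_1,i_2,i_3,i_4,A,B]=\bigvee_{a,b,c}\ \bigvee_{X\to YZ}\ E_{\text{old}}[i_1,a{-}1,c{+}1,i_4,A,X]\wedge D_{\text{new}}[a,b,Y]\wedge E_{\text{old}}[b{+}1,i_2,i_3,c,Z,B].
\]
Both $E_{\text{old}}$-factors avoid $\ell$ by construction. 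This is where your three existential positions $a,b,c$ really come from, and you need the intermediate rule $X\to YZ$, not just the single nonterminal $C$. The CNF terminal-parent property is what makes $D_{\text{new}}$ cheap; it is the spine split at $X$ that keeps the number of holes from growing.
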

  There is a huge gap between this upper bound
and the conditional lower bound of
$\bigO(n^{\omega-1-\epsilon})$, for any $\epsilon>0$, derived from  the
$k$-Clique conjecture \cite{AbboudBackurs+2018}, where
$\omega<2.373$ \cite{SchmidtSTVZ21}.
Our attempts to make this gap significantly smaller, have not
been successful yet. However, for realtime deterministic context-free
languages and visibly pushdown languages, more efficient dynamic
algorithms are possible, as shown in the following two
subsections. 

\subsection{Deterministic context-free languages}\label{subsec:dcfl}
 Realtime deterministic context-free languages are decided by deterministic PDAs without $\lambda$-transitions (RDPDAs).

 \begin{restatable}{theorem}{TheoDCFL}\label{theo:dcfl}
   For each realtime deterministic context-free language $L$ and
   each $\epsilon>0$, there is a dynamic constant-time parallel
   algorithm on a
    CRCW PRAM for $\CFL(L)$ with $\bigO(n^{3+\epsilon})$ work.
  \end{restatable}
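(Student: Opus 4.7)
The plan is to generalise the technique of \cite{SchmidtSTVZ21} from regular string languages to the realtime DPDA setting. Fix an RDPDA $\calP=(Q,\Sigma,\Gamma,\delta,q_0,Z_0,F)$ for $L$. For each substring $w[i,j]$ and starting state $p\in Q$, I would define a summary $\sigma(i,j,p)$ that describes the unique run of $\calP$ on $w[i,j]$ starting in state $p$ above a formal bottom-of-stack marker $\bot$. The summary distinguishes two cases: either (a) the run finishes $w[i,j]$ without ever popping $\bot$, in which case one stores the final state together with an implicit description of the sequence pushed on top of $\bot$; or (b) the run first pops $\bot$ at some position $k\le j$, in which case one stores $k$ and the state right after the pop. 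Queries $\Query(w,i,j)$ are then answered in constant time by inspecting $\sigma(i,j,q_0)$ and checking acceptance.

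The key operation is the composition of two adjacent summaries $\sigma(i,m,\cdot)$ and $\sigma(m+1,j,\cdot)$ into $\sigma(i,j,\cdot)$. Naively this requires simulating the second summary against the pushed-stack sequence produced by the first, whose length can be $\Theta(n)$; whenever the second summary pops into this sequence, one must jump to the corresponding push event inside $w[i,m]$. I would represent the pushed-stack description of every substring hierarchically, using an interval analogue of the partition hierarchy of Section~\ref{section:treelanguages}, so that this pop-matching traversal can be carried out in constant parallel time with work $\bigO(n^{1+\epsilon})$ per composition.

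For the update, the algorithm maintains $\sigma(i,j,\cdot)$ for all $\bigO(n^2)$ intervals. After a $\Relabel$ or an $\InsertPositionBefore$/$\InsertPositionAfter$ operation at position $\ell$, only intervals containing $\ell$ are affected, of which there are $\bigO(n^2)$. Each is recomputed in constant time by composing the two summaries of intervals meeting at $\ell$, at a cost of $\bigO(n^{1+\epsilon})$ work. Summed over all affected intervals, the total work per change operation amounts to $\bigO(n^{3+\epsilon})$, as claimed.

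The main obstacle, I expect, is engineering the implicit pushed-stack representation and the accompanying pop-matching mechanism so that composition fits within the $\bigO(n^{1+\epsilon})$ budget. This is precisely where the introduction's remark ``to carefully apply the technique from \cite{SchmidtSTVZ21}'' comes into play: the parameters controlling the traversal through the pushed stack, which naively range over $n$ values, must be restricted to $\bigO(n^\epsilon)$ special values via the hierarchy while still allowing constant-time reconstruction of the full range. I anticipate that an orchestration of re-computations analogous to the one in Section~\ref{subsubsection:structuralchanges-high-level} is needed to keep the hierarchical pushed-stack representation consistent across the $\bigO(n^2)$ stored summaries as insertions accumulate.
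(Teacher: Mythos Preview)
Your high-level intuition is right (determinism makes summaries functions rather than relations; one parameter should be restricted to \neps special values), but the architecture you sketch has a genuine gap in the composition step.

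With only the two-position summaries $\sigma(i,j,\cdot)$, composing $\sigma(i,\ell-1,\cdot)$ and $\sigma(\ell+1,j,\cdot)$ means simulating the second-half run against the stack $S$ pushed by the first half. Each time the second run pops into $S$ you restart from a new position with a new top symbol, and this chain can have length $\Theta(n)$. That traversal is inherently sequential: step $t$ depends on the state reached at step $t-1$ \emph{and} on the $(h-t)$-th symbol of $S$, which is specific to $(i,\ell-1,p)$. Collapsing it to constant depth requires precomputed ``jump-by-$k$'' information that simultaneously knows the top $k$ symbols of the particular stack \emph{and} how the particular right substring consumes them. No hierarchy built only from the $\sigma$'s supplies this, because the needed information is genuinely indexed by \emph{two} intervals plus the depth parameter.

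This is exactly what the paper maintains explicitly: besides $\dall(C)$ (your $\sigma$) and $\pushpos(C,k)$, it stores $\poppos(C,q,v,k)$, which for a stack-producing configuration $C=(p,w[i,j],\tau)$, a stack-consuming suffix $v=w[m,n]$, and a depth $k$ returns the position and state at which the run on $v$ has consumed the top $k$ symbols of $\dstack(C)$. This is a four-parameter object (three position indices $i,j,m$ and the depth $k$); the ``special values'' trick is applied to $k$ via its $n^\theta$-adic digits, giving $\bigO(n^{3+\theta})$ stored tuples. The work bound then comes from updating $\poppos$ itself---each tuple in $\bigO(n^{2\theta})$ work---not from composing two-position summaries. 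Your accounting ($\bigO(n^2)$ intervals times $\bigO(n^{1+\epsilon})$ per composition) hides the cost of whatever auxiliary data would make the composition constant-depth; once you write that data down, you are back to the paper's $\poppos$ and its $\bigO(n^{3+\theta})$ tuples.

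Two smaller points. The tree partition hierarchy of \autoref{section:treelanguages} is not what is used here; the paper's mechanism for $k$ is a plain $n^\theta$-adic decomposition of a single integer, not a structural hierarchy over the stack. And insertions do not need the re-computation orchestration you anticipate: the paper simply shifts all stored indices by one (work proportional to the data structure size, $\bigO(n^{3+\epsilon})$) and then treats the insertion as a relabel.
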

  Given an RDPDA $\calA$ for $L$,
  a configuration $C=(p,u,s)$ consists of a state $p$, a string $u$ that
is supposed to be read by $\calA$ and a string $s$, the initial stack
content.
We use the following functions $\dstate$, $\dstack$, and $\dempty$ to describe the
behaviour of $\calA$ on configurations.
\begin{itemize}
    \item $\dstate(C)$ yields the last state of $\run(C)$.
    \item $\dstack(C)$ yields the stack content at the end of  $\run(C)$.
         \item $\dempty(C)$ is the position in $u$, after which $\run(C)$ empties its stack. It is zero, if this does not
           happen at all.
         \end{itemize}
The algorithm maintains the following information, for each
simple configuration $C=(p,u,\tau)$, where $u=w[i,j]$, for some $i\le
j$, for
each suffix $v=w[m,n]$ of $w$, where $j<m$, each state $q$, and some $k\le n$.
\begin{itemize}
\item $\dall(C)$ defined as the tuple $(\dstate(C), |\dstack(C)|,
  \topk[1](\dstack(C)) ,  \dempty(C),)$, consisting of  the
  state,   the height of the stack, the top symbol of the stack,  
 at the end of the run on
  $C$ and the position where the run ends. If the run empties the
  stack prematurely or at the end of $u$, then $\topk[1](\dstack(C))$
  is undefined;
\item $\pushpos(C,k)$, defined as the length of the longest prefix
  $x$ of $u$, such that
  $|\dstack(p,x,\tau)|=k$. Informally this is the
  position of $u$ at which the $k$-th symbol of $\dstack(C)$, counted
  from the bottom,  is
  written;
  \item $\poppos(C,q,v,k)$, defined as the pair $(o,r)$, where $o$ is
    the length of the prefix
  $v'$ of $v$, for which $\run(q,v, \topk(\dstack(C)))$ empties its stack at
        the last symbol of $v'$,  and $r$ is the state it enters.
\end{itemize}
However, tuples for \pushpos and \poppos are only stored for values $k$ of the form
$an^{b\theta}$, for integers $b<\frac{1}{\theta}$ and $a\le n^\theta$,
for some fixed $\theta>0$.
A more detailed account is given in the \myappendix.

\subsection{Visibly pushdown languages}\label{subsec:vpl}

Visibly pushdown languages are a subclass of realtime deterministic
CFLs. They use \emph{pushdown alphabets} of the form
$\tilde{\Sigma}=(\Sigma_c,\Sigma_r, \Sigma_{\text{int}})$ and
deterministic PDA that always push a symbol when reading a symbol from
$\Sigma_c$, pop a symbol when reading a symbol from
$\Sigma_r$ and leave the stack unchanged otherwise. We refer to
\cite{AlurM04} for more information. 

There is a correspondence between wellformed strings over a pushdown
alphabet and labelled trees, where each matching  pair $(a,b)$ of a call symbol from
$\Sigma_c$ and a return symbol from $\Sigma_r$ is represented by an
inner node with label $(a,b)$ and each other symbol by a leaf. 
From \autoref{theorem-struct-changes} and this correspondence the
following can be concluded.

\begin{proposition}
   For each visibly pushdown language $L$ and
   each $\epsilon>0$, there is a dynamic constant-time parallel
   algorithm on a
    CRCW PRAM for $\VPL^-(L)$  with $\bigO(n^{\epsilon})$ work.
\end{proposition}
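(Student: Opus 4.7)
The plan is to reduce $\VPL^-(L)$ to $\RegTree(L')$ for a suitable regular tree language $L'$ and then invoke \autoref{theorem-struct-changes}.

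First, I would formalise the tree encoding $t(w)$ of a well-formed string $w$ over the pushdown alphabet $\tilde{\Sigma}=(\Sigma_c,\Sigma_r,\Sigma_{\text{int}})$: every matching pair $(a,b)$ with $a \in \Sigma_c$ at position $i$ and $b \in \Sigma_r$ at position $j$ becomes an internal node with label $(a,b)$ whose children are the encoding of $w[i+1,j-1]$, while every occurrence of a symbol from $\Sigma_{\text{int}}$ becomes a leaf labelled by that symbol. The encoding is a bijection between well-formed strings and labelled trees over $\Sigma' \df \Sigma_{\text{int}} \cup (\Sigma_c \times \Sigma_r)$, with the size of $t(w)$ linear in $|w|$.

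Second, given a visibly pushdown automaton $M$ for $L$, I would construct a DTA $\calB$ over $\Sigma'$ recognising the regular tree language $L' \df \{t(w) \mid w \in L\}$. The state assigned to a node $v$ will encode the function $f_v : Q_M \to Q_M$ mapping each state $q$ to the state of $M$ after processing the substring encoded by the subtree rooted at $v$, when started in $q$ (no stack-height information is needed because of the visibly pushdown property: the stack is empty relative to the starting configuration at the end of each encoded substring). The horizontal DFA $\calA$ tracks function composition over sequences of child states, and the transition function $\delta$ of $\calB$ combines the composed child function with the call/return pair $(a,b)$ at an internal node, or uses the $\Sigma_{\text{int}}$-transition at a leaf. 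Acceptance of $\calB$ at the root is determined by composing $f_r$ with the initial state of $M$ and checking whether the result is accepting.

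Third, I would identify the restricted string operations of $\VPL^-$ with the tree operations offered by $\RegTree$: a $\Relabel$ on a node corresponds either to changing an internal symbol in $w$ or to replacing one matching call/return pair by another; an $\AddChild$ with an internal symbol corresponds to inserting that symbol at the appropriate string position, and an $\AddChild$ with a label from $\Sigma_c \times \Sigma_r$ corresponds to inserting a new, empty matched pair. All of these preserve well-formedness of $w$. A $\Query$ on the whole string reduces to a $\Query$ at the root of $t(w)$. Applying \autoref{theorem-struct-changes} to $L'$ with the given $\epsilon$ immediately yields the claimed constant-time dynamic parallel algorithm with work $\bigO(n^\epsilon)$ per change operation and constant work per query.

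The only delicate point is pinning down exactly which string operations $\VPL^-$ comprises so that each translates to a single $\Relabel$ or $\AddChild$ on $t(w)$ maintaining well-formedness; the restriction is natural given the correspondence, and no further dynamic machinery beyond \autoref{theorem-struct-changes} is needed. I do not expect any significant obstacle beyond carefully describing this translation.
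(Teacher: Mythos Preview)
Your proposal is correct and takes essentially the same approach as the paper: the paper's ``proof'' is simply the remark that the result follows from the well-known correspondence between well-formed visibly-pushdown strings and labelled trees together with \autoref{theorem-struct-changes}, and you have spelled out exactly that reduction. The only minor discrepancy is in your enumeration of the $\VPL^-$ operations: the paper's third operation (replacing an internal symbol by a call--return pair $ab$) is a $\Relabel$ of a leaf from an internal label to a label in $\Sigma_c\times\Sigma_r$, not an $\AddChild$, but you already flag that pinning down the exact operation set is the only detail to verify, and your translation handles it.
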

Here, $\VPL^-(L)$ only allows the following change operations:
\begin{itemize}
\item Replacement of a symbol by a symbol of the same type;
\item Insertion of an internal symbol from $\Sigma_{\text{int}}$
  before a return symbol;
\item Replacement of an internal symbol by two symbols $ab$, where
  $a\in\Sigma_c$ and $b\in\Sigma_r$.
\end{itemize}

For arbitrary symbol replacements and insertions, there is a much less
work-efficient algorithm which, however, is still considerably more
efficient than the algorithm for DCFLs. 

\begin{restatable}{theorem}{TheoVPL}\label{theo:vpl}
  For each visibly pushdown language $L$ and
   each $\epsilon>0$, there is a dynamic constant-time parallel
   algorithm on a
    CRCW PRAM for $\VPL(L)$ with $\bigO(n^{2+\epsilon})$ work.
  \end{restatable}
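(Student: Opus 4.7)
The plan is to follow the same strategy as for realtime DCFLs in \autoref{theo:dcfl} but to exploit the defining property of visibly pushdown languages: the sequence of stack operations (push, pop, no-op) performed while reading a substring is a deterministic function of the input's call/return/internal pattern, independent of the VPA's state. This removes several degrees of freedom from the DP tables used there, in particular from $\poppos$, which in the DCFL case contributes an extra factor of $n$ to the stored information.

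First, I would maintain, using a partition hierarchy of positions of $w$ in the spirit of \autoref{section:treelanguages}, the ``height profile'' of the call/return balance on special intervals: for each special pair $(i,j)$, the net balance of call minus return symbols in $w[i,j]$ together with the minimum prefix balance inside the interval. From these values one can decide in constant time, for arbitrary $i,j$ and any initial stack height $k$, whether the VPA run on $w[i,j]$ started with a stack of height $k$ empties its stack and, if so, at which position; one can likewise locate the call positions corresponding to the symbols currently on the stack. This information takes over the role played by $\dempty$, $\pushpos$, and the bracket-matching part of $\poppos$ in \autoref{theo:dcfl}, and can be maintained with $\bigO(n^{1+\epsilon})$ work per change using the machinery of \autoref{section:treelanguages}.

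Second, the only genuinely state-dependent information the algorithm must track is, for each pair of positions $(i,j)$, the map sending a VPA state $p$ (and, for well-matched substrings, the stack symbol assumed to be on top at the enclosing call) to the state reached after reading $w[i,j]$ from $p$ with empty stack, restricted to runs that never pop from below the initial stack level. Because the stack of a VPA on a well-matched substring is restored to its original height, these maps are pure state-to-state functions of constant description size, and they compose associatively. I would store them only for special pairs $(i,j)$ and recover arbitrary pairs by composing a constant number of special ones, exactly as in \cite{SchmidtSTVZ21}. Together with the height profile, this suffices to answer $\Query(w,i,j)$ in constant time by decomposing $w[i,j]$ along the positions at which its stack first drops to each height.

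A change at position $\ell$ invalidates, for every fixed left endpoint $i$, only the $\bigO(n^\epsilon)$ special pairs with $i$ as left endpoint whose right endpoint crosses $\ell$ at some level of the hierarchy; aggregated over all $\bigO(n)$ left endpoints, and symmetrically for right endpoints, this yields $\bigO(n^{2+\epsilon})$ pairs to refresh, each by a constant-time composition of sub-interval summaries, giving the claimed work bound. The main obstacle, as in \autoref{subsection:structuralchanges}, will be the $\InsertPositionBefore$ and $\InsertPositionAfter$ operations, which reshape the underlying partition hierarchy of positions and thus change which pairs are special; I would lift the orchestration strategy of \autoref{theorem-struct-changes}---running re-computation threads in the background so that an operable hierarchy is always available before warning limits are hit---to the linear setting of positions, which is simpler than the tree setting and does not raise the asymptotic work.
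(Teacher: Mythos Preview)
Your proposal diverges from the paper's proof and has a genuine gap in the composition step. The paper does \emph{not} use a partition hierarchy of positions or restrict storage to ``special'' intervals; it keeps $\dall(p,w[i,j],\tau)$ and $\pushpos$ for \emph{all} $\Theta(n^2)$ pairs $(i,j)$, exactly as for RDCFLs, and observes that these already need only $\bigO(n^{2+\epsilon})$ work. The saving over \autoref{theo:dcfl} comes solely from replacing $\poppos$, whose tuples carry three position parameters, by a function $\vpopstate(p,j,\tau,q,m,k)$ with only two position parameters $j,m$: since for a VPA the stack-height profile is fixed by the call/return pattern, the left end of the stack-building segment and the right end of the stack-consuming segment are $\hleft(j,k)$ and $\hright(m,k)$ and need not be stored independently. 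Together with explicitly maintained height functions $s,\hleft,\hright$, this gives $\Theta(n^{2+\theta})$ tuples, each updatable with $\bigO(n^{\theta})$ work.

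The gap in your plan is that a VPA's behaviour on an interval $w[i,j]$ is a pure state-to-state function only when the run never touches stack below its initial level; otherwise the result depends on \emph{every} stack symbol uncovered, and an interval can uncover $\Theta(n)$ of them. Your step ``decompose $w[i,j]$ along the positions at which its stack first drops to each height'' therefore produces $\Theta(n)$ pieces in the worst case, so neither queries nor the recomputation of a summary after a change are constant-time with the data you store. This is precisely the obstacle the paper overcomes with $\vpopstate$: by coupling a stack-building end position $j$ with a stack-consuming start position $m$ and a height $k$, and by restricting $k$ to $\bigO(n^\theta)$ special values (the $n^\theta$-adic trick), one can traverse all relevant heights in $\bigO(1/\theta)$ stages rather than $\Theta(n)$. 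Your work accounting (``$\bigO(n)$ left endpoints times $\bigO(n^\epsilon)$ special right endpoints'') is also inconsistent with a partition-hierarchy scheme, in which endpoints of special pairs are boundary points, not arbitrary positions.
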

  The work improvement mainly relies on the fact that how the height
  of the stack evolves during a computation only depends on the types
  of symbols.

\section{Conclusion}\label{section:conclusion}
We have shown that the good work bounds for regular string languages
from \cite{SchmidtSTVZ21} carry over to regular tree languages, even
under some structural changes of the tree. In turn they also hold for
visibly pushdown languages under limited change operations. For
realtime deterministic context-free languages and visibly pushdown
languages under more general change operations better work bounds than
for context-free languages could be shown.

There are plenty of questions for further research, including the following: are there other
relevant change operations for trees that can be handled with work
\neps? What are good bounds for further operations? Can the bounds for
context-free languages be improved? Can the $\bigO(n^{3+\epsilon)}$ be
shown for arbitrary (not necessarily realtime) DCFLs?  
And the most challenging: are there further lower bound results that 
complement our upper bounds?

\bibliography{bibliography}

\begin{thebibliography}{10}

\bibitem{AbboudBackurs+2018}
Amir Abboud, Arturs Backurs, and Virginia~Vassilevska Williams.
\newblock If the current clique algorithms are optimal, so is valiant's parser.
\newblock {\em {SIAM} J. Comput.}, 47(6):2527--2555, 2018.

\bibitem{AlurM04}
Rajeev Alur and P.~Madhusudan.
\newblock Visibly pushdown languages.
\newblock In L{\'a}szl{\'o} Babai, editor, {\em Proceedings of the 36th Annual
  ACM Symposium on Theory of Computing (STOC)}, pages 202--211, 2004.

\bibitem{AmarilliBM18}
Antoine Amarilli, Pierre Bourhis, and Stefan Mengel.
\newblock Enumeration on trees under relabelings.
\newblock In Benny Kimelfeld and Yael Amsterdamer, editors, {\em 21st
  International Conference on Database Theory ({ICDT})}, volume~98 of {\em
  LIPIcs}, pages 5:1--5:18. Schloss Dagstuhl - Leibniz-Zentrum f{\"{u}}r
  Informatik, 2018.
\newblock \href {https://doi.org/10.4230/LIPIcs.ICDT.2018.5}
  {\path{doi:10.4230/LIPIcs.ICDT.2018.5}}.

\bibitem{Bojanczyk12}
Miko\l{}aj Boja\'{n}czyk.
\newblock Algorithms for regular languages that use algebra.
\newblock {\em SIGMOD Rec.}, 41(2):5--14, 2012.
\newblock \href {https://doi.org/10.1145/2350036.2350038}
  {\path{doi:10.1145/2350036.2350038}}.

\bibitem{Comon08}
Hubert Comon, Max Dauchet, R{\'e}mi Gilleron, Florent Jacquemard, Denis Lugiez,
  Christof L{\"o}ding, Sophie Tison, and Marc Tommasi.
\newblock {\em {Tree Automata Techniques and Applications}}.
\newblock hal-03367725, 2008.

\bibitem{FrandsenMiltersen+1997}
Gudmund~Skovbjerg Frandsen, Peter~Bro Miltersen, and Sven Skyum.
\newblock Dynamic word problems.
\newblock {\em J. {ACM}}, 44(2):257--271, 1997.
\newblock \href {https://doi.org/10.1145/256303.256309}
  {\path{doi:10.1145/256303.256309}}.

\bibitem{GeladeMS12}
Wouter Gelade, Marcel Marquardt, and Thomas Schwentick.
\newblock The dynamic complexity of formal languages.
\newblock {\em ACM Trans. Comput. Log.}, 13(3):19, 2012.
\newblock \href {https://doi.org/10.1145/2287718.2287719}
  {\path{doi:10.1145/2287718.2287719}}.

\bibitem{Immerman12}
Neil Immerman.
\newblock {\em Descriptive complexity}.
\newblock Springer Science \& Business Media, 2012.

\bibitem{DBLP:books/aw/JaJa92}
Joseph~F. J{\'{a}}J{\'{a}}.
\newblock {\em An Introduction to Parallel Algorithms}.
\newblock Addison-Wesley, 1992.

\bibitem{PatnaikI97}
Sushant Patnaik and Neil Immerman.
\newblock {Dyn-FO}: A {P}arallel, {D}ynamic {C}omplexity {C}lass.
\newblock {\em J. Comput. Syst. Sci.}, 55(2):199--209, 1997.
\newblock \href {https://doi.org/10.1006/jcss.1997.1520}
  {\path{doi:10.1006/jcss.1997.1520}}.

\bibitem{SchmidtSTVZ21}
Jonas Schmidt, Thomas Schwentick, Till Tantau, Nils Vortmeier, and Thomas
  Zeume.
\newblock Work-sensitive dynamic complexity of formal languages.
\newblock In Stefan Kiefer and Christine Tasson, editors, {\em Foundations of
  Software Science and Computation Structures - 24th International Conference
  ({FOSSACS})}, pages 490--509, 2021.
\newblock \href {https://doi.org/10.1007/978-3-030-71995-1\_25}
  {\path{doi:10.1007/978-3-030-71995-1\_25}}.

\bibitem{TschirbsVZ23}
Felix Tschirbs, Nils Vortmeier, and Thomas Zeume.
\newblock Dynamic complexity of regular languages: Big changes, small work.
\newblock In Bartek Klin and Elaine Pimentel, editors, {\em 31st {EACSL} Annual
  Conference on Computer Science Logic ({CSL})}, pages 35:1--35:19, 2023.
\newblock \href {https://doi.org/10.4230/LIPIcs.CSL.2023.35}
  {\path{doi:10.4230/LIPIcs.CSL.2023.35}}.

\bibitem{DBLP:books/el/leeuwen90/Boas90}
Peter van Emde~Boas.
\newblock Machine models and simulation.
\newblock In Jan van Leeuwen, editor, {\em Handbook of Theoretical Computer
  Science, Volume {A:} Algorithms and Complexity}, pages 1--66. Elsevier and
  {MIT} Press, 1990.

\bibitem{Williams2012}
Virginia~Vassilevska Williams.
\newblock Multiplying matrices faster than {Coppersmith-Winograd}.
\newblock In Howard~J. Karloff and Toniann Pitassi, editors, {\em Proceedings
  of the 44th Symposium on Theory of Computing Conference ({STOC})}, pages
  887--898, 2012.
\newblock URL: \url{http://dl.acm.org/citation.cfm?id=2213977}, \href
  {https://doi.org/10.1145/2213977.2214056}
  {\path{doi:10.1145/2213977.2214056}}.

\end{thebibliography}

\appendix

\newpage
\section{Appendix for \autoref{subsection:labelchanges-efficient}}
\subparagraph*{Label changes: basics}\label{subsection:labelchanges-basics}

The most basic change operation for trees that we consider in this
paper is $\textsc{Set}_\sigma(u)$ which sets the label of a node $u$
to symbol $\sigma$. In this subsection, we describe how membership in a regular tree language
can be maintained under label changes, in principle, i.e., in a work
inefficient way. 

In the following, we fix a DTA
$\mathcal{B}=(Q_\calB,\Sigma, Q_f, \delta,
\calA)$, with  DFA $\calA=(Q_\calA,Q_\calB,\delta_\calA,s)$.

To maintain membership of a tree $t$ in $L$, our dynamic program will
make use of several functions, some of them related to the structure
of the tree and some of them related to the behaviour of $\calA$ and
$\calB$ on $t$. For the moment, we do not specify which (parts) of
these functions are actually stored by the program and which need to
be computed by sub-programs. In fact, this is where the inefficient 
and the work efficient program differ.

Besides the unary function $\rho_t$ already defined in
\autoref{section:preliminaries}, we use the following two functions related to the behaviour of the automata. 
\begin{itemize}
    \item  The ternary function $\calB_t: Q_\calB \times V\times V\mapsto Q_\calB$ maps
        each triple $(q,u,v)$ of a state $q \in Q_\calB$ and nodes of $t$, where $u$ is a proper ancestor of
        $v$, to the state that the run of $\calB$ on $\tree{u}[v]$  takes at
        $u$, with the provision that the state at $v$ is $q$.
    \item  The ternary function $\calA_t: Q_\calA \times V\times V\mapsto Q_\calA$ maps
        each triple $(p,u,v)$ of a state $p \in Q_\calA$ and nodes of $t$, where $u\prec v$ are siblings, to
        the state that the run of $\calA$ on $u,\ldots,v$, starting from
        state $p$,  takes after $v$.
\end{itemize}
We refer to $\calA_t$, $\calB_t$ and $\rho_t$ as
\emph{automata functions}.

We write $\calA_t(\cdot,u,v)$ and $\calB_t(\cdot,u,v)$  for the unary
functions $Q_\calA\to Q_\calA$ and $Q_\calB\to Q_\calB$, respectively,
induced by fixing two nodes $u$ and $v$.

By definition, $t\in L(\calB)$ holds exactly, if $\rho_t(r)\in Q_f$ holds for the
root $r$ of $t$.

We next describe the functions that are related to the structure of
the tree. We refer to them as \emph{basic tree
  functions}:\footnote{This term shall also cover the
  mentioned relations, i.e., Boolean functions.}

\begin{itemize}
  	\item unary functions \parent, \lsibling, \rsibling,
                  \fchild, \lchild, yielding  the parent, left and
                  right sibling, and 
                  the first and the last child of $x$,
                  respectively. They are undefined if such a node does
                  not exist;
             \item a unary function \nsiblings, yielding the number of smaller siblings of $x$.
          	\item a binary function \child: \child$(x,k)$ yields
                  the $k$-th child of $x$; 
                  	\item binary relations \anc and \ancself:
                          $\anc(x,y)$ and $\ancself(x,y)$
                          holds for $x,y\in V$ if $x$ is a strict and non-strict
                          ancestor of $y$, respectively;
          	\item a binary function \ancchild: $\ancchild(x,y)$
                          yields the ancestor of $y$ which is a child 
          of $x$;
          	\item a binary function \ancindex: $\ancindex(x,y)$
                          yields 
          the position of the ancestor of $y$ in the children sequence
          of $x$;
        \item  a binary function \lca: $\lca(x,y)$ yields the lowest common ancestor of $x$ and $y$;
        \item binary relations $\prec$ and $\preceq$: $x\prec y$ holds for $x,y\in V$ if $x$ is a left sibling of $y$. Similarly, $x\preceq y$ holds if $x\prec y$ or $x=y$.
\end{itemize}

Algorithms~\ref{procedure:evaluate_state}, \ref{procedure:evaluate_sequence} and \ref{procedure:evaluate_path}  describe straightforward procedures to
update the automata functions
 after label changes.     With these algorithms the following result
 follows immediately.
 \LemmaUpdateTuple*
              
              As a corollary we get the following result.
              \begin{proposition}
             For each regular tree language $L$,    there is a parallel constant time dynamic algorithm
                for $\RegTreeM(L)$  with work $\bigO(n^2)$ on an EREW PRAM.
              \end{proposition}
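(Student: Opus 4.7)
The plan is to store every automata function value explicitly and to exploit \autoref{lemma-update-tuple} to refresh the stale ones in parallel after each label change. Concretely, I would maintain a table of $\rho_t(v)$ for all $v$, a table of $\calA_t(p,u,v)$ for every state $p\in Q_\calA$ and every pair of siblings $u\preceq v$, and a table of $\calB_t(q,u,v)$ for every state $q\in Q_\calB$ and every ancestor-descendant pair $(u,v)$. Because \RegTreeM\ forbids structural changes, the basic tree functions \parent, \child, \anc, \lca, etc.\ can be precomputed once during \Init\ and never touched thereafter. The total storage is $\bigO(n^2)$.

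With this data structure the query $\Query(T,v)$ is immediate: a single processor looks up $\rho_t(v)$ and tests whether it lies in $Q_f$, costing $\bigO(1)$ work. For $\Relabel(T,x,\sigma)$, the values that can become stale are precisely $\rho_t$ at $x$ and at its ancestors, the horizontal tuples $\calA_t(p,u,v)$ whose sibling interval $[u,v]$ contains the ancestor of $x$ at the corresponding level, and the vertical tuples $\calB_t(q,u,v)$ with $x\in\tree{u}\setminus\tree{v}$. In each of the three cases the number of affected tuples is bounded by $\bigO(n^2)$, so I would dedicate one processor to each stored tuple; the horizontal processors invoke \evaluatesequence, the vertical processors invoke \evaluatepath, and \evaluatestate\ supplies the new run values. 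By \autoref{lemma-update-tuple} each call runs in constant sequential time, which adds up to the claimed $\bigO(n^2)$ work bound per change.

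The step that will require the most care is arranging the pass so that it fits the EREW discipline at constant parallel depth. The new run values along the path from $x$ to the root form a causal chain: the updated $\rho_t$ at an ancestor $y$ depends on the updated $\rho_t$ at the unique child of $y$ that is itself an ancestor of $x$, via the stored $\calA_t$ on $y$'s children sequence; and many of the $\bigO(n^2)$ horizontal and vertical update processors would naively want to read the same pivot cells (the new $\rho_t$ values and the $\calA_t$-values at the siblings immediately adjacent to the ancestors of $x$). My plan is to handle this in a constant number of layers: first compute each pivot once by a dedicated processor, then replicate it into a per-consumer array so that the subsequent bulk update phase reads only from disjoint memory addresses. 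Since there are $\bigO(n)$ pivots and each is consumed by $\bigO(n)$ processors, the replication fits comfortably within the $\bigO(n^2)$ work budget, and verifying that it can be scheduled without read conflicts is exactly the technical content that the auxiliary procedures invoked in \autoref{lemma-update-tuple} are designed to supply.
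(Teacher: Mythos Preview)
Your overall approach coincides with the paper's: store $\calA_t$ and $\calB_t$ for all tuples, assign one processor per tuple, and invoke the constant-time update procedures of \autoref{lemma-update-tuple} in parallel. The paper's proof says exactly this in three sentences and does not discuss the EREW issue at all, so your write-up is more detailed than the original.

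One misconception is worth correcting. You write that ``the new run values along the path from $x$ to the root form a causal chain: the updated $\rho_t$ at an ancestor $y$ depends on the updated $\rho_t$ at the unique child of $y$ that is itself an ancestor of $x$''. That is not how \evaluatestate\ works. For any ancestor $y$ of the changed node $v$, the new $\rho_t(y)$ is read off directly from the \emph{old} stored value $\calB_t(q,v,y)$, where $q=\delta(\calA_t(s,\fchild(v),\lchild(v)),\sigma)$ is computed once from old data. There is no dependency on the new $\rho_t$ at any intermediate node, and hence no chain of depth proportional to the tree height to worry about. Likewise, \evaluatesequence\ and \evaluatepath\ call \evaluatestate\ internally but still read only old $\calA_t$, $\calB_t$, $\rho_t$ values; every new value is a one-shot function of the pre-change tables.

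The genuine EREW issue is therefore simpler than you describe: many processors want to read the same handful of \emph{old} cells (for instance, every call to \evaluatestate$(v,\sigma,\cdot)$ reads the single entry $\calA_t(s,\fchild(v),\lchild(v))$, and all $\bigO(n)$ sibling intervals at a fixed level read the same \evaluatestate\ output for that level's ancestor of $v$). Your replication plan---compute each such shared value once and copy it into a private slot per consumer---handles this cleanly within the $\bigO(n^2)$ budget, and no multi-layer dependency resolution is needed.
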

              \begin{proof}
                A dynamic algorithm can maintain $\calB_t$ and
                $\calA_t$, for all possible tuples. Since each tuple
                update only requires constant time and all these
                evaluations can be done in parallel, the claim
                follows.

              \end{proof}

	\begin{algorithm}
		\begin{algorithmic}
			
			\Procedure{\evaluatestate}{$v,\sigma,x$}
			\If{$\ancself(x,v)$} \Comment{$v$ only influences $x$ if $x$ is an ancestor of $v$}
                        \State $p \gets
                        \calA_t(s,\fchild(v),\lchild(v))$
                        \Comment{This is $s$, if $v$ has no children}
            \State $q \gets \delta(p,\sigma)$
                                             \State    \Return
                                             $\calB_t(q,v,x)$
                                             \Comment{If $v=x$ this is
                                             $q$}
			\Else
				\State \Return $\rho_t(x)$ 
			\EndIf
			\EndProcedure
			\caption{Returns the new value $\calB'(x)$, if
                          the label of node $v$ is set  to $\sigma$.}
			\label{procedure:evaluate_state}
		\end{algorithmic}
              \end{algorithm}

	\begin{algorithm}
		\begin{algorithmic}
              \Procedure{\evaluatesequence}{$v,\sigma,p,x,y$}
              \State $u \gets \parent(x)$ \Comment{If $x$ is a root
                then $u$ is the super-root...}
			\If{$\anc(u,v)$ AND $\ancindex(u,v)
                          \ge \ancindex(u,x)$ AND  $\ancindex(u,v)
                          \le \ancindex(u,y)$}
                        \State $w \gets \child(u,\ancindex(u,v))$
                             \State $p_1 \gets
                        \calA_t(p,x,\lsibling(w))$
                        \State $p_2 \gets
                        \delta_\calA(p_1,\evaluatestate(v,\sigma,w))$
                            \State \Return
                        $\calA_t(p_2,\rsibling(w),y)$               
			\Else
				\State \Return $\calA_t(p,x,y)$
                                \Comment{In this case the change at
                                  $v$ is irrelevant}
			\EndIf
			\EndProcedure
			\caption{Returns the new value for
                          $\calA_t(p,x,y)$, for siblings $x\preceq y$, if
                          the label of node $v$ is set  to $\sigma$.}
			\label{procedure:evaluate_sequence}
		\end{algorithmic}
              \end{algorithm}

	\begin{algorithm}
		\begin{algorithmic}
                  \Procedure{\evaluatepath}{$v,\sigma,q,x,y$}
                  	\If{$\anc(y,v)$ OR NOT $\anc(x,v)$}
		\State	\Return $\calB_t(q,y,x)$
			
		\Comment{If $v$ is a descendant of $y$ or not a
                  descendant of $x$ nothing changes}
                \EndIf
                
                \State $u\gets $ \lca$(v,y)$\Comment{$u$ is on the
                  path from $y$ to $x$}

                \If{$u=v$}
                               \State  $z \gets \ancchild(u,y)$
                \State $p_1 \gets \calA_t(s,\fchild(u),\lsibling(z))$
               \State $p_2 \gets \delta_\calA(p_1,\calB_t(q,y,z))$
               \State $q' \gets \delta(\calA_t(p_2,\rsibling(z),\lchild(u)),\sigma)$

                 \Else
                \State  $z \gets \ancchild (u,y)$
               \State $z' \gets \ancchild (u,v)$
                 \If{$z\prec z'$} \Comment{$v$ is to the right  of $y$ in $t$}
                \State $p_1 \gets \calA_t(s,\fchild(u),\lsibling(z))$
               \State $p_2 \gets \delta_\calA(p_1,\calB_t(q,y,z))$
               \State $p_3 \gets \calA_t(p_2,\rsibling(z),\lsibling(z'))$
               \State $p_4 \gets \delta_\calA(p_3,\evaluatestate(v,\sigma,z')$
              \State $q' \gets \delta(\calA_t(p_4,\rsibling(z'),\lchild(u)),\label(u))$
              \Else \Comment{If $z'\prec z$ the algorithm proceeds analogously}
            \EndIf
            \EndIf
            
              \State \Return $\calB_t(q',u,x)$
			\EndProcedure
			\caption{Returns the new value for
                          $\calB_t(q,y,x)$, for an ancestor $x$ of a
                          node $y$ if
                          the label of node $v$ is set  to $\sigma$.}
			\label{procedure:evaluate_path}
		\end{algorithmic}
              \end{algorithm}

\newpage
\subparagraph*{Label changes: a work-efficient algorithm}
For a non-tree zone $S$ as above, we call  $u$ its \emph{left boundary node} $\zleft(S)$ and $v$ its \emph{right boundary node} $\zright(S)$. For a tree zone $S$, we call $v$ its \emph{upper boundary node} $\zupper(S)$ and for any incomplete zone $S$, we call $w$ its \emph{lower boundary node} $\zlower(S)$. For convenience, we also call the upper-most ancestor of a lower boundary node $w$ in a non-tree zone an  \emph{upper boundary node} $\zupper(S)$, even though it is not literally on the boundary of the zone. A node is a \emph{top node} of a zone if its parent is not in the zone.

     For an incomplete zone $S=\tree[u]{v}[w]$ we denote by $\hat{S}$ the complete zone $\tree[u]{v}$. For a complete zone $S$, $\hat{S}$ is just $S$ itself. This equally applies to tree zones, i.e., if $u=v$.

     \LemmaPartitionStep*
\begin{proof}
  We call a node \emph{$S$-large} if it has more  than $\frac{1}{2}m$ descendants in $S$ (including the node itself), otherwise we call it \emph{$S$-small}. 

  We first observe that there exists siblings $u\preceq v$, for which $\tree[u]{v}\cap S$ has at least $\frac{1}{2}m$ nodes, but no node between $u$ and $v$  (including $u$ and $v$) is $S$-large. To see this, we first consider the left boundary $u_1$ and the right boundary $v_1$ of $S$ (or $u_1=v_1$ as the root, if $S$ is a tree-zone). By assumption, $\tree[u_1]{v_1}\cap S$ has at least $\frac{1}{2}m$ nodes. If there are no $S$-large nodes, we are done. Otherwise,  we choose a $S$-large node $w_1$ between $u_1$ and $v_1$ and continue with its leftmost and rightmost child as $u_2$ and $v_2$. Since $w_1$ has \emph{more} than $\frac{1}{2}m$ descendants in $S$, $\tree[u_2]{v_2}\cap S$ has at least $\frac{1}{2}m$ nodes. This process will necessarily yield $u_a\preceq v_a$ with the desired property, for some $a$. 

Let now the nodes from $x_1=u_a,\ldots,x_k=v_a$ the sibling sequence from $u_a$ to $v_a$ and let $t_1,\ldots,t_k$ be their induced trees in $S$.  For $i\le k$, let $D_i$ denote the forest  $t_1,\dots, t_i$.

  Since each $t_j$ has at most $\frac{1}{2}m$ nodes, there exists an index $j$, such that
	$\frac{1}{2}m \leq |D_j| \le m$
	holds. The subforest $S_1 \coloneqq D_j$ is therefore a zone with the required size.

        We now show how $S$ can be partitioned into at most 5 valid zones, including $S_1$.

       First, we can split  $S$ into three parts, $S_1$, $S_2 \coloneqq D_k \setminus D_j$ and $S' \coloneqq S \setminus D_k$, where $S_2$ or $S'$ can be empty. $S_2$ has at most one vertical connection node and is therefore a zone, but $S'$ need not be a zone. 

       In the latter case,  $S'$ is non-empty and has two nodes, $v_S$ and $w_{a-1}$, whose children are not in $S'$.  In this case, $S'$ can be split into at most three zones as follows.
        
        If $v_S$ belongs to a different connected component of $S'$ than $w_{a-1}$, then $S_3$ can be chosen as the tree of $S'$ that contains $D_k$ minus $D_k$, $S_4$  as the union of the trees to its left and $S_5$  as the union of the trees to its right. 

   Otherwise,   let $z$ be the lowest common ancestor of $w_{a-1}$ and $v_S$ with children $z_1,\dots, z_g$. Let $Z_j$ denote the set of all descendants of $z_1,\dots, z_j$ in $S'$. By construction, some $Z_j$ contains exactly one of the nodes $w_{a-1}$ and $v_S$. We can thus choose $S_3\coloneqq Z_j$, $S_4 \coloneqq Z_g \setminus Z_j$ and $S_5\coloneqq S' \setminus Z_g$. It is again possible that some of these zones are empty.
      \end{proof}

      \PropPartitionHierarchy*
          \begin{proof}
        We first show that a zone $S$ of some size $s$ can always be partitioned into at most $10s^\theta$ zones of size at most $s^{1-\theta}$. This can be shown by repeated application of \autoref{lemma:pre-partition-step} with $m=s^{1-\theta}$.  Starting with the partition that only consists of $S$, as long as there is a zone that is larger than $m$, one such zone can be chosen and split into at most 5 zones. This yields one more zone of size at most $m$. However, as this zone also has at least $\frac{1}{2}m$ nodes,  after $2s^\theta$ rounds, no large zone can be remaining. 
By applying this argument recursively, starting from the zone consisting of the whole tree,  a $(10,\theta)$-bounded partition hierarchy of height at most $h=\lceil \theta\rceil$ will be produced.        
\end{proof}

\LemmaComputeFromStored*
              \begin{proof}
                We first describe, how $\rho_t(v)$ can be determined, for an arbitrary node $v$. If $v$ is a leaf, this can be done directly. For non-leaves $v$, we inductively define a sequence of nodes $v_1,\ldots,v_j$ as follows.
                \begin{itemize}                 
                \item $v_1=v$.
                \item If the $i$-zone $S_i$ of $v_i$ is incomplete and its lower boundary $w$ is a descendant of $v_i$ then $v_{i+1}=w$. Otherwise $v_{i+1}$ is undefined.
                \end{itemize}
                Let $j$ be maximal such that $v_j$ is defined. Since the overall zone has level $h$ and is complete, it holds $j\le h$. By definition, for each $i<j$, the value of $\calB_t(q,v_{i+1},v_i)$ is stored, for every state $q$, because both nodes are in the same 1-zone, if $i=1$, or because both nodes are lower boundary nodes of $i$-zones of $S_{i+1}$ and $v_{i+1}$ is the lower boundary node of $S_{i+1}$. The state $\rho_t(v_j)$ can be determined from $\calA_t(s,\fchild(v_j), \lchild(v_j))$, since all children of $v_j$ are in $S_j$. The state of $v$ can now be determined, by combining the state of $v_j$ with appropriate transitions of the form $\calB_t(q,v_{i+1},v_i)$.

                We next describe, how  $\calA_t(p,u,u')$ can be determined for arbitrary pairs $u\prec u'$ of siblings of $t$. To this end, let, for each $\ell$, $S_\ell$ denote the unique zone of level $\ell$ in the partition hierarchy that contains $u$. Likewise, for  $S'_\ell$ and $u'$. Let $j$ be minimal, such that $S_j=S'_j$ holds. Since $S_h=S'_h$ must be $t$, such a $j$ exists. If $j=1$ holds, then $u$ and $u'$ are in the same zone and thus $\calA_t(p,u,u')$ is stored.

               Otherwise,  we inductively define a sequence of nodes that ``connects'' $u$ and $u'$, as follows.
               Let $x_1=u$ and $y_1$ be the rightmost sibling of $u$ in $S_1$. For each $i<j$, starting from $i=2$, let $x_i$ be the right sibling of $y_{i-1}$ and $y_i$ be the rightmost sibling of $u$ in $S_i$. We assume here, without loss of generality\footnote{The arising cases can be handled in a similar straightforward way.}, that $y_{i-1}$ is not yet the rightmost sibling of $u$ in $S_i$.
               Let $x_j$ be the right sibling of $y_{j-1}$, a left boundary node of a zone of level $j-1$. 
               From $u'$, we can define an analogous sequence from right-to left, starting with $y'_1=u'$ and $x'_1$ as the left-most sibling of $u'$ in $S'_1$. The last (left-most) node of that sequence is a node $y_j$, which is a right boundary node of a zone of level $j-1$. The behaviour of $\calA$ between $u$ and $u'$ can now be induced from $\calA_t(\cdot,x_i,y_i)$ and $\calA_t(\cdot,x'_i,y'_i)$, for $i\le j$.  Since $h$ is a constant, $\calA_t(p,u,u')$ can thus be computed in constant time, if this holds for $\calA_t(\cdot,y_j,y'_j)$ and all functions $\calA_t(\cdot,x_i,y_i)$ and $\calA_t(\cdot,x'_i,y'_i)$. 
               In fact, whenever $S_i$ is complete or the lower boundary $w_i$ is not in $\tree[x_i]{y_i}$, the information about $\calA_t(\cdot,x_i,y_i)$  is stored by the algorithm.
               Otherwise, $\calA_t(\cdot,x_i,y_i)$ can be inferred from $\calA_t(\cdot,x_i,\lsibling(v_i))$, $\calB_t(v_i)$ and $\calA_t(\cdot,\rsibling(v_i),y_i)$, where $v_i$ is the ancestor of $w_i$ that is between $x_i$ and $y_i$. $\calA_t(\cdot,x'_i,y'_i)$ can be computed analogously.

               The computation of  $\calB_t(q,u,u')$ is along similar lines. Given $u$ and $u'$, we define the zones $S_\ell$, $S'_\ell$ and the level $j$, where $S_j=S'_j$ holds, as above.
               The sequence starts again from $x_1=u$, but instead of rightmost siblings of $u$ in a zone, let $y_i$ be the highest ancestor of $u$ in the zone and instead of the right neighbour of a node $y_{i-1}$, its parent is chosen as $x_i$. Similarly, for the sequence starting from $u'$ towards $u$.
For all $i\le j$, $\calB_t(\cdot,x_i,y_i)$ is stored by the algorithm or it can be computed  $\calB_t(\cdot,u,\ancchild(w',u))$, $\calB_t (\cdot,w',v)$, $\calB_t(\ancchild(w',w))$, and the information about the behaviour of $\calA$ on the children of $w'$, excluding $\ancchild(w',u)$ and $\ancchild(w',w)$, where $w$ is the lower boundary of $S_i$ and $w'_i$ is $\lca(w,u)$. 

We emphasise that the computation of $\calA_t(\cdot,u,u')$ might use information on $\rho_t(x)$ for arbitrary nodes $x$, and the computation of  $\calB_t(\cdot,v,v')$ might use information on $\rho_t(y)$ for arbitrary nodes $y$ and $\calA_t(\cdot,z,z')$ for arbitrary $z,z'$. However, computations of $\rho_t(x)$ only use $\calA_t$ and $\calB_t$ for special tuples and computations of $\calA_t$ do not use any values of $\calB_t$. 
             \end{proof}

             \begin{proof}
               To achieve the stated bound, we use the above described algorithm with work parameter $\theta=\frac{\epsilon}{2}$. The algorithm uses a $(\theta,10)$-bounded partition hierarchy, which exists thanks to \autoref{proposition-partition-hierarchy}.
               
               As indicated before, the algorithm stores $\calA_t(\cdot,u,v)$ and $\calB_t(\cdot,u,v)$, for all special pairs $(u,v)$. As already observed before, these values only depend on the labels of the nodes of the zone relative to which $(u,v)$ is special. Therefore, if a node label is changed for some node $x$, values   $\calA_t(\cdot,u,v)$ and $\calB_t(\cdot,u,v)$ need only be updated for special pairs of zones in which $x$ occurs. Since each node occurs in exactly $h$ zones, each zone has $\bigO(n^{2\theta})=\neps$ special pairs, and the update of a single value is possible in constant time and work, thanks to \autoref{lemma-update-tuple} and \autoref{lemma-compute-from-stored}. 
             \end{proof}

\newpage
\section{Appendix for \autoref{subsubsection:structuralchanges-high-level}}
\LemmaPartitionEfficient*
\begin{proof}
  The idea is to follow the plan of \autoref{lemma:pre-partition-step}, but in constant time with little work. The first step is to identify nodes $u,v$ as in that proof.
  
  The algorithm tries to identify a zone $Z$ of some level $\ell$ such that (1) $|\hat{Z}\cap S|> \frac{1}{2}m$, (2)   $|\hat{Z'}\cap S|\le \frac{1}{2}m$ for all sub-zones of $Z$ at level $\ell-1$, and (3)  $|\hat{Z''}\cap S|\le \frac{1}{2}m$ for all child zones of $Z$ at level $\ell$. If such a zone does not exist, there must be a zone $Z$ of level 3 with $|\hat{Z}\cap S|>\frac{1}{2} m$.

  The zone $Z$ can be found by starting from $i=h$ and the overall zone $Z_h$. Then $Z_{i-1}$ is always chosen as a zone in $Z_i$ with  $|\hat{Z_i}\cap S|> \frac{1}{2}m$ that has no child zone $Z'_i$ with $|\hat{Z'_i}\cap S|> \frac{1}{2}m$. If this process stops at a zone $Z_i$ with $i>3$, $Z_i$ fulfils conditions (1) - (3). Otherwise it ends with a  zone $Z$ of level 3 with $|\hat{Z}\cap S|>\frac{1}{2}m$.

  Let us consider the case $i>3$ first. 
  If $\zlower(Z)$ has more than $\frac{1}{2}m$ descendants in $S$, then $\bigcup_{i=1}^k \hat{Z^i}$, where $Z^1,\ldots,Z^k$ are the child zones of   $\zlower(Z)$, contains at least $\frac{1}{2}m$ nodes from $S$. And thanks to (3),  $|\hat{Z^j}\cap S|\le m$, for each $j$ holds.

 Otherwise, for the top zones   $Z^1,\ldots,Z^p$ of $Z_i$ an analogous statement holds: $\bigcup_{i=1}^p \hat{Z^i}$ has at least $\frac{1}{2}m$ nodes from $S$, and $|\hat{Z^j}\cap S|\le \frac{1}{2}m$ holds, for each $j$.

 In either case $u$ can be chosen as $\zleft(Z^1)$ and $v$ as $\zright(Z^k)$.

 If $i=3$ then $u$ and $v$ can be chosen as the first and  the last child of  $\zlower(Z)$, if $\zlower(Z)$ has more than $\frac{1}{2}m$ descendants in $S$. Otherwise, it can be found by inspecting all children sets in $Z$. The identification of $Z$ requires work $\ntheta[2]$ and the computation of $u$ and $v$ at most $\ntheta[6]$, by the case $i=3$.

 The rest of the algorithm of  \autoref{lemma:pre-partition-step} can be implemented in a straightforward fashion: If $i>3$, $D_j$  can be determined by finding the  left-most top node $v'$ of $Z^1,\ldots,Z^k$ for which $|\tree[u]{v'}\cap S|\ge \frac{1}{2}m$. This can be done by recursively determining the zone in which the threshold $\frac{1}{2}m$ is reached. In the case $i=3$ it is even easier. That case dominates the work:  $\ntheta[6]$.

 The above algorithm frequently computes  $|\hat{Z}\cap S|$ for various zones $Z$. To this end, it assumes the availability of the function $\numdesc(u,v)=|\tree[u]{v}|$. We will establish in the next subsection that this function is indeed available.
 
  The size of any zone $S$ with left and right boundary nodes $\ell$ and $r$ and lower boundary $v$ can then be easily computed as $|S|=\numdesc(\ell,r)-\numdesc(v)+1$. We also have $|\hat{S}|=\numdesc(\ell,r)$.
  
  We consider three cases. If $Z$ is contained in $\hat{S}$ we have $|\hat{Z}\cap S|= |\hat{S}|-|\hat{Z}|$. If $S$ is contained in $\hat{Z}$ it is clear that $|\hat{Z}\cap S|=|S|$. In the remaining case, $S$ and $Z$ share an interval of top nodes, i.e., $\zright(S)$ is a (not necessarily immediate) right sibling of $\zleft(Z)$ (or the other way around). Then, $|\hat{Z}\cap S|$ can be determined from $\numdesc(\zleft(Z), \zright(S))$, $\numdesc(\zleft(S), \zright(Z))$, $\numdesc(\zlower(S))$ and $\numdesc(\zlower(Z))$, depending on the relative positions of these nodes.

  Altogether, the algorithm works in parallel constant time with  work $\bigO(n^{6\theta})$.
\end{proof}

\newpage
\section{Appendix for \autoref{subsubsec:treefunctions}}
In the following, we first describe which functions are used and for which tuples which function values are stored. Then we show that
\begin{itemize}
\item arbitrary function values can be efficiently computed from the stored function values,
\item the stored function values can be efficiently updated, and
\item initial stored function values can be efficiently computed, for new zones.  
\end{itemize}

In the following, we denote the  zone on level $\ell$ that contains a node $v$ by  $Z^\ell(v)$ and likewise the zone on level $\ell$ that contains a zone $Z$ by $Z^\ell(Z)$.

Besides the tree functions defined in \autoref{subsection:labelchanges-basics}, we use the following tree functions:
\begin{itemize}
\item a unary function \nchildren that yields the number of children of a node $x$;
\item a binary  function \numdesc\ such that $\numdesc(u,v)$ is the number of nodes in \tree[u]{v}, for siblings $u\preceq v$. We write $\numdesc_3(u,v)$ for the number of such nodes  that are inside $Z^3(v)$. We write \numdesc$(v)$ for \numdesc$(v,v)$. This function was needed in the proof of \autoref{lemma-partition-efficient}.
\end{itemize}
We lift functions and relations to zones. In the following, $Z_1$ and $Z_2$  always have the same level $\ell$ in the partition hierarchy. 
\begin{itemize}
\item $\anc(Z_1,Z_2)$ holds, if
  $\anc(u,u_2)$ holds for the lower boundary $u$ of $Z_1$ and all
  $u_2\in Z_2$;
\item If $\anc(Z_1,Z_2)$ holds, then
  $\ancchild(Z_1,Z_2)$ is the child zone $Z$ of $Z_1$ with
  $\ancself(Z,Z_2)$;
\item   $\lca(Z_1,Z_2)$ is the lowest zone
  $Z$ of level $\ell$ with $\ancself(Z,Z_1)$ and $\ancself(Z,Z_2)$;
\item $Z_1 \prec Z_2$ holds if $u_1\prec u_2$ holds  for all top nodes $u_1\in Z_1$ and $u_2\in Z_2$.
\end{itemize}

  Finally, there are some additional counting functions for zones, that do not mirror any  tree functions and mostly refer to the surrounding zone of the next level:
\begin{itemize}
\item For a zone $Z$ of level $\ell$, $\ntops(Z)$ yields the number of
  top nodes in $Z$, $\nzleft(Z)$ the number of sibling nodes of
  $Z$'s top nodes, that are to the left of $Z$ in
  $Z^{\ell+1}(Z)$, and $\nzdesc(Z)$ the number of nodes below $Z$ in
  $Z^{\ell+1}(Z)$. 
\item For two zones $Z_1\preceq Z_2$ of the same level
  $\ell$ with $Z^{\ell+1}(Z_1)=Z^{\ell+1}(Z_2)$, $\zsize(Z_1,Z_2)$ is
  the number of nodes in $Z^{\ell+1}(Z_1)$ that are inside or below any zone
  $Z$ with $Z_1\preceq Z \preceq Z_2$.
\end{itemize}

The algorithm stores the following information, where all zones belong  to the current primary $3$-pruned partition hierarchy $H$ of $t$.
All zones always means all operable zones that occur in the extended partition hierarchy.

\begin{itemize}
\item For all nodes $v$ of the tree, the algorithm stores $\parent(v)$, $\fchild(v)$, $\lchild(v)$, $\lsibling(v)$, $\rsibling(v)$, $\nchildren(v)$ and $\nsiblings(v)$.
\item For all pairs $(u,v)$ of nodes inside the same lowest zone  $Z^3(u)=Z^3(v)$, $\anc(u,v)$, $\ancchild(u,v)$, $\lca(u,v)$, $\numdesc_3(u,v)$, and whether $u\prec v$ holds, is stored. All these tuples are associated with zone $Z^3(u)$.
\item For all  zones $Z$, $\ntops(Z)$, $\nzleft(Z)$ and $\nzdesc(Z)$ are stored. These tuples are associated with $Z$.
\item For all  $\ell\ge 3$ and all pairs $(Z_1,Z_2)$ of zones of level $\ell$ with $Z^{\ell+1}(Z_1)=Z^{\ell+1}(Z_2)$ the function values $\lca(Z_1,Z_2)$, $\zsize(Z_1,Z_2)$ and $\ancchild(Z_1,Z_2)$ and whether $\anc(Z_1,Z_2)$ and $Z_1\prec Z_2$ hold, is stored. These tuples are associated with $Z^{\ell+1}(Z_1)$.
\item The algorithm stores $\calA_t(p,u,v)$ for each state $p$ of $\calA$ and each special horizontal pair $u,v$.  Furthermore, it stores $\calB_t(q,u,v)$, for each state $q$ of $\calB$ and each special vertical pair $u,v$. Here, special pairs are defined as in \autoref{subsection:labelchanges-efficient}, with the provision that all pairs $u,v$ of nodes of the same zone $S$ of level 3 with $u\prec v$ or $\anc(u,v)$ are special (horizontally or vertically), associated with $S$, and that all pairs $(\zleft(Z),\zright(Z))$ are special horizontally and all pairs $(\zupper(Z),\zlower(Z))$ special vertically, for all zones $Z$, and they are associated with $Z$.
\end{itemize}

\LemmaTreeFunctions*
\begin{proof}
	We first describe a $\ntheta$ work algorithm to compute $\child(v,k)$. 
	At first, it can be checked if $v$ has at least $k$ children using $\nchildren$, since otherwise, the value is undefined. If $v$ is an inner node in $Z^3(v)$, all its children are in $Z^3(v)$. Therefore, $\child(v,k)$ can be determined by testing whether $\childindex(v,u)=k$ holds, for each of the $\bigO(n^{3\theta})$ nodes $u$ in $Z^3(v)$ in parallel.
	Otherwise, if $v$ is a leaf  in $Z^3(v)$, let $\ell < h$ be maximal such that $v$ is a lower boundary node of $Z^\ell(v)$. The child zone $Z$ of $Z^\ell(v)$ with $\nzleft(Z)<k\le \nzleft(Z)+\ntops(Z)$ can be identified by inspecting all zones in $Z^{\ell+1}(v)$. Clearly, this zone contains $\child(v,k)$. Starting from $Z'_\ell=Z$, in $\ell-3$ steps, zones $Z'_i$ of lower levels containing $\child(v,k)$ as top node can be identified, for  $\ell>i\ge 3$, until  the lowest level zone $Z'_3$ containing  $\child(v,k)$  is reached. Finally, $\child(v,k)$ can be determined in $Z'_3$. In either case, the work is bounded by $\bigO(n^{3\theta})$.
       
    To evaluate the other tree functions, we use the following two simple observations. First, for each node $u$ and level $\ell$ it can be determined whether $u$ is an ancestor of $u_\ell=\zlower(Z^\ell(u))$ in constant sequential time by testing if $\ancself(u,\zlower(Z^3(u))$ holds and $\ancself(Z^i(u), Z^i(\zlower(Z^{i+1}(u)))$ hold for all $3 <  i < \ell$.  All this information is stored. Secondly, it can be determined in sequential constant time whether a node $u$ is a top node in $Z^\ell(u)$ by checking if $Z^\ell(u)\neq Z^\ell(\parent(u))$ holds.
    
    We now show that all other tree functions can be evaluated with constant work, given the stored information.
 	The relations $\ancself$ and $\preceq$ can easily be computed from the function $\anc$ and $\prec$ with an additional equality test.

        Let $u,v$ be any nodes and let $j$ be the lowest level such that $Z^j(u)= Z^j(v)$ holds.
        If $j=3$, $u\preceq v$, $\ancself(u,v)$, $\lca(u,v)$ and $\ancchild(u,v)$ are stored.
		Otherwise, $\anc(u,v)$  holds, if and only if  $\anc(Z^{j-1}(u),Z^{j-1}(v))$ holds and $u$ is an ancestor 
	 	of $\zlower(Z^{j-1}(u))$. Likewise, $u\prec v$ holds if and only if $Z^{j-1}(u) \prec Z^{j-1}(v)$ holds and $u$ and $v$ are both top nodes in $Z^{j-1}(u)$ and $Z^{j-1}(v)$, respectively. 
	 	Further, $\lca(u,v)=\zlower(\lca(Z^{j-1}(u), Z^{j-1}(v)))$. 
		Obviously, all these computations can be done with constant work.

     To  compute $\ancchild(u,v)$ if $j>3$, it can be tested in constant sequential time if $\anc(u,v)$ holds, as above. If so and $u\neq\zlower(Z^3(u))$ holds, then $\ancchild(u,v)=\ancchild(u,\zlower(Z^3(u))$. If $\anc(u,v)$ and $u=\zlower(Z^3(u))$ hold, let $i$ be the highest level with $u=\zlower(Z^i(u))$. If $i +1 < j$ holds, $v$ is a descendant of $\zlower(Z^i(u))$ and therefore\footnote{We recall that $\zupper(Z)$ is defined for tree zones \emph{and} non-tree zones.} $\ancchild(u,v)=\zupper(Z)$ with $Z=\ancchild(Z^i(u), Z^i(u_{i+1}))$. If $j=i+1$ holds, $\ancchild(u,v) =\zupper(Z)$ for $Z=\ancchild(Z^i(u), Z^i(v))$.
	
	The index $\ancindex(u,v)$ can be simply obtained as $\nsiblings(\ancchild(u,v))+1$.

	For $\numdesc(u,v)$, let $\numdesc^{t}(Z)=\sum_{i=\ell}^{h-1} \numdesc(Z^{i+1}(Z))$ denote the total number of descendants in $t$ of a zone $Z$ of level $\ell$ . For two zones $Z_1 \preceq Z_2$ of level $\ell$ that are in the same zone $Z$ on level $\ell+1$, we denote $\textsc{size}^t(Z_1,Z_2)= \zsize(Z_1,Z_2)+\numdesc^t(Z)$ if $Z_1 \preceq \ancchild(\parent(S_1), \textsc{lower}(Z))\preceq Z_2$ and else $\textsc{size}^t(Z_1,Z_2)= \zsize(Z_1,Z_2)$. Hence, $\textsc{size}^t$ gives the number of all nodes in, below or between $Z_1$ and $Z_2$ in $t$. Similarly, for two nodes $u,v$ with $Z^3(u)=Z^3(v)$, we define $\textsc{size}^t(u,v)= \numdesc_3(u,v)+\numdesc^t(Z^3(u))$ if $u \preceq \ancchild(\parent(u), \textsc{lower}(Z^3(u)))\preceq v$ and else $\textsc{size}^t(u,v)= \numdesc_3(u,v)$.
	
	For each level $\ell$, let $u_\ell = \zright(Z^\ell(u))$ and $v_\ell=\zleft(Z^\ell(v))$. Let $j$ be the lowest level where $Z^j(u)=Z^j(v)$ holds. At level $j-1$ the parent of $u$ and $v$ must be a lower boundary node and all top nodes of $Z^i(u)$ and $Z^i(v)$ for $i<j$ are siblings of $u$ and $v$.
	
	Then, we have $\numdesc(u,v)=\textsc{size}^t(u,u_3)+ \sum_{i=3}^{j-1} \textsc{size}^t(Z^i(\rsibling(u_i),Z^i(u_{i+1})))+\textsc{size}^t(Z^{j-1}(u_{j-1}),Z^{j-1}(v_{j-1}))+\sum_{i=3}^{j-1} \textsc{size}^t(Z^i((v_{i+1}),Z^i(\lsibling(v_{i}))))+\textsc{size}^t(v_3,v)$.

      \end{proof}

      \LemmaUpdateTreeFunctions*

      \begin{proof}
	It is clear that a $\Relabel$ operation does not change the shape of the input tree and thus all stored functions are unaffected. 
	
	At an $\AddChild$ operation, a leaf $v$ without right siblings is inserted.
	For the functions \parent, \lsibling, \rsibling, \fchild, \lchild, \nchildren and \nsiblings, such an operation only affects at most two function values (the one of $v$ and possibly one for its parent or left sibling). It is easy to see that these functions can be updated with constant work.
	
	The number $\nsiblings$ does not change for any node at the insertion of a leaf. So, the only affected node is the added leaf itself and it can be deduced from its left sibling, if it exists.

	The values $\anc(u,u')$,$\ancchild(u,u')$ and $\lca(u,u')$ for pairs $(u,u')$ of nodes i $u,u'$ nside the same lowest zone are only affected by the insertion of $v$, if $u'=v$ holds. It is easy to see that these new values can be deduded directly from information about $\parent(v)$ and $\lsibling(v)$ for all affected pairs in parallel and hence, computation is possible with work $\ntheta[3]$.
	
	The function $\numdesc_3(u,u')$ must be updated for tuples associated with $Z^3(v)$. It is increased by $1$ if and only if $u\preceq \ancchild(\parent(u),v)\preceq u'$ holds. There are $\ntheta[6]$ such tuples and hence, the update takes $\ntheta[6]$ work.
	
	The function $\ntops$ needs to be updated only for zones $Z=Z^\ell(v)$ for $\ell\geq 3$. There is one such zone per level.
	The value $\ntops(Z)$ needs to be increased by $1$ exactly if $Z^\ell(\parent(v))\neq Z$ holds. 
	The value $\nzleft(Z)$ is never affected by $v$, as $v$ has no right siblings.
	
	The value $\nzdesc(Z)$ needs to be incremented for all zones $Z$ of level $\ell$ in $Z^{\ell+1}(v)$ by $1$ if and only if $\anc(Z,Z^\ell(v))$ holds which is stored explicitly, resulting in $\ntheta$ work.
	
	For each level $\ell>3$ and $\ell$-zones $Z_1,Z_2$, the values $\lca(Z_1,Z_2)$, $\ancchild(Z_1,Z_2)$ and whether $\anc(Z_1,Z_2)$ and $Z_1\prec Z_2$ hold are unaffected by the update of $v$, as the structure of the partition hierarchy does not change.
	
	The value $\zsize(Z_1,Z_2)$ needs to be incremented by $1$ exactly if the tuple is associated with $Z^{\ell+1}(v)$ and $Z_1\preceq \ancchild(\parent(Z_1),Z^\ell(v))\preceq Z_2$ holds, which can be checked in constant sequential time as described above. There are at most $\ntheta[2]$ such tuples.
	 
\end{proof}

To complete this section, we show that initial stored function values can be efficiently computed for new zones.

\begin{lemma}\label{lemma-compute-zone-functions}
	Let $\theta>0$ and $H$ be a $3$-pruned partition hierarchy of $t$ with automata and tree functions. Let $S$ be a new $\ell$-zone, for some level $\ell$, partitioned into $\ntheta$ sub-zones. Then for each automata and tree function, the function values for all tuples associated with $S$, can   be computed in constant time with $\ntheta[7]$ work.
\end{lemma}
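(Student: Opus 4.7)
\begin{proofidea}
I would treat each class of function value associated with $S$ separately, bound the work for each, and sum. Throughout, the sub-zones of $S$ are assumed to carry their complete stored information (they were made operable in earlier rounds), and node-level queries are handled by \autoref{lemma-compute-from-stored-tree-functions}.

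\emph{Tree data for $S$.} For the scalar counts $\ntops(S)$, $\nzleft(S)$ and $\nzdesc(S)$ I would aggregate the corresponding counts over the $\bigO(n^\theta)$ sub-zones of $S$ using the CRCW PRAM's constant-time sum, incurring \ntheta[2] work (a loose but safe bound). For the pairwise zone information on pairs $(Z_1,Z_2)$ of sub-zones, there are $\bigO(n^{2\theta})$ pairs; for $\anc$, $\prec$, $\lca$ and $\ancchild$ each pair is resolved in constant work from the sub-zones' boundary nodes together with the node-level tree functions, and for $\zsize(Z_1,Z_2)$ the size contribution of the sub-zones and their descendants strictly between $Z_1$ and $Z_2$ can be obtained as a constant-time range sum in \ntheta[2] work per pair, for \ntheta[4] in total.

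\emph{Automata data on special pairs.} A special horizontal pair $(u,v)$ of $S$ is always formed from boundary nodes of sub-zones, so there are at most \ntheta[2] of them. For each, $\calA_t(\cdot,u,v)$ is the monoidal product of the stored tables $\calA_t(\cdot,\zleft(Z_k),\zright(Z_k))$ over the sub-zones $Z_k$ between $u$ and $v$. Since $Q_\calA$ has constant size, each such table is an element of a finite monoid of endomaps of $Q_\calA$, and the product of $\bigO(n^\theta)$ such elements can be computed in constant time with \ntheta[2] work on a CRCW PRAM, giving \ntheta[4] overall. Special vertical pairs for $\calB_t$ are handled analogously along ancestor chains of sub-zone boundary nodes.

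The main obstacle, and the main source of case distinctions, is the treatment of incomplete zones: when $S$ or some ancestor zone has a lower boundary $w$, the definitions of special pairs split the natural composition at an ancestor $w'$ of $w$, so that the monoidal products above have to be broken into two sub-products and recombined. Handling these boundary cases carefully, but without increasing the work per special pair beyond \ntheta[2], keeps the total within $\bigO(n^{7\theta})$.
\end{proofidea}
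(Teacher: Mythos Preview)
Your approach diverges from the paper's in a way that creates a genuine gap. You try to compute the data associated with $S$ \emph{bottom-up} from the stored information of its newly created sub-zones: sums of counts over $\bigO(n^\theta)$ sub-zones for $\ntops$, $\nzleft$, $\nzdesc$ and $\zsize$, and iterated monoid products of $\bigO(n^\theta)$ transition tables for $\calA_t$ and $\calB_t$. But neither iterated addition of $\bigO(n^\theta)$ numbers nor iterated product in an arbitrary finite monoid is computable in constant time on a CRCW PRAM: both problems are outside $\ACz$ (iterated monoid product is, up to first-order reductions, regular language membership, and already parity is not in $\ACz$). So your claimed ``constant-time sum'' and ``constant time with \ntheta[2] work'' for the monoid product are unjustified, and in general false. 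The treatment of the scalar counts is also off on a semantic level: $\ntops(S)$ is not a sum of $\ntops$ over the sub-zones (a top node of a sub-zone need not be a top node of $S$), and the values that actually need to be initialised here are $\ntops(Z)$, $\nzleft(Z)$, $\nzdesc(Z)$ for the \emph{sub-zones} $Z$ of $S$, each relative to $S$.

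The paper sidesteps all of this with one observation you underuse: the hypothesis hands you a complete, operable $3$-pruned hierarchy $H$ of $t$, and \autoref{lemma-compute-from-stored} and \autoref{lemma-compute-from-stored-tree-functions} let you evaluate $\calA_t$, $\calB_t$, $\numdesc$, $\anc$, $\lca$, $\ancchild$, $\nsiblings$, etc.\ at \emph{arbitrary} arguments in constant sequential time via $H$. Hence every tuple associated with $S$ is computed independently in $\bigO(1)$ work by a closed-form expression in terms of boundary nodes (e.g.\ $\ntops(Z)=\nsiblings(r_Z)-\nsiblings(l_Z)$, $\zsize(Z_1,Z_2)$ from $\numdesc$ on boundary nodes), and the whole proof reduces to counting tuples: at most $\bigO(n^{6\theta})$ node pairs when $\ell=3$ (a case you do not address), and $\bigO(n^{2\theta})$ sub-zone pairs and special pairs when $\ell>3$. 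No aggregation, no monoid products, no case analysis for incomplete zones beyond what is already absorbed into \autoref{lemma-compute-from-stored}.
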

\begin{proof}
	The values of all tree and automata functions for an arbitrary tuple $(u,v)$ can be computed in sequential constant time using $H$ as shown in \autoref{lemma-compute-from-stored} for automata and in \autoref{lemma-compute-from-stored-tree-functions} for tree functions.
	
	If $\ell=3$, the functions $\anc$, $\ancchild$, $\lca$, $\numdesc_3$ and $\prec$, as well as the automata functions need to be computed for every pair of nodes in the same zone. We have $\bigO(\ntheta[7])$ such tuples and overall work.
	
	For a zone $Z$ of $S$, let $l_Z$ and $r_Z$ be the left and right boundary nodes. Then,  $\ntops(Z)=\nsiblings(r_Z)-\nsiblings(l_Z)$. If $l_{S}\preceq l_Z $, then $\nzleft(Z)=\nsiblings(l_Z)-\nsiblings(l_S)$ holds and else, $\nzleft(Z)=\nsiblings(l_Z)$.
	Similarly, let $b_Z$ be the lower boundary of $Z$. Then, $\nzdesc(Z)=\numdesc(b_Z)-\numdesc(b_S)$ if $\anc(b_S,b_Z)$ holds and else, $\nzdesc(Z)=\numdesc(b_Z)$.
	
	For two zones $Z_1,Z_2$ with left boundaries $l_1,l_2$ right boundaries $r_1,r_2$ and lower boundaries $b_1,b_2$ it holds that $\ancself(Z_1,Z_2)=\ancself(v_1,l_1)$, $Z_1\preceq Z_2$ if and only if $l_1 \preceq l_2$ and  $\lca(Z_1,Z_2)=Z^\ell(\lca(l_1,l_2))$. Additionally, $\zsize(Z_1,Z_2)=\numdesc(l_1,r_2)+\numdesc(b_S)$ if $Z_1\preceq \ancchild(\parent(Z_1),Z(b_S))\preceq Z_2$ holds, and else $\zsize(Z_1,Z_2)=\numdesc(l_1,r_2)$. This results in $\ntheta[2]$ work.
	
	For the automata functions, we also have $\ntheta[2]$ special tuples of level $\ell$ in $S$ that can be inferred with the help of $H$ as described before. Hence, computation takes 
	$\ntheta[2]$ work as well.
	
	Since $S$ is partitioned into $\ntheta$ zones, at most $\ntheta[2]$ zones require an update. An algorithm given $S$ and an array containing $S_1,\dots, S_k$ (or the nodes in $S$ if $\ell=3$) can just assign one processor to every pair of positions (or single position in the case of unary functions) in the array and compute the function value in parallel using $H$ in constant sequential time.
\end{proof}
 
\newpage
\section{Appendix for \autoref{subsec:dcfl}}
Since we only consider deterministic context-free languages in this paper, we only define deterministic pushdown automata.

 \begin{defi}
    A \emph{deterministic pushdown automaton} (DPDA) $(Q, \Sigma, \Gamma, q_0, \tau_0, \delta, F)$ consists of
    a set of states $Q$,
    an input alphabet $\Sigma$,
    a stack alphabet $\Gamma$,
    an initial state $q_0 \in Q$,
    an initial stack symbol $\tau_0 \in \Gamma$,
    a partial transition function $\delta: (Q \times (\Sigma \cup \{\lambda\}) \times \Gamma) \to (Q \times \Gamma^\ast)$ and
    a set $F \subseteq Q$ of final states.

    Here, $\delta(p,\lambda,\tau)$ can only be defined, if $\delta(p,\sigma,\tau)$ is undefined, for all $\sigma\in\Sigma$.  
    A DPDA is called realtime (RDPDA) if it does not have any $\lambda$-transitions.
\end{defi}

A configuration $(p,w,u)$ of a DPDA consists of
a state $p \in Q$,
the suffix $w \in \Sigma^\ast$ of the input word that still needs to be read and
the current stack content $u \in \Gamma^\ast$.
A run of a PDA $\calA$ on a word $w \in \Sigma^\ast$ is the uniquely determined maximal sequence of configurations $(p_1,w_1,u_1) \ldots (p_m,w_m,u_m)$
where for every $i < m$ either
\begin{itemize}
    \item $\delta(p_i,\sigma,\tau) = (p_{i+1},\gamma)$, $\sigma \in \Sigma$, $\tau \in \Gamma$, $w_i = \sigma w_{i+1}$, $u_i = \tau u_i'$ and $u_{i+1} = \gamma u_i'$ or
    \item $\delta(p_i,\lambda,\tau) =(p_{i+1},\gamma)$, $\tau \in \Gamma$, $w_i = w_{i+1}$, $u_i = \tau u_i'$ and $u_{i+1} = \gamma u_i'$
\end{itemize}
holds.
A run is \emph{accepting} if $w_m=\lambda$ and $p_m \in F$ hold.

The language $L(\calA)$ decided by a PDA $\calA$ consists of exactly the words $w$ on which there is an accepting run of $\calA$ that starts with the configuration $(s,w,\tau_0)$.

    A language $L$ is called realtime deterministic context-free if there is a (realtime) deterministic PDA $\calA$ with $L(\calA)=L$.

    For each  (realtime) deterministic PDA $\calA$ there is a  (realtime) deterministic PDA $\calA'$ with $L(\calA')=L(\calA)$ and $|\gamma|\le 2$, whenever    $\delta'(p,\sigma,\tau)=(q,\gamma)$. We will therefore assume that all DPDAs obey this restriction.  Transitions with , $|\gamma|=0$ ($1$, $2$) are called  \emph{pop transitions} (\emph{stack neutral},  \emph{push transitions}).

    \TheoDCFL*

Before we describe the proof of \autoref{theo:dcfl}, we give a glimpse
of the proof of \autoref{theo:cfl}. The dynamic algorithm in that  proof was formulated
relative to a context-free grammar $G$, but on a high-level it can
also be described with respect to a PDA $\calA$. In this view, the
algorithm maintains information about the possible behaviour of
$\calA$ on two substrings $u,v$ of the input string $w$, where $v$ is
strictly to the right of $u$. Basically that information is, for given
$p,q,\tau$, whether
there a is a 
run of $\calA$ on $u$ starting from state $p$ with a stack consisting
only of
$\tau$ that produces a stack content $s$, such that there is another
run of $\calA$ on $v$  starting from state $q$ with stack content
$s$ that empties its stack at the end of $v$. 

The program thus uses 4-ary relations (induced by the first and last
positions of $u$ and $v$ in $w$) and needs work $\bigO(n^3)$, for each
4-tuple. We do not know whether it can be significantly improved.

Our algorithm will maintain  different relations and functions, but 
some of them will
also be 4-ary. However, its work per tuple
update is only \ntheta, for some $\theta>0$ depending on
$\epsilon$. This is mainly thanks to the fact that the run of a  RDPDAs is
uniquely determined, for any given configuration. Furthermore, the
algorithm will use only \ntheta \emph{special values} for the 4-th parameter of
tuples, therefore yielding an overall work of $n^{3+\epsilon}$.

\begin{proofsketch}
  Let $\calA$ be a fixed RDPDA. By $w$ we always denote  the current
  input word. We often write substrings of $w$ as $w[i,j]$ if their
  boundary positions $i$ and $j$ matter, but often, we will refer to a
  substring $u$ of $w$ without explicit mention of these
  positions. However, such a $u$ should always be understood as
  a particular occurrence of substring at a particular position. The
  algorithm maintains a length bound $n$, as discussed in \autoref{section:preliminaries}. We
  first consider the situation where the length of $w$ remains between
  $\frac{1}{4}n$ and $\frac{1}{2}n$.
  
The dynamic algorithm for $L(\calA)$ is similar to the algorithm of \autoref{theo:cfl}, in
that it maintains information on runs of $\calA$ on strings $v$, that
start from the stack that was produced by another run on a string $u$,
which is located somewhere to the left of $v$ in the current input
word $w$.

A configuration $C=(p,u,s)$ consists of a state $p$, a string $u$ that
is supposed to be read by $\calA$ and a string $s$, the initial stack
content. 

We use the function $\topk(s)$ which yields the string of the $k$
topmost symbols of $s$.
We write $\run(C)$ for the unique run of $\calA$ starting from $C$.

We use the following functions $\dstate$, $\dstack$, and $\dempty$ to describe the
behaviour of $\calA$ on configurations.
\begin{itemize}
    \item $\dstate(C)$ yields the last state of $\run(C)$.
    \item $\dstack(C)$ yields the stack content at the end of  $\run(C)$.
         \item $\dempty(C)$ is the position in $u$, after which $\run(C)$ empties its stack. It is zero, if this does not
           happen at all.
         \end{itemize}

The algorithm will maintain the following information, for each
configuration $C=(p,u,\tau)$, where $u=w[i,j]$, for some $i\le
j$, for each suffix $v=w[m,n]$ of $w$, where $j<m$, each state $q$, and some $k\le n$.
\begin{itemize}
\item $\dall(C)$ defined as the tuple $(\dstate(C), |\dstack(C)|,
  \topk[1](\dstack(C)) ,  \dempty(C),)$, consisting of  the
  state,   the height of the stack, the top symbol of the stack,  
 at the end of the run on
  $C$ and the position where the run ends. If the run empties the
  stack prematurely or at the end of $u$, then $\topk[1](\dstack(C))$
  is undefined;
\item $\pushpos(C,k)$, defined as the length of the longest prefix
  $x$ of $u$, such that
  $|\dstack(p,x,\tau)|=k$. Informally this is the
  position of $u$ at which the $k$-th symbol of $\dstack(C)$, counted
  from the bottom,  is
  written;
  \item $\poppos(C,q,v,k)$, defined as the pair $(o,r)$, where $o$ is
    the length of the shortest prefix
  $v'$ of $v$, for which $\run(q,v, \topk(\dstack(C)))$ empties its stack,  and $r$ is the state it enters. Formally,
  $\poppos(C,q,v,k)=(\dempty(q,v,\topk(\dstack(C))),\dstate(q,v',
  \topk(\dstack(C)))$, where $v'$ is the prefix of $v$ of length
  $\dempty(q,v,\topk(\dstack(C)))$. We refer to $o$ by
  $\popposi(C,q,v,k)$ and to $r$ by  $\popposii(C,q,v,k)$.
\end{itemize}
We note that $\pushpos(C,\cdot)$ and $\poppos(C, \cdot, \cdot, \cdot)$
are undefined,  if the run of $\calA$ on
$C$ ends prematurely.

We next sketch how these functions can be updated with work $\bigO(n^5)$ after a change
operation, under the assumption that the function values are stored,
for every $k\le n$. Afterwards, we explain how work $\bigO(n^{3+\epsilon})$,
for every $\epsilon>0$, can be achieved by storing  $\poppos$ only for
special $k$.

We first explain, how $\dall(C)$ can be updated after changing  some position $\ell$
from 
a symbol $\sigma$ to a new symbol $\sigma'$. 
Clearly, the tuple needs
only be modified, if $\ell\in[i,j]$. Let
$u_1=w[i,\ell-1]$ and $u_2=w[\ell+1,j]$.
Thus, $u=u_1\sigma u_2$ and the update needs to determine 
$\dall(C')$ for $C'=(p,u',\tau)$, where $u'=u_1\sigma' u_2$.
We refer to
Figure~\ref{fig:dcfl-overview} for an illustration.

The algorithm first figures out the behaviour of
$\calA$ on $u_1\sigma'$. The state and the top stack symbol after
reading $u_1$ can be obtained as $p_1=\dstate(p,u_1,\tau)$ and
$\tau_1=\topk[1](\dstack(p,u_1,\tau))$. Let further $h=|\dstack(p,u_1,\tau)|$.

If the transition given by $\delta(p_1,\sigma',\tau_1)$ is a push or
stack neutral transition, the
state $p_2$  and the top
stack symbol $\tau_2$  after reading $\sigma'$ can be determined from
$\delta$.

In the very simple case that the run on $u_2$, starting from $p_2$ with stack symbol $\tau_2$
does not empty its stack, then its behaviour determines
$\dall(C')$. In particular,   $\dstate(C')=\dstate(p_2,u_2,\tau_2)$ if the
latter is defined.

\begin{figure}[t]
    \includegraphics[scale=0.165]{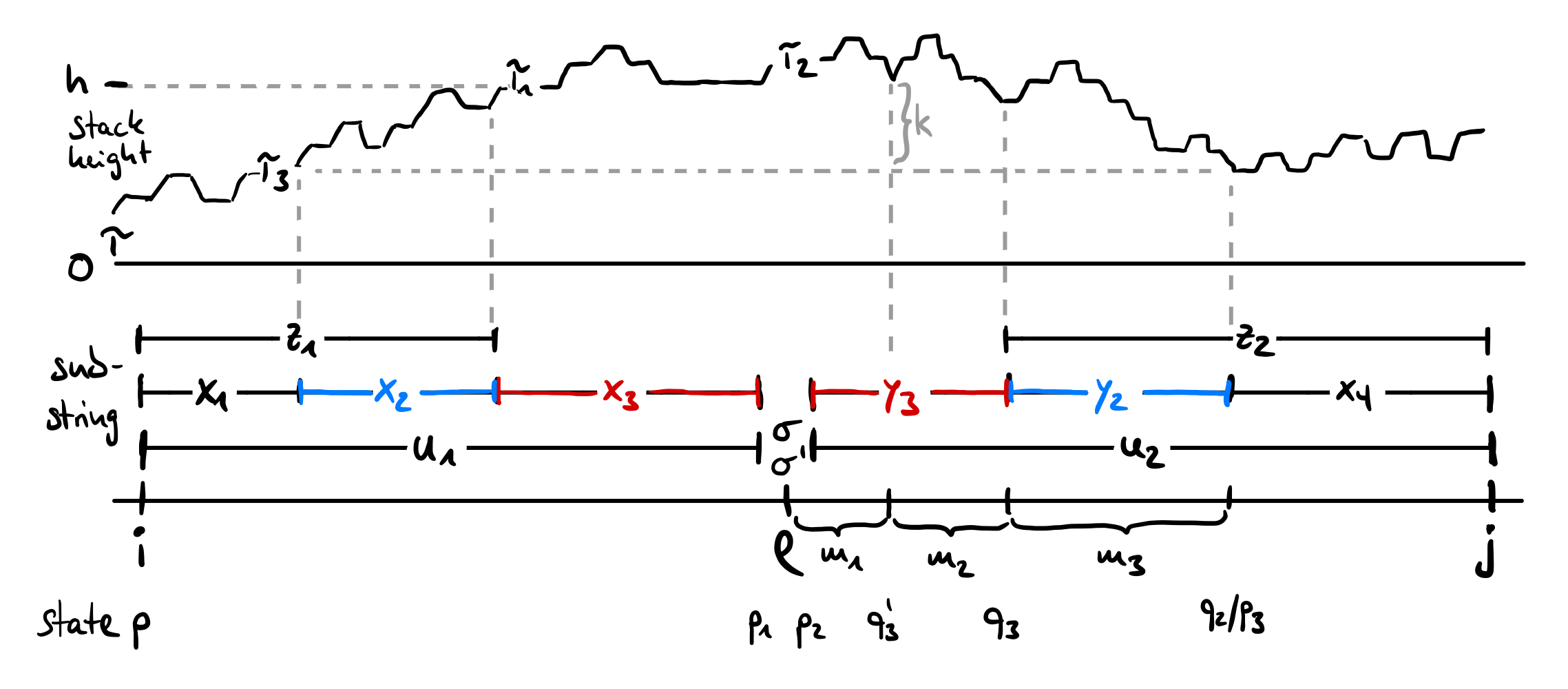}
    \caption{Illustration of the update of $\dall$ after a push
      transition at position $\ell$, i.e., Case (3).}
    \label{fig:dcfl-overview}
\end{figure}

In the general case, the algorithm decomposes $u_1$ as
$x_1x_2x_3$ and $u_2$ as $y_3y_2x_4$, where
\begin{itemize}
\item $y_3$ is the minimal prefix of $u_2$ such that
  $|\dstack(p,u_1\sigma'y_3,\tau)|=h-1$; 
\item given this $y_3$, $x_3$ is the minimal suffix of $u_1$ such that
  for the remainder $z_1$ of $u_1$ (without suffix $x_3$)
  $|\dstack(p,x_1x_2,\tau)|=h-1$ holds; 
\item $k$ is maximal such that 
  $h-k=|\dstack(p,u_1\sigma'y_3y_2,\tau)|$, for some  prefix
  $y_2$ of the remainder $z_2$ of $u_2$
  (without $y_3$);
\item $y_2$ is the minimal such prefix of $z_2$;
\item 
  $x_1$ is the maximal prefix of  $z_1$ such that
  $|\dstack(p,x_1,\tau)|=h-k$;
\item $x_2$ and $x_4$ are the remaining strings in $z_1$ and $z_2$, respectively.
\end{itemize}
Intuitively, the strings $x_3$ and $y_3$ are chosen in a minimal
fashion such that the computation of $\calA$ on the remaining parts of
$u_1$ and $u_2$ can be easily reconstructed from the stored values.    
The choice of $y_2$ guarantees that the run on 
$x_4$ is non-decreasing.  Therefore, it suffices to determine
the state $p_3$ and the top stack symbol $\tau_3$ after the run on
$x_1x_2x_3\sigma'y_3y_2$ to determine  $\dstate(C')$ as
$\dstate(p_3,x_4,\tau_3)$. The other components of $\dall(C')$ can
be determined in a similar fashion. For $|\dstack(C')|$, $h-k$ has to
be taken into account.
In the following, $q_3$ will denote the state of $\calA$ after reading
$x_1x_2x_3\sigma'y_3$.

The actual value of $y_3$ depends on the   transition type at
position $\ell$.

We distinguish three cases, depending on the transition type at
position $\ell$.

\begin{description}
    \item[Case (1)] $\delta(p_1,\sigma',\tau_1)$ is a pop transition.
      In this case, $\delta(p_1,\sigma',\tau_1)$ yields $q_3$ and  $y_3=\lambda$. 
\item[Case (2)] $\delta(p_1,\sigma',\tau_1)$ is stack neutral. Then
  the length of $y_3$ and $q_3$ can be determined as
  $\dempty(p_2,v,\tau_2)$ and $\dstate(p_2,v,\tau_2)$,
  respectively.
\item[Case (3)] $\delta(p_1,\sigma',\tau_1)$ is a push
  transition. This is the precise case, illustrated in Figure~\ref{fig:dcfl-overview}. In this case $y_3$ is the shortest prefix of $u_2$ on
  which $\calA$ deletes  $\tau_2$ and then $\tau_1$ from the
  stack. Thus the algorithm first determines
  $m_1=\dempty(p_2,v,\tau_2)$ and $q'_3=\dstate(p_2,v,\tau_2)$, where $v=w[\ell+1,n]$. Then
  it sets $m_2=\dempty(q'_3,v',\tau_1)$ and $q_3=\dstate(q'_3,v',\tau_1)$, where
  $v'=w[\ell+m_1+1,n]$. Then, $y_3$ is chosen as the prefix
  of $u_2$ of length $m_1+m_2$. 
  However, we need to take care of the special case  that the run of $\calA$
  on $(q'_3,v',\tau_1)$ does not empty its stack, in which
  case $\ell+m_1+m_2>j$. In that case $x_4$ can be chosen as
  $w[\ell+m_1+1,j]$ and $\dstate(C')$ can be determined as
  $\dstate(q'_3,w[\ell+m_1+1,j],\tau_1)$. Similarly for the other
  components of $\dall(C')$. 
\end{description}

The end of $x_2$ (and thus $x_3$) is now identified as the position, where the 
symbol at position $h-1$ of the stack $\dstack(p,u_1,\tau)$ was pushed,
which is just $\pushpos((p,u_1,\tau),h-1)$.

Let $z_1$ and $z_2$ be defined as above. The length $m_3$ of $y_2$ can
be determined as
$(m_3,q_2)=\poppos((p,z_1,\tau),q_3,z_2,k)$, for the maximal $k$, for
which $m_3\le |z_2|$. And $q_2$ yields the desired first state $p_3$ of the run
on $x_4$. The end of $x_1$ can be determined as $\pushpos((p,u_1,\tau),h-k-1)$
and the initial stack symbol $\tau_3$  for the run on $x_4$ as
$\topk[1](\dstack(p,x_1,\tau))$.

The update of $\pushpos(C',k)$ is along similar lines. In  the very simple
case that $\dempty(p_2,u_2,\tau_2)=0$ holds,  $\pushpos(C',k)$ can
be determined by inspecting $\pushpos((p_2,u_2,\tau_2),\cdot)$ and
$\pushpos((p,u_1,\tau),\cdot)$. Otherwise, it can be determined by
inspecting $\pushpos((q_3,x_4,\tau_3),\cdot)$ and
$\pushpos((p,x_1,\tau),\cdot)$.

If the position $\ell$ of the  change is within $u$, the update of
$\poppos(C,q,v,k)$ is similar, as well: in the simple case, first
$\poppos((p_2,u_2,\tau_2),q,v,\cdot)$ is considered and then,
possibly, $\poppos((p,u_1,\tau),q,v,\cdot)$; in the other case,
$\poppos((q_3,x_4,\tau_3),q,v,\cdot)$ and
$\poppos((p,x_1,\tau),q,v,\cdot)$ are relevant.

Let us finally consider the update of $\poppos(C,q,v,k)$, if a position
$\ell$ which is located inside $v$, is changed from $\sigma$ to
$\sigma'$. To this end, let $v=v_1\sigma v_2$ and $v'=v_1\sigma'v_2$
and let $m_k=\popposi(C,q,v,k)$. If
$m_k\le |v_1|$, then $\poppos(C,q,v',k)=\poppos(C,q,v,k)$. Otherwise,
the algorithm determines the largest $k'\le k$, for which $m_{k'}=\popposi(C,q,v,k')\le
|v_1|$. Let  $x_0$ be the
prefix of $v$ of length $m_{k'}$.  With the help of
$\pushpos(C,\cdot)$ and $\dall$ it figures out
$\tau'=\topk[1](\dstack(q,x_0,\dstack(C)))$. We note that this is a
symbol that is pushed by the computation on $C$ and becomes accessible
by the final pop transition of the computation on $x_0$. 
Similarly, as for 
updating $\dall$, the algorithm determines strings $x_1$ and $y_1$, such that
$x_0x_1\sigma'y_1$ is the shortest prefix of $v'$, for which the run
on $(q_{k'},x_1\sigma'y_1,\tau')$ empties its stack (and reaches some
state $q'$). The value of
$\poppos(C,q,v',k)$ can then basically be determined as
$\poppos((p,z_1,\tau),q',v'',k-k')$, where $v''$ is such that
$v'=x_0x_1\sigma'y_1v''$ and $z_1$ is the maximal prefix of $u$ that
yields a stack that is $k'$ symbols lower than at the end of
$u$. 

For each possible value of $i<j<m$ and $k$, the work of the algorithm
to update $\poppos$
is dominated by computing $(m_{k'},q_{k'})$ for all $k'\le k$. The
overall work per update is thus bounded by $\bigO(n^5)$. 

To improve the overall work, the algorithm stores $\pushpos$ and
$\poppos$ only for tuples, in which the parameter $k$ is of the form
$an^{b\theta}$, for integers $b<\frac{1}{\theta}$ and $a\le n^\theta$,
for some fixed $\theta>0$. Let us assume in the following, for simplicity, that
$g=\frac{1}{\theta}$ is an integer.

We next describe how $\pushpos$ and
$\poppos$ can be computed in a constant number of (sequential) steps
for tuples with arbitrary $k$, given only the values for tuples with
``special'' values of $k$.
Let $k$ be given and let $a_0,\ldots,a_g$, $a_i\le n^\theta$, for
each $i$ be such
that $k=a_0+a_1n^\theta+\cdots a_gn^{g\theta}$ is the $n^\theta$-adic
representation of $k$. Then $\pushpos((o,u,\tau),k)$ can be determined
as follows: Let $u_{g+1}=u$ and let, for each $i\le g$ with $i\ge 0$, $u_{i}$ be the
prefix of length $\pushpos((o,u_{i+1},\tau),a_in^{i\theta})$. Then $\pushpos((o,u,\tau),k)=u_0$.

Similarly $\poppos$ can be computed for tuples with arbitrary $k$ in a
constant number of (sequential) steps. This computation requires the
computation of values of \pushpos, as well.

Finally, to update \poppos for tuples with special $k$, the
 maximal number $k'$ can be computed in a constant number of stages:
 first, the maximal number of the form $a_gn^{g\theta}$ with the desired
 property is determined, then for the remainder of $v$ the maximal
 number of the form $a_{g-1}n^{(g-1)\theta}$ is determined and so
 on. In
 this way the update per tuple requires only work $\bigO(n^{2\theta})$,
 if we take the actual determination of the maxima into
 account.

 With this modification, only $n^{3+\theta}$ tuples need to be updated
 with work $\bigO(n^{2\theta})$ per tuple. By choosing
 $\theta=\frac{\epsilon}{3}$ the desired work bound
 $\bigO(n^{3+\epsilon})$ can be achieved. 

 Furthermore, all computations are in constant time.

 We finally sketch how insertion operations can be handled. The
 insertion of a symbol $\sigma$ between positions $i$ and $i+1$ of $w$
 can be conceptually handled in two steps: first all positions from
 $i+1$ on are shifted by one to the right and a virtual neutral symbol
 is inserted that does not affect the state nor the stack. All
 relations can be adapted in a straightforward fashion. Since the size
 of the data structures is $\neps[n+3]$, this can be done with work
 $\neps[n+3]$. Then the neutral symbol is replaced by $\sigma$,
 following the procedure from above.

 As soon as
 $|w|$ reaches $\frac{1}{2}n$, the bound $n$ is doubled and the data
 structure is adapted accordingly: it is copied to a larger
 memory area and the ``special values'' for $k$ are adapted. Since function values for arbitrary tuples can be
 computed from function values for special tuples with constant work,
 this transition is possible in constant time and with work in the
 order of the size of the data structure, i.e.,  $\neps[n+3]$.  
\end{proofsketch}
 
\newpage
\section{Appendix for \autoref{subsec:vpl}}
\newcommand{\hleft}{\ensuremath{h_{\swarrow}}\xspace}
\newcommand{\hright}{\ensuremath{h_{\searrow}}\xspace}
A pushdown alphabet is a tuple $\tilde{\Sigma}=(\Sigma_c,\Sigma_r, \Sigma_{\text{int}})$ consisting of three disjoint finite alphabets. We call the symbols in $\Sigma_c$ $\textit{calls}$, the symbols from $\Sigma_r$ \textit{returns} and the symbols from $\Sigma_{\text{int}}$ \textit{internal actions}. 
A Visibly Pushdown automaton works on words over a pushdown
alphabet. 

\begin{definition}[see {\cite[Definition 1]{AlurM04}}]
	\label{definition:vpa}
	A (deterministic) Visibly Pushdown Automaton (VPA) over
        $\tilde{\Sigma}=(\Sigma_c,\Sigma_r, \Sigma_{\text{int}})$ is a tuple
        $\mathcal{A}=(Q,s,\Gamma, \delta, F)$, where $Q$ is a finite
        set of states, $s\in Q$ is the initial state, $\Gamma$ is a
        finite alphabet of stack symbols that contains the special
        bottom-of-stack symbol $\#$. Finally, $\delta$ consists of
        three functions for the three kinds of symbols:
        \begin{itemize}
        \item $\delta_c:  Q \times\Sigma_c \to Q \times (\Gamma \backslash
          \{\#\})$
        \item $\delta_r:  Q\times \Sigma_r \times \Gamma \to Q$
        \item $\delta_{\text{int}} \times
        \Sigma_{\text{int}}\to Q$
        \end{itemize}
      \end{definition}
      We refer to the three different kinds of transitions as push,
      pop and internal transitions, respectively. We denote the language decided by a VPA $\mathcal{A}$ as $L(\mathcal{A})$.

\begin{definition}[\phantom{}{\cite[Definition 2]{AlurM04}}]
	A language $L\subseteq \Sigma^*$ is a Visibly Pushdown Language (VPL) with respect to $\tilde{\Sigma}$ (or $\tilde{\Sigma}$-VPL) if there is a VPA $\mathcal{A}$ over $\tilde{\Sigma}$ such that $L(\mathcal{A})=L$.
\end{definition}
\TheoVPL*

Let $w$ be a string of length $n$ over some pushdown alphabet. For
each position $i$ of $w$, the
\emph{stack height} $s(i)$ of $\calA$ after reading $w[i]$ can be determined 
from the pattern of call and return symbols, as follows. Intitially,
the stack height is 1, that is, $s(0)=1$. 
\[
  s(i) =
  \begin{cases}
    s(i-1)+1 & \text{if $w[i]$ is a call symbol}\\
    s(i-1)-1 & \text{if $w[i]$ is a return symbol and $s(i-1)>1$}\\
    s(i) & \text{in all other cases}
  \end{cases}
\]
We say that positions $i<j$ in $w$ \emph{match}, if $w[i]$ is a call
symbol, $w[j]$ is a return symbol, $s(i)=s(j)$ and for all $k$ with
$i<k<j$, $s(k)>s(i)$ holds.
We note that VPAs can accept words that are not well-formed, that is,
words with unmatched call and retrurn symbols.

Let further the functions
\hleft and \hright be defined as follows, for $i\le n$ and $d\le
n$. They will help the algorithm to jump from a position to the next
position to its left or right, where the stack has $k$ symbols less.
\begin{itemize}
\item If $w[i]$ is a return symbol, then $\hleft(i,d)$ is $j+1$, where
  $j$ is
  the largest position $j<i$ such that
  $s(j)+d=s(i)$. In particular, $\hleft(i,0)$  is the position of the matching call
  symbol for $w[i]$. 
\item If $w[i]$ is a call symbol, then $\hright(i,d)$ is the smallest position $j>i$ such that
  $s(j)+d=s(i)$. We note that $\hright(i,1)$  is the position of the matching return symbol
  for $w[i]$. 
\end{itemize}
Both functions are undefined, if such a position does not exist, or
the precondition does not hold. We
call two positions $i$ and $j$ \emph{matching}, if
$\hright(i,0)=j$ (in which case, also $\hleft(j,0)=i$ holds)

  \begin{proofsketch}
    The algorithm is very similar to the algorithm used for
    \autoref{theo:dcfl}. In fact, it can be observed, that that
    algorithm requires only work  $\bigO(n^{2+\epsilon})$ to maintain
    $\dall$ and $\pushpos$. The only function that requires more work
    is $\poppos$, since the number of tuples to update after a change
    is only bounded by $\bigO(n^{3+\theta})$. The algorithm deals with
    the length bound $w$ in an analogous way as the algorithm for RDCFLs.

    The algorithm for VPLs uses a modified version of \poppos, for
    which it suffices to store $n^{2+2\theta}$ tuples. It is based on
    the following observations.
    \begin{itemize}
    \item In a computation of a VPA  $\calA$ on a word $w$, the height of the stack after
      reading the symbol at
      position $i$ only depends on the pattern of call- and
      return-symbols in $w[1,i]$, but not on the actual
      symbols. Therefore, the strings $y_3$ and $y_2$ in the proof of
      \autoref{theo:dcfl} can be determined from the current word, the
      change position $\ell$ and the types of the old and new symbol
      at $\ell$, without knowledge of any  states  of $\calA$'s run
     on $u_2$.
    \item Likewise, the strings $x_3$ and $x_2$ can be determined
      without knowledge about the states of $\calA$ on $u_1$. 
    \end{itemize}
 Therefore, \popposi need not be stored at all and instead of
 \popposii the algorithm stores values for  the
 function \vpopstate, which is defined as
 \[
   \vpopstate(p,j,\tau,q,m,k) = \dstate(q,w[m,\hright(m,k)],\dstack(p,w[\hleft(j,k),j],\tau)).
 \]
 That is, $\vpopstate(p,j,\tau,q,m,k)$ yields the state that
      $\calA$ reaches in its computation on
            $w[m,\hright(m,k)]$ starting from state $q$ with the stack
            produced by the run of $\calA$ from state $p$ and stack
            $\tau$ on the minimal suffix of $w[j]$ on
            which $\calA$ produces a stack of height $k$.\tsm{Fine-tuning:
        $k$ vs.\ $k+1$ or $k-1$.}
      
      The algorithm stores $\dall$, for all possible
      values, and  $\pushpos$ and 
      $\vpopstate$ for all possible parameter values besides $k$, for
      which only values of the form
$an^{b\theta}$, for integers $b<\frac{1}{\theta}$ and $a\le n^\theta$,
for some fixed $\theta>0$, are considered, similarly as in the proof of
\autoref{theo:dcfl}. Furthermore, it maintains the functions $s$,
$\hleft$ and $\hright$.

It is straighforward to update the function $s$ after a change at
position $\ell$ with $\bigO(n)$ work
per change operation: for each position $i\ge \ell$, $s$ is changed by
at most two, depending on the type of old and new symbol at $\ell$.

Functions, $\hleft$ and $\hright$ can be updated similarly, with work
at most $\bigO(n^2)$. For instance, if a call symbol is replaced by a
return symbol at position $\ell$, reducing $s(\ell)$ by 2, then
\[
  \hleft'(j,k)=
  \begin{cases}
    \hleft(j,k+2) & \text{if $j\ge \ell$ and $\hleft(j,k)\le \ell$,}\\
    \hleft(j,k) & \text{otherwise.}
  \end{cases}
  \]
Here, $\hleft'$ and $\hright'$ denote the updated versions of the
functions.

We do not present the other update algorithms explicitly but rather explain how  they
can be inferred from the
dynamic algorithm for RDCFLs by showing the follows.
\begin{enumerate}[(1)]
\item Values for $\popposii$ can be computed from (not necessarily
  special) values for $\vpopstate$ in constant sequential time.
  \item Values for $\vpopstate$ can be computed from (not necessarily
  special) values for $\popposii$ in constant sequential time.
\item Arbitrary values for $\vpopstate$ can be computed from special
  values for $\vpopstate$.
\end{enumerate}

Therefore, whenever the algorithm for RDCFLs uses $\popposii$, this
use can be replaced by using special values for $\vpopstate$. On the
other hand, updates for $\vpopstate$ can be done by computing suitable
updates values for $\popposii$.

Towards (1), let $C=(p,u,\tau)$,
$u=w[i,j]$ and $v=w[m,n]$, $q$ and $k$ be given. 
Then $\popposii(C,q,v,k)=\vpopstate(p_0,j,\tau_0,q,m,k)$, where
$p_0=\dstate(p,w[i,j_0],\tau)$, $\tau_0=
\topk[1](\dstack(p,w[i,j_0],\tau))$ and $j_0=\hleft(j,k)$. 

Towards (2), $\vpopstate(p,j,\tau,q,m,k)$ is just $\popposii(C,q,
w[m,n],k)$, where $C=(p,\hleft(j,k),k)$.

Finally, we show show (3), how $\vpopstate(p,j,\tau,q,m,k)$ for arbitrary
parameters can be computed from the stored values for $\vpopstate$ in
constant sequential time. This is very similar to the computation of
$\pushpos$ for arbitrary tuples, as described in the proof of \autoref{theo:dcfl}.

 Let us assume again that
$g=\frac{1}{\theta}$ is an integer and let $a_0,\ldots,a_g$, $a_i\le n^\theta$, for
each $i$, be such
that $k=a_0+a_1n^\theta+\cdots a_gn^{g\theta}$ is the $n^\theta$-adic
representation of $k$, as in  the proof of \autoref{theo:dcfl}.

We inductively define sequences $j_0,\ldots,j_g$ and $m_0\ldots,m_g$
of positions, sequences $p_0,\ldots,p_g$ and $r_0,\ldots,r_g$
of states, and a sequence $\tau_0,\ldots,\tau_g$ of stack symbols as
follows.
\begin{itemize}
\item $j_g=j$, $j_{i-1}=\hleft(j_i-1,a_in^{i\theta})$, for each $i\ge 1$;
\item $p_0=p$, $\tau_0=\tau$, 
  $p_{i+1}=\dstate(p_i,w[j_i,j_{i+1}-1],\tau_i)$, $\tau_{i+1}=
  \topk[1](\dstack(p_i,w[j_i,j_{i+1}-1],\tau_i))$, for each $i\ge 0$;
\item $m_g=m$, $m_{i-1}=\hright(m_i,a_in^{i\theta})$, for each $i\ge 1$;
\item $r_g=q$, $r_{i-1}=\vpopstate(p_i,j_i,\tau_i,r_i,m_i,
  a_in^{i\theta})$, for each $i\ge 1$.
\end{itemize}

Therefore, the required work to update a function value of
$\vpopstate$ for one tuple is \ntheta[2] as for RDCFLs. Since there
are \ntheta[2+] tuples, the overall work is \ntheta[2+3] and the
theorem again follows by choosing $\theta=\epsilon$.
Insertions of symbols are handled as for RDCFLs. 
  \end{proofsketch}
 
\end{document}